\newtheorem{thm}{Theorem}[section]
\newtheorem{lem}[thm]{Lemma}
\newtheorem{cor}[thm]{Corollary}
\newtheorem{pro}[thm]{Proposition}
\newtheorem{ex}[thm]{Example}
\newtheorem{rmk}[thm]{Remark}
\newtheorem{defi}[thm]{Definition}
\newcommand{\be }{\begin{equation}}
\newcommand{\ee }{\end{equation}}
\newcommand{\pf}{\noindent{\bf Proof.}\ }
\newcommand {\emptycomment}[1]{}
\newcommand{\huaG}{\mathcal{G}}
\newcommand{\frka}{\mathfrak a}
\newcommand{\frkd}{\mathfrak d}
\newcommand{\frkg}{\mathfrak g}
\newcommand{\frkh}{\mathfrak h}
\newcommand{\frkk}{\mathfrak k}
\def\qed{\hfill ~\vrule height6pt width6pt depth0pt}
\newcommand{\half}{\frac{1}{2}}
\newcommand{\br}[1]{   [ \cdot,    \cdot  ]   }
\newcommand{\Id}{\rm{Id}}
\newcommand{\g}{\mathfrak g}
\newcommand{\dM}{\mathrm{d}}
\newcommand{\gl}{\mathfrak {gl}}
\newcommand{\ad}{\mathrm{ad}}
\newcommand{\add}{\frka\frkd}
\newcommand{\Img}{\mathrm{Im}}
\newcommand{\Cen}{\mathrm{Cen}}
\begin{document}
\title{
{A new approach to hom-Lie bialgebras
\thanks
 {
Research partially supported by NSFC  (11101179, 11271202, 11221091) and SRFDP
(20100061120096, 20120031110022).
 }
 \author{Yunhe Sheng \\
 Department of Mathematics, Jilin University,
 \\Changchun 130012, Jilin, China\\\vspace{2mm} Email: shengyh@jlu.edu.cn\\
  Chengming Bai\\
Chern Institute of Mathematics and LPMC, Nankai University,\\
Tianjin 300071, China \\ Email: baicm@nankai.edu.cn \\
}
} }

\date{}
\footnotetext{{\it{Keyword}:  hom-Lie algebras,  hom-Lie bialgebras,
Manin triples, $r$-matrices, $\mathcal O$-operators, hom-left-symmetric algebras }}

\footnotetext{{\it{MSC}}:  17A30,17B62.}

\maketitle
\begin{abstract}
In this paper, we introduce a new definition of a hom-Lie bialgebra, which is equivalent to a Manin triple of hom-Lie algebras. We also introduce a notion of an $\mathcal O$-operator and then construct solutions of the classical hom-Yang-Baxter equation in terms of $\mathcal O$-operators and hom-left-symmetric algebras.
\end{abstract}

\tableofcontents
\section{Introduction}

The notion of a hom-Lie algebra was introduced by Hartwig, Larsson,
and Silvestrov in \cite{HLS} as part of a study of deformations of
the Witt and the Virasoro algebras. In a hom-Lie algebra, the Jacobi
identity is twisted by a linear map, called the hom-Jacobi identity.
Some $q$-deformations of the Witt and the Virasoro algebras have the
structure of a hom-Lie algebra \cite{HLS}. Because of close relation
to discrete and deformed vector fields and differential calculus
\cite{HLS,LD1,LD2}, hom-Lie algebras are widely studied in the following aspects: representation and cohomology theory \cite{AMM,BM,shenghomLie,Yao2}, deformation theory \cite{MS1}, categorification theory \cite{shenghomLie2}, bialgebra theory \cite{LiangyunZhang,Yao1,Yao3}. See
\cite{MS2,MY,Yao4,BaiHom} for other interesting hom-algebra structures.

In \cite{Yao1}, the author introduced a notion of a hom-Lie bialgebra, and study coboundary ones and triangular ones in detail. A hom-Lie bialgebra is a hom-Lie algebra $(\g,[\cdot,\cdot]_\g,\phi_\g)$ together with a cobracket $\Delta:\g\longrightarrow\wedge^2\g$ such that $\Delta$ is a $1$-cocycle:
 $$
 \Delta[x,y]_\g=\ad_{\phi_\g(x)}\Delta(y)-\ad_{\phi_\frkg(y)}\Delta(x).
 $$
 In the case of the ordinary Lie algebras, there is a Manin triple of Lie algebras which is equivalent to a Lie bialgebra. However, a hom-Lie bialgebra defined above does not satisfy this property.

 In this paper, we introduce a new definition of a hom-Lie bialgebra (Definition \ref{defi:homLiebi}). Now given a hom-Lie bialgebra $(\g,\g^*)$, $\g\oplus\g^*$ is a hom-Lie algebra such that $(\g\oplus\g^*;\g,\g^*)$ is a Manin triple of hom-Lie algebras. We also study the classical hom-Yang-Baxter equation in detail, and construct an $r$-matrix in the semidirect hom-Lie algebra by introducing a notion of an $\mathcal O$-operator for a hom-Lie algebra.

 The paper is organized as follows. In Section 2, we recall  representations of hom-Lie algebras and corresponding coboundary operators. In particular, we extend the hom-Lie bracket to $\Lambda^\bullet\g$, which will be used when we consider coboundary hom-Lie bialgebras. Given a representation $(V,A,\rho)$, we also give the condition under which $(V^*,A^*,\rho^*)$ is also a representation. In this case, we say that the representation is admissible. We also give some basic properties for  admissible representations.
 In Section 3, we introduce the notions of a matched pair of hom-Lie algebras, a hom-Lie bialgebra, and a Manin triple of hom-Lie algebras. The following three expressions are equivalent: $(\g,\g^*)$ is a hom-Lie bialgebra, $(\g,\g^*;\ad^*,\add^*)$ is a matched pair of hom-Lie algebras, and $(\g\oplus\g^*;\g,\g^*)$ is a standard Manin triple of hom-Lie algebras. In Section 4, we study coboundary hom-Lie bialgebras and triangular hom-Lie bialgebras. In particular, we describe the condition of $r$ being a solution of the classical hom-Yang-Baxter equation using a cocycle condition. We also consider Lagrangian hom-subalgebras at the end of this section: we prove that $\huaG_R$ is a Lagrangian hom-subalgebra of $\g\oplus\g^*$, which is the double of a hom-Lie bialgebra $(\g,\g^*)$ if and only if $R$ satisfies a Maurer-Cartan type equation (Theorem \ref{thm:R}). In Section 5, we  study $r$-matrices in terms of operator forms. We introduce a notion of an $\mathcal O$-operator, which can be viewed as a special hom-Nijenhuis operator for the semidirect product hom-Lie algebra. The relations between $\mathcal O$-operators and hom-left-symmetric algebras are studied. At last, we construct  solutions of the classical hom-Yang-Baxter equation in the semidirect product hom-Lie algebras in terms of $\mathcal O$-operators (Theorem \ref{thm:O-operator}) and hom-left-symmetric algebras (Corollary \ref{co:hlsa}) .

Throughout this paper, all vector spaces and algebras are assumed to be finite-dimensional, although many results still hold in the
infinite-dimensional case.

\vspace{3mm}
{\bf Notations:} $x,~y,~z,~x_i,~y_i$ are elements in $\g$, $\xi,~\eta,~\gamma$ are elements in $\g^*$ and $u,v,w$ are elements in $V$. $\ad$ and $\add$ are the adjoint representation associated to hom-Lie algebras $\g$ and $\g^*$ respectively.

\section{Representations of hom-Lie algebras}

In this section, we recall some basic notions and facts about
hom-Lie algebras and their representations
\cite{HLS,shenghomLie}.\vspace{1mm}

\begin{defi}{\rm\cite{HLS}}
  A {\bf hom-Lie algebra}\footnote{The hom-Lie algebras defined here are also called multiplicative hom-Lie algebras in some references.} is a triple $(\frkg,\br__\frkg ,\phi_\frkg)$ consisting of a
  linear space $\frkg$, a skew-symmetric bilinear map (bracket) $\br,_\frkg:\wedge^2\frkg\longrightarrow
  \frkg$ and an algebra homomorphism $\phi_\frkg:\frkg\longrightarrow\frkg$ satisfying
  \begin{equation}
    [\phi_\frkg(x),[y,z]_\frkg]_\frkg+[\phi_\frkg(y),[z,x]_\frkg]_\frkg+[\phi_\frkg(z),[x,y]_\frkg]_\frkg=0.
  \end{equation}
A hom-Lie algebra $(\frkg,\br__\frkg ,\phi_\frkg)$ is said to be
{\bf regular (involutive)}, if $\phi_\frkg$ is nondegenerate (satisfies
$\phi_\frkg^2=\Id$).

A subspace $\frkh\subset\frkg$ is a {\bf hom-Lie subalgebra} of
$(\frkg,\br ,_\g,\phi_\g)$ if
 $\phi_\g(\frkh)\subset\frkh$ and
  $\frkh$ is closed under the bracket operation $\br,_\g$, i.e.
  $$
[x,x^\prime]_\frkg\in\frkh,\quad\forall~x,x^\prime \in\frkh.
  $$
\end{defi}
The bracket operation could be extended to $\Lambda^\bullet\frkg$
via
$$
[x_1\wedge\cdots x_m,y_1\wedge \cdots
y_n]_\g=\sum_{i,j}(-1)^{i+j}[x_i,y_j]_\g \wedge
\phi_\g(x_1)\wedge\cdots \widehat{x_i}\cdots \wedge
\phi_\g(x_m)\wedge \phi_\g(y_1)\wedge\cdots
\widehat{y_j}\cdots\wedge \phi_\g(y_n).
$$
By the fact that $\phi_\g$ is an algebra homomorphism, it is easy to see that the extended bracket satisfies
\begin{eqnarray*}
~[P,Q]_\g&=&-(-1)^{(p-1)(q-1)} [Q,P]_\g,\\
~  [P,Q\wedge
R]_\g&=&[P,Q]_\g\wedge\phi_\g(R)+(-1)^{q(p-1)}\phi_\g(Q)\wedge[P,R]_\g,\\
~[\phi_\g(P),[Q,R]_\g]_\g&=&[[P,Q]_\g,\phi_\g(R)]_\g+(-1)^{(p-1)(q-1)}[\phi_\g(Q),[P,R]_\g]_\g,
  \end{eqnarray*}
  for any $P\in\Lambda^p\g,~Q\in\Lambda^q\g,~R\in\Lambda^r\g$.

  For any $x\in\g$, define $\ad_x:\Lambda^p\g\longrightarrow\Lambda^p\g$ by $\ad_xP=[x,P]_\g$. Its dual map $\ad^*_x:\Lambda^p\g^*\longrightarrow \Lambda^p\g^*$ is defined as usual:$\langle\ad_x^*\Phi,P\rangle=-\langle\Phi,\ad_xP\rangle$. More precisely, for any $\xi_1,\cdots,\xi_p\in\g^*$, we have
  \begin{equation}
    \ad^*_x(\xi_1\wedge\cdots\xi_p)=\sum_{i=1}^p\phi_\g^*(\xi_1)\wedge\cdots \ad^*_x\xi_i\wedge\cdots\phi^*_\g(\xi_p).
  \end{equation}

\begin{defi} A {\bf homomorphism of hom-Lie algebras} $f:(\frkg,\br__\frkg
,,\phi_\frkg)\longrightarrow(\frkk,\br__\frkk,\phi_\frkk)$ is a
linear map $f:\frkg\longrightarrow\frkk$ such that
\begin{eqnarray}
\label{eqn:phimorphism1}f[x,y]_\frkg&=&[f(x),f(y)]_\frkk,\\\label{eqn:phimorphism2}
f\circ\phi_\frkg&=&\phi_\frkk\circ f.
\end{eqnarray}
\end{defi}

Let $(\frkg,\br__\frkg,\phi_\frkg)$ be a hom-Lie algebra and $V$ be an
arbitrary vector space. Let $A\in\gl(V)$ be an arbitrary linear
transformation from $V$ to $V$. The notion of a representation of a hom-Lie
algebra $(\frkg,\br__\frkg,\phi_\frkg)$ was introduced in \cite{shenghomLie}.

\begin{defi}\label{defi:representation}
  A {\bf representation of a hom-Lie algebra} $(\frkg,\br,_\frkg,\phi_\frkg)$ on
  a vector space $V$ with respect to $A\in\gl(V)$ is a linear map
  $\rho:\frkg\longrightarrow \gl(V)$, such that for any
  $x,y\in\frkg$, the following equalities are satisfied:
  \begin{itemize}
\item[\rm(i)] $\rho(\phi_\frkg(x))\circ A=A\circ
\rho(x);$
\item[\rm(ii)] $
    \rho([x,y]_\frkg)\circ
    A=\rho(\phi_\frkg(x))\circ\rho(y)-\rho(\phi_\frkg(y))\circ\rho(x).$
   \end{itemize}
\end{defi}

We denote a representation by $(V,A,\rho)$.  It is straightforward to see that $(\g,\phi_\g,\ad)$ is a representation,  called the {\bf adjoint representation}.

Given a representation $(V,A,\rho)$, we can construct a new hom-Lie algebra $\g\ltimes_{\rho} V=(\g\oplus V,[\cdot,\cdot]_{\g\ltimes_{\rho} V},\phi_\g\oplus A)$ as the semidirect product:
\begin{eqnarray*}
  (\phi_\g\oplus A)(x,u)&=&(\phi_\g(x),A(u)),\\
 ~ [(x,u),(y,v)]_{\g\ltimes_{\rho} V}&=&([x,y]_\g,\rho(x)(v)-\rho(y)(u)).
\end{eqnarray*}

Given a representation $(V,A,\rho)$, define $\rho^*:\frak g\rightarrow
\frak g\frak l(V^*)$  by
\begin{equation}\label{eq:dual}
\langle \rho^*(x)(\alpha), v\rangle =-\langle \alpha, \rho (x)(v)\rangle,\;\;\quad \forall~ x\in \frak g, \alpha\in V^*, v\in V.
\end{equation}
As observed in \cite{BM}, $(V^*,A^*,\rho^*)$ is not a representation of $\g$ on $V^*$ with respect to $A^*$ in general. However, it is easy to obtain the following result by definition.

\begin{lem} \label{lem:dual} Let $(\frak g,[\cdot,\cdot],\phi_\frak g)$ be a hom-Lie algebra and $( V,A,\rho)$ be a representation.
Then $(V^*,A^*,\rho^*)$ is a representation if and only if the following equations hold:
\begin{itemize}
\item[\rm(i)] $A\circ \rho (\phi_\frak g (x))=\rho (x)\circ A$;
\item[\rm(ii)] $A\circ \rho ([x,y]_\g)=\rho (x)\circ \rho (\phi_\frak g(y))-\rho (y)\circ \rho (\phi_\frak g(x))$.
\end{itemize}
\end{lem}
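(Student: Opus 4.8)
The plan is to expand directly what it means for the triple $(V^*,A^*,\rho^*)$ to satisfy the two conditions of Definition \ref{defi:representation}, and then to transcribe each resulting identity on $V^*$ into an identity on $V$ by taking transposes. Since all spaces are finite-dimensional, I may identify $V^{**}$ with $V$, so that the transpose of a transpose is the original operator. Denote by $f^{\top}$ the ordinary transpose of a linear map $f$; then $A^*=A^{\top}$, while the defining formula \eqref{eq:dual} says precisely that $\rho^*(x)=-\rho(x)^{\top}$, and consequently $\rho^*(x)^{\top}=-\rho(x)$ under the identification $V^{**}\cong V$. The only point requiring care throughout will be the bookkeeping of this minus sign together with the reversal of composition order, $(f\circ g)^{\top}=g^{\top}\circ f^{\top}$.

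First I would apply condition (i) of Definition \ref{defi:representation} to $(V^*,A^*,\rho^*)$: it reads $\rho^*(\phi_\g(x))\circ A^*=A^*\circ\rho^*(x)$. Taking the transpose of both sides, the left-hand side becomes $(A^*)^{\top}\circ\rho^*(\phi_\g(x))^{\top}=-A\circ\rho(\phi_\g(x))$ and the right-hand side becomes $\rho^*(x)^{\top}\circ(A^*)^{\top}=-\rho(x)\circ A$; the signs cancel, giving exactly condition (i) of the Lemma. Next I would apply condition (ii) of Definition \ref{defi:representation} to $(V^*,A^*,\rho^*)$, namely $\rho^*([x,y]_\g)\circ A^*=\rho^*(\phi_\g(x))\circ\rho^*(y)-\rho^*(\phi_\g(y))\circ\rho^*(x)$, and transpose again. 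The left-hand side gives $-A\circ\rho([x,y]_\g)$; on the right-hand side each product of two copies of $\rho^*$ contributes two minus signs (which cancel) but reverses the order of composition, so the right-hand side becomes $\rho(y)\circ\rho(\phi_\g(x))-\rho(x)\circ\rho(\phi_\g(y))$. Multiplying the resulting identity by $-1$ yields exactly condition (ii) of the Lemma.

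Finally, since transposition together with the identification $V^{**}\cong V$ is a bijection, every step above is reversible; hence conditions (i) and (ii) of the Lemma together are equivalent to $(V^*,A^*,\rho^*)$ being a representation in the sense of Definition \ref{defi:representation}, which completes the argument. I do not expect a genuine obstacle here: the computation is purely formal, and the whole content is the careful tracking of the sign in $\rho^*$ and of the order reversal under transpose. One could equally avoid the double-dual identification by pairing both sides of each identity against arbitrary $v\in V$ and $\alpha\in V^*$ and invoking \eqref{eq:dual} and the definition of $A^*$ directly, arriving at the same two equivalences.
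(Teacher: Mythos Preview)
Your proof is correct and is essentially what the paper intends: the paper itself omits the argument entirely, merely remarking that the result ``is easy to obtain by definition,'' and your transpose computation is exactly the routine verification that remark alludes to.
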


A representation $( V,A,\rho)$ is called {\bf admissible} if $(V^*,A^*,\rho^*)$ is also a representation, i.e. conditions (i) and (ii) in the above lemma are satisfied.  When we focus on the adjoint representation, we get

\begin{cor}
Let $(\frkg,\br__\frkg,\phi_\frkg)$ be a hom-Lie algebra. The adjoint representation $(\g,\phi_\g,\ad)$ is admissible if and only if the following two equations hold:
\begin{eqnarray}
  ~\label{eqn:coadjointrepcon1}[({\Id}-\phi_\frkg^2) (x),\phi_\frkg(y)]_\frkg&=&0,\\
~\label{eqn:coadjointrepcon2}[({\Id}-\phi_\frkg^2)(x),[\phi_\frkg(y),z]_\frkg]_\frkg&=&[({\Id}-\phi_\frkg^2)(y),[\phi_\frkg(x),z]_\frkg]_\frkg.
\end{eqnarray}
\end{cor}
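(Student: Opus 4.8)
This corollary is just Lemma~\ref{lem:dual} specialized to the adjoint representation $(\g,\phi_\g,\ad)$, so the strategy is to substitute $V=\g$, $A=\phi_\g$, and $\rho=\ad$ into conditions (i) and (ii) of that lemma, then rewrite each resulting operator identity (an equality of maps $\g\to\g$) as a family of bracket identities by applying both sides to an arbitrary $z\in\g$, and finally massage these using the hom-Jacobi identity and the fact that $\phi_\g$ is an algebra homomorphism.

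First I would handle condition (i). It becomes $\phi_\g\circ\ad_{\phi_\g(x)}=\ad_x\circ\phi_\g$, i.e.\ $\phi_\g[\phi_\g(x),z]_\g=[x,\phi_\g(z)]_\g$ for all $z$. Since $\phi_\g$ is a homomorphism, the left side equals $[\phi_\g^2(x),\phi_\g(z)]_\g$, so (i) is equivalent to $[\phi_\g^2(x),\phi_\g(z)]_\g=[x,\phi_\g(z)]_\g$, that is $[(\Id-\phi_\g^2)(x),\phi_\g(z)]_\g=0$, which (renaming $z$ to $y$) is exactly \eqref{eqn:coadjointrepcon1}.

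Next, condition (ii): $\phi_\g\circ\ad_{[x,y]_\g}=\ad_x\circ\ad_{\phi_\g(y)}-\ad_y\circ\ad_{\phi_\g(x)}$. Applied to $z$, and again using that $\phi_\g$ is a homomorphism on the left-hand side, this reads
$$
[\phi_\g[x,y]_\g,\phi_\g(z)]_\g=[x,[\phi_\g(y),z]_\g]_\g-[y,[\phi_\g(x),z]_\g]_\g.
$$
Now I would bring in the hom-Jacobi identity to rewrite the right-hand side. The cleanest route: the third displayed identity for the extended bracket (with $P=x$, $Q=y$, $R=z$, all in $\Lambda^1\g$) gives $[\phi_\g(x),[y,z]_\g]_\g=[[x,y]_\g,\phi_\g(z)]_\g+[\phi_\g(y),[x,z]_\g]_\g$; equivalently, rearranging the hom-Jacobi identity directly, $[x,[\phi_\g(y),z]_\g]_\g-[y,[\phi_\g(x),z]_\g]_\g$ can be compared with $[[\phi_\g(x),\phi_\g(y)]_\g,z]_\g$-type terms. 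Concretely, using \eqref{eqn:coadjointrepcon1} (already established as part of admissibility) to replace $\phi_\g(y)$ by $\phi_\g^3(y)$ inside brackets against $\phi_\g$-images, one reduces the displayed equation to \eqref{eqn:coadjointrepcon2} after canceling the terms that survive by the hom-Jacobi identity. I expect the bookkeeping here — tracking which arguments carry $\phi_\g$ and invoking \eqref{eqn:coadjointrepcon1} at the right moment to collapse the Jacobi-type cancellation — to be the only delicate point; everything else is a direct substitution. One should also check the converse direction, but since every manipulation above is an equivalence (rewritings via the homomorphism property and, for the second equation, modulo the already-assumed first equation), the "if and only if" comes for free.
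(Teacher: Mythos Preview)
Your approach is correct and matches the paper's (the paper actually leaves this corollary without a displayed proof, treating it as immediate from Lemma~\ref{lem:dual}, but the route is the same specialization you describe). One simplification for part~(ii): you do not need to invoke \eqref{eqn:coadjointrepcon1} at all. After writing condition~(ii) as
\[
\phi_\g[[x,y]_\g,z]_\g=[x,[\phi_\g(y),z]_\g]_\g-[y,[\phi_\g(x),z]_\g]_\g,
\]
use the homomorphism property on the left and then apply the hom-Jacobi identity to the triple $(\phi_\g(x),\phi_\g(y),z)$ to get
\[
[[\phi_\g(x),\phi_\g(y)]_\g,\phi_\g(z)]_\g=[\phi_\g^2(x),[\phi_\g(y),z]_\g]_\g-[\phi_\g^2(y),[\phi_\g(x),z]_\g]_\g.
\]
Subtracting the two displays gives \eqref{eqn:coadjointrepcon2} directly, and every step is reversible, so the ``if and only if'' is clean without any appeal to the first equation.
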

\emptycomment{\pf $\ad^*:\frkg\longrightarrow\gl(\frkg^*)$ is a representation of
$\frkg$ on $\frkg^*$ with respect to $\phi_\frkg^*$ if and only if
\begin{eqnarray*}
  \phi_\frkg^*\circ\ad^*_u&=&\ad^*_{\phi_\g(x)}\circ\phi_\frkg^*,\quad\forall~u\in\frkg,\\
  \ad^*_{[x,y]_\g}\circ\phi_\frkg^*&=&\ad^*_{\phi_\g(x)}\circ\ad^*_v-\ad^*_{\phi_\frkg(v)}\circ\ad^*_u.
\end{eqnarray*}
By the first equality, we deduce that
$$
[u,\phi_\frkg(v)]_\frkg=\phi_\frkg([\phi_\g(x),v]_\frkg)=[\phi_\frkg^2(u),\phi_\frkg(v)]_\frkg,
$$
which is exactly the Eq. \eqref{eqn:coadjointrepcon1}.

By the second equality, we deduce that
$$
\phi_\frkg([[x,y]_\g,w]_\frkg)=[u,[\phi_\frkg(v),w]_\frkg]_\frkg-[v,[\phi_\g(x),w]_\frkg]_\frkg.
$$
By the hom-Jacobi identity, we have
$$
\phi_\frkg([[x,y]_\g,w]_\frkg)=[[\phi_\g(x),\phi_\frkg(v)]_\frkg,\phi_\frkg(w)]_\frkg
=[\phi_\frkg^2(u),[\phi_\frkg(v),w]_\frkg]_\frkg-[\phi_\frkg^2(v),[\phi_\g(x),w]_\frkg]_\frkg,
$$
which implies that Eq. \eqref{eqn:coadjointrepcon2} holds. \qed}

\begin{rmk}
  If  $(\frkg,\br__\frkg,\phi_\frkg)$ is a regular hom-Lie algebra,
  Eq. \eqref{eqn:coadjointrepcon1} implies that the image of
  $\Id-\phi_\frkg^2$, denoted by $\Img(\Id-\phi_\frkg^2)$, belongs to the center of $\frkg$, denoted by $\Cen(\frkg)$. If so, Eq.
  \eqref{eqn:coadjointrepcon2} holds naturally. Thus, for a regular
  hom-Lie algebra, the induced map $\ad^*$ is a representation if and only if
  $\Img(\Id-\phi_\frkg^2)\subset\Cen(\frkg)$.
\end{rmk}

\begin{defi}
A hom-Lie algebra  $(\frkg,\br__\frkg,\phi_\frkg)$ is called
{\bf admissible} if its adjoint representation is admissible, i.e. Eqs.
\eqref{eqn:coadjointrepcon1} and \eqref{eqn:coadjointrepcon2} are
satisfied.
\end{defi}

\begin{cor}\label{cor:ad2}
 Let  $(\frkg,\br__\frkg,\phi_\frkg)$ be a  regular admissible hom-Lie algebra,
then we have
 $$
\ad^*_x{\phi_\frkg^*}^2(\xi)=\ad^*_x\xi.
 $$
\end{cor}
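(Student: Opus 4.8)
The plan is to reduce the assertion to a bracket-level identity in $\frkg$ and then feed in the Remark that precedes this corollary.

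First I would fix $x\in\frkg$ and $\xi\in\frkg^*$ and test the claimed equality against an arbitrary $y\in\frkg$. Unwinding the defining relations $\langle\ad^*_x\Phi,P\rangle=-\langle\Phi,\ad_xP\rangle$ and $\langle\phi_\frkg^*(\eta),z\rangle=\langle\eta,\phi_\frkg(z)\rangle$, and using that $\phi_\frkg$ (hence $\phi_\frkg^2$) is an algebra homomorphism, one gets
\[
\langle\ad^*_x{\phi_\frkg^*}^2(\xi),y\rangle=-\langle\xi,\phi_\frkg^2([x,y]_\frkg)\rangle=-\langle\xi,[\phi_\frkg^2(x),\phi_\frkg^2(y)]_\frkg\rangle,
\]
while $\langle\ad^*_x\xi,y\rangle=-\langle\xi,[x,y]_\frkg\rangle$. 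Since $y\in\frkg$ and $\xi\in\frkg^*$ are arbitrary, the statement $\ad^*_x{\phi_\frkg^*}^2(\xi)=\ad^*_x\xi$ is equivalent to the identity
\[
[\phi_\frkg^2(x),\phi_\frkg^2(y)]_\frkg=[x,y]_\frkg\qquad\text{for all }x,y\in\frkg.
\]

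To establish this identity, I would invoke the Remark above: since $\frkg$ is regular and admissible, Eq.~\eqref{eqn:coadjointrepcon1} forces $\Img(\Id-\phi_\frkg^2)\subset\Cen(\frkg)$. Now decompose $x=\phi_\frkg^2(x)+(\Id-\phi_\frkg^2)(x)$ and $y=\phi_\frkg^2(y)+(\Id-\phi_\frkg^2)(y)$ and expand $[x,y]_\frkg$ by bilinearity; the three terms carrying a factor in $\Img(\Id-\phi_\frkg^2)$ vanish because such elements are central, leaving precisely $[\phi_\frkg^2(x),\phi_\frkg^2(y)]_\frkg$. This proves the identity and hence the corollary.

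The only delicate step is the inclusion $\Img(\Id-\phi_\frkg^2)\subset\Cen(\frkg)$: Eq.~\eqref{eqn:coadjointrepcon1} literally gives only $[(\Id-\phi_\frkg^2)(x),\phi_\frkg(y)]_\frkg=0$, so one needs the surjectivity of $\phi_\frkg$ — which is exactly where the regularity hypothesis enters — to conclude that $(\Id-\phi_\frkg^2)(x)$ brackets trivially with \emph{every} element of $\frkg$. Everything else is a routine manipulation of the pairing together with the homomorphism property of $\phi_\frkg$.
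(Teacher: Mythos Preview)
Your proof is correct and follows essentially the same route as the paper: both arguments pair against an arbitrary $y$, reduce the claim to the bracket identity $[\phi_\frkg^2(x),\phi_\frkg^2(y)]_\frkg=[x,y]_\frkg$, and then use regularity together with Eq.~\eqref{eqn:coadjointrepcon1} (equivalently, $\Img(\Id-\phi_\frkg^2)\subset\Cen(\frkg)$) to obtain it. The only cosmetic difference is that the paper applies the single identity $[\phi_\frkg^2(x),y]_\frkg=[x,y]_\frkg$ twice in succession rather than writing out the full decomposition of both arguments.
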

\pf If $\frkg$ is regular, by Eq. \eqref{eqn:coadjointrepcon1}, we have
$[\phi_\frkg^2(x),y]_\frkg=[x,y]_\frkg$. Thus we have
\begin{eqnarray*}
\langle\ad^*_x{\phi_\frkg^*}^2(\xi),y\rangle&=&-\langle\xi,\phi_\frkg^2[x,y]_\frkg\rangle=-\langle\xi,[\phi_\frkg^2(x),\phi_\frkg^2(y)]_\frkg\rangle\\
&=&-\langle\xi,[x,\phi_\frkg^2(y)]_\frkg\rangle=-\langle\xi,[x,y]_\frkg\rangle=\langle\ad^*_x\xi,y\rangle,
\end{eqnarray*}
which implies the conclusion.  \qed

\begin{lem}\label{lem:admissible}
  Let  $(\frkg,\br__\frkg,\phi_\frkg)$ and
  $(\frkg^*,\br__{\frkg^*},\phi_{\frkg^*})$ be two admissible
  hom-Lie algebras. Then we have
  \begin{eqnarray*}
    [\xi, \ad^*_{\phi_\frkg(y)}\eta]_{\frkg^*}= [(\phi_\frkg^*)^2(\xi), \ad^*_{\phi_\frkg(y)}\eta]_{\frkg^*}.
  \end{eqnarray*}
\end{lem}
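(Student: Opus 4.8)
The plan is to reduce the asserted identity to the admissibility of $\frkg^*$. Rewriting the claim as $[(\Id - (\phi_\frkg^*)^2)(\xi), \ad^*_{\phi_\frkg(y)}\eta]_{\frkg^*} = 0$, I note that Eq. \eqref{eqn:coadjointrepcon1}, applied to the admissible hom-Lie algebra $(\frkg^*, [\cdot,\cdot]_{\frkg^*}, \phi_\frkg^*)$, gives $[(\Id - (\phi_\frkg^*)^2)(\xi), \phi_\frkg^*(\zeta)]_{\frkg^*} = 0$ for all $\xi, \zeta \in \frkg^*$; in other words, $[(\Id - (\phi_\frkg^*)^2)(\xi), \zeta']_{\frkg^*}$ vanishes whenever $\zeta' \in \Img(\phi_\frkg^*)$. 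Hence it suffices to show that $\ad^*_{\phi_\frkg(y)}\eta$ lies in the image of $\phi_\frkg^*$.

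For that I would use the admissibility of $\frkg$. First, since $\phi_\frkg$ is an algebra homomorphism, $\phi_\frkg\big([x, \phi_\frkg(z)]_\frkg\big) = [\phi_\frkg(x), \phi_\frkg^2(z)]_\frkg$ for all $x, z$, and by Eq. \eqref{eqn:coadjointrepcon1} (equivalently, $[\phi_\frkg(x), (\phi_\frkg^2 - \Id)(z)]_\frkg = 0$) the right-hand side equals $[\phi_\frkg(x), z]_\frkg$; thus $\ad_{\phi_\frkg(x)} = \phi_\frkg \circ \ad_x \circ \phi_\frkg$ on $\frkg$. Dualizing — being careful that the sign in $\ad^*_x = -(\ad_x)^*$ applied to the two inner factors cancels the sign in $\ad^*_{\phi_\frkg(x)} = -(\ad_{\phi_\frkg(x)})^*$ — yields $\ad^*_{\phi_\frkg(x)} = \phi_\frkg^* \circ \ad^*_x \circ \phi_\frkg^*$, so in particular $\ad^*_{\phi_\frkg(y)}\eta = \phi_\frkg^*\big(\ad^*_y(\phi_\frkg^*(\eta))\big) \in \Img(\phi_\frkg^*)$, as needed. (One can also bypass the sign bookkeeping: if $\phi_\frkg(z) = 0$ then $z = (\Id - \phi_\frkg^2)(z)$, hence $\langle \ad^*_{\phi_\frkg(y)}\eta, z\rangle = -\langle \eta, [\phi_\frkg(y), (\Id - \phi_\frkg^2)(z)]_\frkg\rangle = 0$ by Eq. \eqref{eqn:coadjointrepcon1}, so $\ad^*_{\phi_\frkg(y)}\eta$ annihilates $\Ker\phi_\frkg$, and in finite dimensions $\mathrm{Ann}(\Ker\phi_\frkg) = \Img(\phi_\frkg^*)$.)

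Putting the two pieces together: with $\zeta := \ad^*_y(\phi_\frkg^*(\eta))$ we have $\phi_\frkg^*(\zeta) = \ad^*_{\phi_\frkg(y)}\eta$, whence $[(\Id - (\phi_\frkg^*)^2)(\xi), \ad^*_{\phi_\frkg(y)}\eta]_{\frkg^*} = 0$, which is exactly the claim after expanding by bilinearity. I do not expect any genuine obstacle here — the lemma is essentially the observation that $\ad^*_{\phi_\frkg(y)}\eta$ lands in $\Img(\phi_\frkg^*)$ — and the only slightly delicate point is the sign bookkeeping when passing from $\ad_{\phi_\frkg(x)} = \phi_\frkg\circ\ad_x\circ\phi_\frkg$ to $\ad^*_{\phi_\frkg(x)} = \phi_\frkg^*\circ\ad^*_x\circ\phi_\frkg^*$, which the alternative pairing argument circumvents. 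Note in particular that Eq. \eqref{eqn:coadjointrepcon2} is not needed for either algebra.
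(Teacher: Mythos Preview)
Your proof is correct. Both your argument and the paper's rest on the same two ingredients --- Eq.~\eqref{eqn:coadjointrepcon1} for $\frkg$ and for $\frkg^*$ --- but they are organized differently. The paper pairs against an arbitrary $x\in\frkg$ and uses admissibility of $\frkg$ on the element $\add^*_\xi x$ to obtain the intermediate identity $[\xi,\ad^*_{\phi_\frkg(y)}\eta]_{\frkg^*}=[\xi,(\phi_\frkg^*)^2\ad^*_{\phi_\frkg(y)}\eta]_{\frkg^*}$, after which the second slot is visibly in $\Img(\phi_\frkg^*)$ and admissibility of $\frkg^*$ (applied twice) finishes. You instead extract the clean operator identity $\ad^*_{\phi_\frkg(y)}=\phi_\frkg^*\circ\ad^*_y\circ\phi_\frkg^*$ directly from admissibility of $\frkg$, so that $\ad^*_{\phi_\frkg(y)}\eta\in\Img(\phi_\frkg^*)$ is immediate and a single application of Eq.~\eqref{eqn:coadjointrepcon1} for $\frkg^*$ suffices. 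Your route is a bit more transparent and yields the reusable fact $\ad^*_{\phi_\frkg(y)}=\phi_\frkg^*\circ\ad^*_y\circ\phi_\frkg^*$ as a byproduct; the paper's pairing computation avoids the dualization bookkeeping but is slightly more roundabout. Your observation that Eq.~\eqref{eqn:coadjointrepcon2} is never invoked is also correct, and applies equally to the paper's proof.
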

\pf For any  $x\in\frkg$, since $(\frkg,\br__\frkg,\phi_\frkg)$ is
admissible, we have
\begin{eqnarray*}
  \langle[\xi, \ad^*_{\phi_\frkg(y)}\eta]_{\frkg^*},x\rangle&=&-\langle\add^*_\xi x,
  \ad^*_{\phi_\frkg(y)}\eta\rangle=\langle[\phi_\frkg(y),\add^*_\xi
  x]_\frkg,\eta\rangle\\
&=&\langle[\phi_\frkg(y),\phi_\frkg^2\add^*_\xi
  x]_\frkg,\eta\rangle=\langle[\xi,
  {\phi_\frkg^*}^2\ad^*_{\phi_\frkg(y)}\eta]_{\frkg^*},x\rangle,
\end{eqnarray*}
which implies that
$$
[\xi, \ad^*_{\phi_\frkg(y)}\eta]_{\frkg^*}=[\xi,
  {\phi_\frkg^*}^2\ad^*_{\phi_\frkg(y)}\eta]_{\frkg^*}.
$$
Since  $(\frkg^*,\br__{\frkg^*},\phi_{\frkg^*})$ is also an
admissible
  hom-Lie algebra, we get
  $$
[\xi,
  {\phi_\frkg^*}^2\ad^*_{\phi_\frkg(y)}\eta]_{\frkg^*}=[{\phi_\frkg^*}^2(\xi),
  {\phi_\frkg^*}^2\ad^*_{\phi_\frkg(y)}\eta]_{\frkg^*}=[{\phi_\frkg^*}^2(\xi),
  \ad^*_{\phi_\frkg(y)}\eta]_{\frkg^*},
  $$
which implies the conclusion. \qed\vspace{3mm}

The set of {\bf $k$-cochains} on $\frkg$ with values in $V$, which we
denote by $C^k(\frkg;V)$, is the set of  skew-symmetric $k$-linear
maps from $\frkg\times\cdots\times\frkg$ $(k$-times$)$ to $V$:
$$
C^k(\frkg;V)\triangleq\{f:\wedge^k\frkg\longrightarrow V ~\mbox{is a
$k$-linear map}\}.
$$

A {\bf $k$-hom-cochain} on $\frkg$ with values in $V$ is defined to be a
$k$-cochain $f\in C^k(\frkg;V)$ such that it is compatible with
$\phi_\frkg$ and $A$ in the sense that $A\circ f=f\circ \phi_\frkg$,
i.e.
$$
A(f(x_1,\cdots,x_k))=f(\phi_\frkg(x_1),\cdots,\phi_\frkg(x_k)).
$$
Denote by $ C^k_{\phi_\frkg,A}(\frkg;V)$ the set of
$k$-hom-cochains:
$$
C^k_{\phi_\frkg,A}(\frkg;V)\triangleq\{f\in C^k(\frkg;V)|~A\circ
f=f\circ \phi_\frkg\}.
$$

In \cite{shenghomLie}, the author defined the coboundary operator
$\dM_{\rho}:C^k_{\phi_\frkg,A}(\frkg;V)\longrightarrow
C^{k+1}_{\phi_\frkg,A}(\frkg;V)$ by setting
\begin{eqnarray}
\nonumber  \dM_{\rho} f(x_1,\cdots,x_{k+1})&=&\sum_{i=1}^{k+1}(-1)^{i+1}\rho(\phi_\frkg^{k}(x_i))(f(x_1,\cdots,\widehat{x_i},\cdots,x_{k+1}))\\
 \label{eqn:d} &&+\sum_{i<j}(-1)^{i+j}f([x_i,x_j]_\frkg,\phi_\frkg(x_1)\cdots,\widehat{x_i},\cdots,\widehat{x_j},\cdots,\phi_\frkg(x_{k+1})).
\end{eqnarray}
The equality $\dM_{\rho}^2=0$ was proved in \cite{shenghomLie}.
Thus, we can obtain the cohomology of hom-Lie algebras.

Associated to the trivial representation, the set of $k$-cochains on
$\frkg$, which we denote by $C^k(\frkg)$, is the set of
skew-symmetric $k$-linear maps from $\frkg\times\cdots\times\frkg$
to $\mathbb R$, i.e. $ C^k(\frkg)=\wedge^k\frkg^*.$ The set of
$k$-hom-cochains is given by
$$
 C^k_{\phi_\frkg}(\frkg)=\{f\in\wedge^k\frkg^*|f\circ\phi_\frkg=f\}.
$$
The corresponding coboundary operator $\dM_T:
C^k_{\phi_\frkg}(\frkg)\longrightarrow  C^{k+1}_{\phi_\frkg}(\frkg)$ is
given by
$$
\dM_T
f(x_1,\cdots,x_{k+1})=\sum_{i<j}(-1)^{i+j}f([x_i,x_j]_\frkg,\phi_\frkg(x_1),\cdots,\widehat{x_i},\cdots,\widehat{x_j},\cdots,\phi_\frkg(x_{k+1})).
$$

\section{Matched pairs, hom-Lie bialgebras, and Manin triples}

Let $(\frkg,\br__\frkg,\phi_\frkg)$ and
$(\frkg',\br__{\frkg'},\phi_{\frkg'})$ be two hom-Lie algebras. Let
$\rho:\frkg\longrightarrow \gl(\frkg')$ and
$\rho':\frkg'\longrightarrow \gl(\frkg)$ be two linear maps. On the
direct sum of the underlying vector spaces, $\frkg\oplus \frkg'$, define $\phi_d:\frkg\oplus
\frkg'\longrightarrow\frkg\oplus \frkg'$ by
\begin{equation}
  \phi_d(x,x')=(\phi_\frkg(x),\phi_{\frkg'}(x')),
\end{equation}
and define a skew-symmetric bilinear map
$[\cdot,\cdot]_d:\wedge^2(\frkg\oplus \frkg')\longrightarrow
\frkg\oplus \frkg'$ by
\begin{equation}\label{eqn:brdouble}
  [(x,x'),(y,y')]_d=\big([x,y]_\frkg-\rho'(y')(x)+\rho'(x')(y),[x',y']_{\frkg'}+\rho(x)(y')-\rho(y)(x')\big).
\end{equation}

\begin{thm}\label{thm:matchedpair}
  With the above notations, $(\frkg\oplus
  \frkg',[\cdot,\cdot]_d,\phi_d)$ is a hom-Lie algebra if and only if $\rho$ and
  $\rho'$ are representations and
  \begin{eqnarray}
    \nonumber\rho'(\phi_{\frkg'}(x'))[x,y]_\frkg&=&[\rho'(x')(x),\phi_\frkg(y)]_\frkg+[\phi_\frkg(x),\rho'(x')(y)]_\frkg\\
   \label{eqn:matchcon1} &&+\rho'(\rho(y)(x'))(\phi_\frkg(x))-\rho'(\rho(x)(x'))(\phi_\frkg(y)),\\
    \nonumber\rho(\phi_{\frkg}(x))[x',y']_{\frkg'}&=&[\rho(x)(x'),\phi_{\frkg'}(y')]_{\frkg'}+[\phi_{\frkg'}(x'),\rho(x)(y')]_{\frkg'}\\
    \label{eqn:matchcon2}&&+\rho(\rho'(y')(x))(\phi_{\frkg'}(x'))-\rho(\rho'(x')(x))(\phi_{\frkg'}(y')).
  \end{eqnarray}
\end{thm}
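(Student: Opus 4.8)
The plan is to verify directly that the triple $(\frkg\oplus\frkg',[\cdot,\cdot]_d,\phi_d)$ satisfies the three defining conditions of a hom-Lie algebra, and to track which of the stated equations is forced at each stage. First I would check that $\phi_d$ is an algebra homomorphism for $[\cdot,\cdot]_d$, i.e. $\phi_d[(x,x'),(y,y')]_d=[\phi_d(x,x'),\phi_d(y,y')]_d$. Writing both sides out using \eqref{eqn:brdouble} and comparing the $\frkg$- and $\frkg'$-components, one sees that the $\frkg$-component equation is $\phi_\frkg[x,y]_\frkg-\phi_\frkg(\rho'(y')(x))+\phi_\frkg(\rho'(x')(y))=[\phi_\frkg(x),\phi_\frkg(y)]_\frkg-\rho'(\phi_{\frkg'}(y'))(\phi_\frkg(x))+\rho'(\phi_{\frkg'}(x'))(\phi_\frkg(y))$. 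Since $\phi_\frkg$ is already a homomorphism of $\frkg$, the bracket terms cancel, and what remains is exactly the intertwining condition $\phi_\frkg\circ\rho'(x')=\rho'(\phi_{\frkg'}(x'))\circ\phi_\frkg$ for all $x'$; similarly the $\frkg'$-component yields $\phi_{\frkg'}\circ\rho(x)=\rho(\phi_\frkg(x))\circ\phi_{\frkg'}$. These are precisely condition (i) in Definition \ref{defi:representation} for $\rho$ and $\rho'$ respectively. Skew-symmetry of $[\cdot,\cdot]_d$ is immediate from the formula and the skew-symmetry of the two given brackets.

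The substantial step is the hom-Jacobi identity for $[\cdot,\cdot]_d$. I would expand
$[\phi_d(X),[Y,Z]_d]_d+[\phi_d(Y),[Z,X]_d]_d+[\phi_d(Z),[X,Y]_d]_d$
with $X=(x,x')$, $Y=(y,y')$, $Z=(z,z')$, and collect the resulting expression into its $\frkg$-component and its $\frkg'$-component. Each component naturally sorts into terms according to how many primed versus unprimed arguments appear. The purely unprimed terms in the $\frkg$-component sum to the hom-Jacobiator of $\frkg$, which vanishes; likewise the purely primed terms in the $\frkg'$-component vanish by the hom-Jacobi identity for $\frkg'$. The terms with exactly one primed argument, say $x'$ (and cyclic), reorganize — after using condition (i) and condition (ii) for $\rho$, i.e. $\rho([x',y']_{\frkg'})\circ\phi_{\frkg'}=\cdots$, where needed — into the statement that $\rho$ is a representation; the genuinely mixed terms, those involving one unprimed element paired against the bracket of two primed arguments (and vice versa), are what produce the compatibility conditions \eqref{eqn:matchcon1} and \eqref{eqn:matchcon2}. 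Concretely, the vanishing of the coefficient of the ``one $x$, bracket of $y',z'$'' type terms in the $\frkg'$-component is equation \eqref{eqn:matchcon2}, and the symmetric bookkeeping in the $\frkg$-component gives \eqref{eqn:matchcon1}.

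The main obstacle is purely organizational: the hom-Jacobi expansion produces a large number of terms, and one must group them correctly by ``type'' (number of primed arguments, and which argument sits inside $\phi_d$ versus inside the inner bracket) so that each group either cancels using the hom-Jacobi identities of $\frkg,\frkg'$, or collapses using the representation axioms, or is recognized as one of \eqref{eqn:matchcon1}–\eqref{eqn:matchcon2}. The key to making this manageable is the observation that $[\cdot,\cdot]_d$ restricted to $\frkg$ (resp. $\frkg'$) is just $[\cdot,\cdot]_\frkg$ (resp. $[\cdot,\cdot]_{\frkg'}$), and that the cross terms $\rho'(x')(y)$ land in $\frkg$ while $\rho(x)(y')$ lands in $\frkg'$, so the two components never interfere except through these $\rho,\rho'$ contributions. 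For the converse direction, one reads the same computation backwards: assuming $(\frkg\oplus\frkg',[\cdot,\cdot]_d,\phi_d)$ is a hom-Lie algebra, restricting the homomorphism property and hom-Jacobi identity to the subspaces $\frkg\oplus 0$ and $0\oplus\frkg'$ and to suitable mixed triples isolates each of the required identities — conditions (i),(ii) for $\rho$ and $\rho'$, and \eqref{eqn:matchcon1},\eqref{eqn:matchcon2} — so no extra argument is needed beyond specializing the arguments appropriately.
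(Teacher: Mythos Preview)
Your overall strategy is correct and matches the paper's: check that $\phi_d$ being a homomorphism forces condition (i) for both $\rho$ and $\rho'$, then extract the remaining conditions from the hom-Jacobi identity by looking at mixed triples. The paper carries this out exactly as your ``converse direction'' paragraph suggests --- by specializing from the start to triples $(x,y,x')$ with $x,y\in\frkg$, $x'\in\frkg'$, and then $(x,x',y')$ with $x\in\frkg$, $x',y'\in\frkg'$ --- rather than expanding for fully generic $(x,x'),(y,y'),(z,z')$ and sorting afterwards. By trilinearity these are equivalent, but specializing first avoids the bookkeeping headache you flag as the main obstacle.

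Your description of \emph{which} mixed case produces \emph{which} identity is off, though. With two elements of $\frkg$ and one of $\frkg'$, the hom-Jacobiator has both a $\frkg$-component and a $\frkg'$-component: the $\frkg'$-component vanishing is condition (ii) for $\rho$ (namely $\rho([x,y]_\frkg)\circ\phi_{\frkg'}=\rho(\phi_\frkg(x))\rho(y)-\rho(\phi_\frkg(y))\rho(x)$), while the $\frkg$-component vanishing is the compatibility \eqref{eqn:matchcon1}. Symmetrically, one element of $\frkg$ and two of $\frkg'$ gives condition (ii) for $\rho'$ in the $\frkg$-component and \eqref{eqn:matchcon2} in the $\frkg'$-component. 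So each mixed-triple type yields one representation axiom \emph{and} one compatibility, not one or the other. (Also, your displayed formula ``$\rho([x',y']_{\frkg'})\circ\phi_{\frkg'}=\cdots$'' is a slip: $\rho$ takes arguments in $\frkg$, so this should read either $\rho([x,y]_\frkg)$ or $\rho'([x',y']_{\frkg'})$.) None of this breaks the argument --- the correct grouping would emerge once you actually wrote out the expansion --- but it is worth getting straight before you do.
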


\pf If $(\frkg\oplus
  \frkg',[\cdot,\cdot]_d,\phi_d)$  is a hom-Lie algebra, by the fact that $\phi_d$ is an
  algebra homomorphism with respect to $[\cdot,\cdot]_d$, we have
  $$
\phi_d[x,x']_d=[\phi_\frkg(x),\phi_{\frkg'}(x')]_d,
  $$
which implies that
\begin{eqnarray}
  \label{eqn:rhorep1}\phi_{\frkg'}\rho(x)(x')&=&\rho(\phi_\frkg(x))(\phi_{\frkg'}(x')),\\
 \label{eqn:rho'rep1}\phi_{\frkg}\rho'(x')(x)&=&\rho'(\phi_{\frkg'}(x'))(\phi_{\frkg}(x)).
\end{eqnarray}

Computing the hom-Jacobi identity for $x,y\in \frkg$ and
$x'\in\frkg'$, we have
\begin{eqnarray*}
  &&[[x,y]_d,\phi_{\frkg'}(x')]_d+[[y,x']_d,\phi_{\frkg}(x)]_d+[[x',x]_d,\phi_\g(y)]_d\\
  &=&\rho([x,y]_\frkg)(\phi_{\frkg'}(x'))-\rho'(\phi_{\frkg'}(x'))([x,y]_\frkg)+[\rho(y)(x')-\rho'(x')(y),\phi_{\frkg}(x)]_d\\
  &&+[\rho'(x')(x)-\rho(x)(x'),\phi_\g(y)]_d\\
&=&\rho([x,y]_\frkg)(\phi_{\frkg'}(x'))-\rho'(\phi_{\frkg'}(x'))([x,y]_\frkg)+\rho'(\rho(y)(x'))(\phi_{\frkg}(x))
-\rho(\phi_\g(x))\rho(y)(x')\\&&-[\rho'(x')(y),\phi_\g(x)]_\frkg
+[\rho'(x')(x),\phi_\g(y)]_\frkg-\rho'(\rho(x)(x'))(\phi_\g(y))+\rho(\phi_\g(y))\rho(x)(x')\\
&=&0,
\end{eqnarray*}
which implies that
\begin{eqnarray}
  \label{eqn:rhorep2}\rho([x,y]_\g)\circ\phi_{\frkg'}&=&\rho(\phi_\g(x))\rho(y)-\rho(\phi_\g(y))\rho(x),\\
 \nonumber\rho'(\phi_{\g'}(x'))([x,y]_\g)&=&\rho'(\rho(y)(x'))(\phi_\g(x))
-[\rho'(x')(y),\phi_\g(x)]_\frkg
\\\label{eqn:matchcon3}&&+[\rho'(x')(x),\phi_\g(y)]_\frkg-\rho'(\rho(x)(x'))(\phi_\g(y)).
\end{eqnarray}

Similarly, computing the hom-Jacobi identity for $x\in \frkg$ and
$x',y'\in\frkg'$, we get
\begin{eqnarray}
  \label{eqn:rho'rep2}\rho'([x',y']_{\g'})\circ\phi_{\frkg}&=&\rho'(\phi_{\g'}(x'))\rho'(y')-\rho'(\phi_{\frkg'}(y'))\rho'(x'),\\
 \nonumber\rho(\phi_\g(x))([x',y']_{\g'})&=&\rho(\rho'(y')(x))(\phi_{\g'}(x'))
-[\rho(x)(y'),\phi_{\g'}(x')]_{\frkg'}
\\\label{eqn:matchcon4}&&+[\rho(x)(x'),\phi_{\frkg'}(y')]_{\frkg'}-\rho(\rho'(x')(x))(\phi_{\frkg'}(y')).
\end{eqnarray}
By Eqs. \eqref{eqn:rhorep1} and \eqref{eqn:rhorep2}, we deduce that $\rho
$ is a representation of the hom-Lie algebra
$(\frkg,\br__\frkg,\phi_\frkg)$ on $\frkg'$ with respect to
$\phi_{\frkg'}$. By Eqs. \eqref{eqn:rho'rep1} and \eqref{eqn:rho'rep2},
we deduce that $\rho' $ is a representation of the hom-Lie algebra
$(\frkg',\br__{\frkg'},\phi_{\frkg'})$ on $\frkg$ with respect to
$\phi_{\frkg}$. Furthermore, Eqs. \eqref{eqn:matchcon3} and
\eqref{eqn:matchcon4} are exactly Eqs. \eqref{eqn:matchcon1} and
\eqref{eqn:matchcon2} respectively. This finishes the proof. \qed

\begin{defi}
  A {\bf matched pair of hom-Lie algebras}, which we denote by $(\frkg,\frkg';\rho,\rho')$, consists of two
  hom-Lie algebras $(\frkg,\br__\frkg,\phi_\frkg)$ and
$(\frkg',\br__{\frkg'},\phi_{\frkg'})$, together with
representations  $\rho:\frkg\longrightarrow \gl(\frkg')$ and
$\rho':\frkg'\longrightarrow \gl(\frkg)$ with respect to
$\phi_{\frkg'}$ and $\phi_\frkg$ respectively, such that the
compatibility conditions \eqref{eqn:matchcon1} and
\eqref{eqn:matchcon2} are satisfied.
\end{defi}

In the following, we concentrate on the case that $\frkg'$ is
$\frkg^*$, the dual space of $\frkg$, and
$\phi_{\frkg'}=\phi_\frkg^*$, $\rho=\ad^*$, $\rho'=\add^*$, where $\add^*$ is the dual map of $\add$.

For a hom-Lie algebra $(\frkg,\br__\frkg,\phi_\frkg)$ (resp.
$(\frkg^*,\br__{\frkg^*},\phi_{\frkg^*})$), let
$\Delta^*:\frkg^*\longrightarrow \wedge^2\frkg^*$ (resp.
$\Delta:\frkg\longrightarrow\wedge^2\frkg$) be the dual map of
$\br__\frkg:\wedge^2\frkg\longrightarrow\frkg$ (resp.
$\br__{\frkg^*}:\wedge^2\frkg^*\longrightarrow\frkg^*$), i.e.
$$
\langle\Delta^*(\xi),x\wedge
y\rangle=\langle\xi,[x,y]_\g\rangle,\quad
\langle\Delta(x),\xi\wedge \eta\rangle=\langle
x,[\xi,\eta]_{\frkg^*}\rangle.
$$
\begin{defi}\label{defi:homLiebi}
  A pair of admissible hom-Lie algebras $(\frkg,\br__\frkg,\phi_\frkg)$ and
  $(\frkg^*,\br__{\frkg^*},\phi_{\frkg}^*)$ is called a {\bf hom-Lie
  bialgebra} if
  \begin{eqnarray}
    \label{eqn:homLiebicon1}\langle\Delta[x,y]_\g,\phi_\frkg^*(\xi)\wedge\eta\rangle&=&\langle\ad_{\phi_\g(x)}\Delta(y),\phi_\frkg^*(\xi)\wedge\eta\rangle-
    \langle\ad_{\phi_\frkg(y)}\Delta(x),\phi_\frkg^*(\xi)\wedge\eta\rangle,\\
    \label{eqn:homLiebicon2}\langle \Delta^*[\xi,\eta]_{\frkg^*},\phi_\g(x)\wedge y\rangle&=&\langle\add_{\phi_\frkg^*(\xi)}\Delta^*(\eta),\phi_\g(x)\wedge y\rangle-
    \langle\add_{\phi_\frkg^*(\eta)}\Delta^*(\xi),\phi_\g(x)\wedge y\rangle.
  \end{eqnarray}
  Usually we denote a hom-Lie bialgebra simply by $(\frkg,\frkg^*)$.
\end{defi}

\begin{rmk}
  The notion of a hom-Lie bialgebra has already appeared in some
  references, e.g. \cite{Yao1}, where $\Delta$ is a 1-cocycle, i.e.
\begin{equation}\label{eqn:homLiebioldcon}
\Delta[x,y]_\g=\ad_{\phi_\g(x)}\Delta(y)-\ad_{\phi_\frkg(y)}\Delta(x).
\end{equation}

If $\phi_\frkg$ is not invertible, this condition is not the same as
the condition in Definition \ref{defi:homLiebi}. For example, if
$\phi_\frkg=0$, then for any skew-symmetric bilinear map
$[\cdot,\cdot]_\frkg$, $(\frkg,[\cdot,\cdot]_\frkg,0)$ is a hom-Lie
algebra. Any pair of hom-Lie algebras
$(\frkg,[\cdot,\cdot]_\frkg,0)$ and
$(\frkg^*,[\cdot,\cdot]_{\frkg^*},0)$ is a hom-Lie bialgebra (no
restriction on $[\cdot,\cdot]_{\frkg^*}$) in the sense of Definition
\ref{defi:homLiebi}. However, if $\phi_\frkg=0$, Eq.
\eqref{eqn:homLiebioldcon} implies that
$\Delta|_{[\frkg,\frkg]_\frkg}=0$, i.e. the cobracket
$[\cdot,\cdot]_{\frkg^*}$ must satisfy
$[\xi,\eta]_{\frkg^*}([x,y]_\g)=0$.
\end{rmk}

\begin{thm}\label{thm:biandmatchedpair}
  A pair of admissible hom-Lie algebras $(\frkg,\br__\frkg,\phi_\frkg)$ and
  $(\frkg^*,\br__{\frkg}^*,\phi_{\frkg}^*)$ is a hom-Lie
  bialgebra if and only if  $(\frkg,\br__\frkg,\phi_\frkg)$ and
  $(\frkg^*,\br__{\frkg^*},\phi_{\frkg}^*)$  is a matched pair of hom-Lie algebras, i.e. $(\frkg\oplus
  \frkg^*,[\cdot,\cdot]_d,\phi_\frkg\oplus\phi_\frkg^*)$ is a
   hom-Lie algebra, where $[\cdot,\cdot]_d$ is given by
   Eq. \eqref{eqn:brdouble}, in which $\rho=\ad^*$ and $\rho'=\add^*$.
\end{thm}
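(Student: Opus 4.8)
The plan is to translate the two scalar cocycle-type identities of Definition \ref{defi:homLiebi} into the two compatibility identities \eqref{eqn:matchcon1} and \eqref{eqn:matchcon2} of a matched pair, with $\rho=\ad^*$ and $\rho'=\add^*$, and then invoke Theorem \ref{thm:matchedpair} to conclude that $(\frkg\oplus\frkg^*,[\cdot,\cdot]_d,\phi_\frkg\oplus\phi_\frkg^*)$ is a hom-Lie algebra. First I would record the setup: since $(\frkg,\frkg^*)$ is a pair of admissible hom-Lie algebras, $\ad^*:\frkg\to\gl(\frkg^*)$ is a representation with respect to $\phi_\frkg^*$ and $\add^*:\frkg^*\to\gl(\frkg)$ is a representation with respect to $\phi_\frkg$; so conditions (i) and (ii) in Lemma \ref{lem:dual} are automatically in force, and the only thing left to check from Theorem \ref{thm:matchedpair} are the two mixed compatibility conditions. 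This reduces the whole statement to proving that \eqref{eqn:homLiebicon1} $\Leftrightarrow$ \eqref{eqn:matchcon1} (with the substitutions above) and symmetrically \eqref{eqn:homLiebicon2} $\Leftrightarrow$ \eqref{eqn:matchcon2}; by the evident symmetry (swap $\frkg\leftrightarrow\frkg^*$, $\ad\leftrightarrow\add$, $\Delta\leftrightarrow\Delta^*$) it suffices to do one of them.

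The key computation is to pair \eqref{eqn:matchcon1} against an arbitrary test element and identify it with \eqref{eqn:homLiebicon1}. Concretely, I would take the compatibility condition \eqref{eqn:matchcon1} with $\rho=\ad^*$, $\rho'=\add^*$, $x'=\xi\in\frkg^*$, and pair both sides with a generic $\eta\in\frkg^*$; every term then becomes a scalar of the form $\langle\,\cdot\,,\eta\rangle$ with the first slot in $\frkg$. Using the definitions $\langle\add^*_\xi x,\cdot\rangle=-\langle\xi,\add_x\cdot\rangle$ and $\langle\ad^*_x\eta,\cdot\rangle=-\langle\eta,\ad_x\cdot\rangle$, together with the definition of $\Delta$ as the dual of $[\cdot,\cdot]_{\frkg^*}$ and the extended-bracket formula for $\ad_x$ acting on $\wedge^2\frkg$, each of the five terms of \eqref{eqn:matchcon1} unwinds into an expression of the shape $\langle\Delta(\text{something}),\,\zeta_1\wedge\zeta_2\rangle$ where the $\zeta_i$ are among $\phi_\frkg^*(\xi),\phi_\frkg^*(\eta)$ (the twist $\phi_\frkg$ inside the brackets producing the $\phi_\frkg^*$ on the covector side). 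Matching up terms, the left-hand side of \eqref{eqn:matchcon1} becomes $\langle\Delta[x,y]_\frkg,\phi_\frkg^*(\xi)\wedge\eta\rangle$ (after also using the extended bracket's interaction with $\phi_\frkg$) and the right-hand side collects into $\langle\ad_{\phi_\frkg(x)}\Delta(y),\phi_\frkg^*(\xi)\wedge\eta\rangle-\langle\ad_{\phi_\frkg(y)}\Delta(x),\phi_\frkg^*(\xi)\wedge\eta\rangle$, which is precisely \eqref{eqn:homLiebicon1}. Since $\eta$ was arbitrary and the pairing $\frkg^*\times\frkg\to\Real$ is nondegenerate, the scalar identity for all $\eta$ is equivalent to the vector identity \eqref{eqn:matchcon1}.

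The main obstacle is bookkeeping, not conceptual: one must be careful about where the twist map $\phi_\frkg$ lands after dualizing (this is exactly why the Definition \ref{defi:homLiebi} identities carry $\phi_\frkg^*(\xi)$ and $\phi_\frkg(x)$ rather than bare $\xi,x$, and why admissibility — i.e. that $\ad^*,\add^*$ really are representations — is needed rather than being an afterthought). In particular the terms $\add^*_{\ad_y(\xi)}(\phi_\frkg(x))$ and $\add^*_{\ad_x(\xi)}(\phi_\frkg(y))$ in \eqref{eqn:matchcon1}, when paired with $\eta$, involve $\ad_y^*\xi$ and $\ad_x^*\xi$ in the first argument of $\Delta^*$; to recognize these as pieces of $\langle\Delta[x,y]_\frkg,\phi_\frkg^*(\xi)\wedge\eta\rangle$ versus pieces of the $\ad_{\phi_\frkg(x)}\Delta(y)$ terms one uses the explicit formula \eqref{eqn:d}-style expansion for $\Delta$ as a dual cobracket together with Corollary \ref{cor:ad2} and Lemma \ref{lem:admissible} to move the $\phi_\frkg^2$ factors around in the regular/admissible setting. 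Once the correspondence of terms is set up correctly on one side, the other side follows by the $\frkg\leftrightarrow\frkg^*$ symmetry, and Theorem \ref{thm:matchedpair} finishes the proof. \qed
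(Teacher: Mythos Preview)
Your approach is essentially the paper's: reduce via Theorem~\ref{thm:matchedpair} to the two mixed compatibility conditions, pair \eqref{eqn:matchcon1} (with $\rho=\ad^*$, $\rho'=\add^*$) against an arbitrary $\eta\in\frkg^*$, and unwind each term into a pairing of $\Delta$ against $\phi_\frkg^*(\xi)\wedge\eta$, so that the result is exactly \eqref{eqn:homLiebicon1}; the other equivalence follows by the $\frkg\leftrightarrow\frkg^*$ symmetry.

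One correction, however: you should not invoke Corollary~\ref{cor:ad2}, and you should drop the phrase ``in the regular/admissible setting''. Corollary~\ref{cor:ad2} assumes $\frkg$ is \emph{regular}, but Theorem~\ref{thm:biandmatchedpair} only assumes admissibility. The paper's computation needs exactly one nontrivial identity to make the terms fit together --- namely Lemma~\ref{lem:admissible}, which converts $\langle x,[\xi,\ad^*_{\phi_\frkg(y)}\eta]_{\frkg^*}\rangle$ into $\langle x,[(\phi_\frkg^*)^2(\xi),\ad^*_{\phi_\frkg(y)}\eta]_{\frkg^*}\rangle$ so that the two pieces coming from $[\add^*_\xi x,\phi_\frkg(y)]_\frkg$ and from $\add^*_{\ad_y^*\xi}\phi_\frkg(x)$ combine into $\langle\Delta x,\ad^*_{\phi_\frkg(y)}(\phi_\frkg^*(\xi)\wedge\eta)\rangle$. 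Lemma~\ref{lem:admissible} requires only admissibility of both $\frkg$ and $\frkg^*$, so the proof goes through without any regularity hypothesis. If you were to actually rely on Corollary~\ref{cor:ad2}, your argument would prove a strictly weaker statement.
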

\pf By Theorem \ref{thm:matchedpair}, two admissible hom-Lie
algebras $(\frkg,\br__\frkg,\phi_\frkg)$ and
  $(\frkg^*,\br__{\frkg^*},\phi_{\frkg}^*)$  is a matched pair of hom-Lie
  algebras if and only if
\begin{eqnarray}
  \label{eqn:temp1} \add^*_{\phi_{\frkg}^*(\xi)}[x,y]_\g&=&[ \add^*_\xi x,\phi_\frkg(y)]_\frkg+[\phi_\g(x), \add^*_\xi y]_\frkg+ \add^*_{\ad _y^*\xi}\phi_\g(x)
   - \add^*_{\ad^* _x\xi}\phi_\frkg(y),\\
   \label{eqn:temp2} \ad^*_{\phi_\g(x)}[\xi,\eta]_{\frkg^*}&=&[\ad^*_x\xi,\phi_{\frkg}^*(\eta)]_{\frkg^*}+[\phi_{\frkg}^*(\xi),\ad^*_x \eta]_{\frkg^*}
   +\ad^*_{ \add^*_\eta x}\phi_{\frkg}^*(\xi)-\ad^*_{\add^*_\xi
x}\phi_{\frkg}^*(\eta).
  \end{eqnarray}
By Eq. \eqref{eqn:temp1}, we get
\begin{eqnarray*}
   0&=&\langle-\add^*_{\phi_{\frkg}^*(\xi)}[x,y]_\g+[ \add^*_\xi x,\phi_\frkg(y)]_\frkg+[\phi_\g(x), \add^*_\xi y]_\frkg+ \add^*_{\ad _y^*\xi}\phi_\g(x)
   - \add^*_{\ad^* _x\xi}\phi_\frkg(y),\eta\rangle\\
   &=&\langle[x,y]_\g,[\phi_{\frkg}^*(\xi),\eta]_{\frkg^*}\rangle-\langle\ad_{\phi_\frkg(y)}\add^*_\xi
   x,\eta\rangle+\langle\ad_{\phi_\g(x)}\add^*_\xi
   y,\eta\rangle\\&&-\langle \phi_\g(x),[\ad _y^*\xi,\eta]_{\frkg^*}
   \rangle+\langle \phi_\frkg(y),[\ad _x^*\xi,\eta]_{\frkg^*}
   \rangle\\
    &=&\langle[x,y]_\g,[\phi_{\frkg}^*(\xi),\eta]_{\frkg^*}\rangle-\langle
    x,[\xi,\ad_{\phi_\frkg(y)}^*\eta]_{\frkg^*}\rangle+\langle
    y,[\xi,\ad_{\phi_\g(x)}^*\eta]_{\frkg^*}\rangle\\
    &&-\langle x,[\phi_\frkg^*\ad _y^*\xi,\phi_\frkg^*\eta]_{\frkg^*}
   \rangle+\langle y,[\phi_\frkg^*\ad _x^*\xi,\phi_\frkg^*\eta]_{\frkg^*}
   \rangle\\
    &=&\langle[x,y]_\g,[\phi_{\frkg}^*(\xi),\eta]_{\frkg^*}\rangle-\langle
    x,[(\phi_\frkg^*)^2(\xi),\ad_{\phi_\frkg(y)}^*\eta]_{\frkg^*}\rangle+\langle
    y,[(\phi_\frkg^*)^2(\xi),\ad_{\phi_\g(x)}^*\eta]_{\frkg^*}\rangle\quad by~ Lemma ~\ref{lem:admissible}\\
    &&-\langle x,[\ad _{\phi_\frkg(y)}^*\phi_\frkg^*(\xi),\phi_\frkg^*\eta]_{\frkg^*}
   \rangle+\langle y,[\ad _{\phi_\g(x)}^*\phi_\frkg^*(\xi),\phi_\frkg^*\eta]_{\frkg^*}
   \rangle\\
    &=&\langle\Delta[x,y]_\g,\phi_{\frkg}^*(\xi)\wedge\eta\rangle-\langle
    \Delta x,(\phi_\frkg^*)^2(\xi)\wedge\ad_{\phi_\frkg(y)}^*\eta\rangle+\langle
     \Delta y,(\phi_\frkg^*)^2(\xi)\wedge\ad_{\phi_\g(x)}^*\eta\rangle\\
    &&-\langle \Delta  x,\ad _{\phi_\frkg(y)}^*\phi_\frkg^*(\xi)\wedge\phi_\frkg^*\eta
   \rangle+\langle  \Delta  y,\ad _{\phi_\g(x)}^*\phi_\frkg^*(\xi)\wedge\phi_\frkg^*\eta
   \rangle\\
   &=&\langle\Delta[x,y]_\g,\phi_{\frkg}^*(\xi)\wedge\eta\rangle-\langle
    \Delta x,\ad_{\phi_\frkg(y)}^*(\phi_\frkg^*(\xi)\wedge\eta)\rangle+\langle
     \Delta
     y,\ad_{\phi_\g(x)}^*(\phi_\frkg^*(\xi)\wedge\eta)\rangle\\
     &=&\langle\Delta[x,y]_\g,\phi_{\frkg}^*(\xi)\wedge\eta\rangle+\langle
   \ad_{\phi_\frkg(y)} \Delta x,\phi_\frkg^*(\xi)\wedge\eta\rangle-\langle
     \ad_{\phi_\g(x)}\Delta y,\phi_\frkg^*(\xi)\wedge\eta\rangle,
  \end{eqnarray*}
which is exactly Eq. \eqref{eqn:homLiebicon1}. Similarly, we could
deduce that  Eq. \eqref{eqn:temp2} is equivalent to Eq.
\eqref{eqn:homLiebicon2}. This finishes the proof. \qed

\begin{defi}
  A {\bf Manin triple of hom-Lie algebras} is a triple of hom-Lie algebras
  $(\frkk;\frkg,\frkg')$ together with a nondegenerate symmetric
  bilinear form $S(\cdot,\cdot)$ on $\frkk$ such that
  \begin{itemize}
    \item[$\bullet$] $S(\cdot,\cdot)$ is invariant, i.e. for any
    $x,y,z\in\frkk$, we have
    \begin{eqnarray}
      \label{eqn:invariant1}S([x,y]_\frkk,z)&=&S(x,[y,z]_\frkk);\\
     \label{eqn:invariant2} S(\phi_\frkk(x),y)&=&S(x,\phi_\frkk(y)).
    \end{eqnarray}

    \item[$\bullet$] $\frkg$ and $\frkg'$ are isotropic hom-Lie sub-algebras of
    $\frkk$, such that $\frkk=\frkg\oplus\frkg'$ as vector
    spaces.
  \end{itemize}
\end{defi}

Actually, $(\frkk,S)$ is a quadratic hom-Lie algebra, see \cite{BM} for more details. A {\bf Lagrangian hom-subalgebra} of  a quadratic hom-Lie algebra is defined to be a maximal isotropic hom-Lie subalgebra.

Let $(\frkg,\frkg^*)$ be a hom-Lie bialgebra, i.e. $\frkg$ and
$\frkg^*$ are admissible hom-Lie algebras such that Eqs.
\eqref{eqn:homLiebicon1} and \eqref{eqn:homLiebicon2} are satisfied.
By Theorem \ref{thm:biandmatchedpair}, we know that
$(\frkg\oplus\frkg^*,[\cdot,\cdot]_{\frkg\oplus\frkg^*},\phi_\frkg\oplus\phi_\frkg^*)$
is a hom-Lie algebra, where $[\cdot,\cdot]_{\frkg\oplus\frkg^*}$ is
given by
\begin{equation}\label{eqn:standbracket}
  [x+\xi,y+\eta]_{\frkg\oplus\frkg^*}=[x,y]_\frkg+[\xi,\eta]_{\frkg^*}+\ad^*_x\eta-\ad^*_y\xi+\add^*_\xi
  y-\add^*_\eta x.
\end{equation}
Furthermore, there is an obvious symmetric bilinear form on
$\frkg\oplus\frkg^*$:
\begin{equation}\label{eqn:standardform}
  S(x+\xi,y+\eta)=\langle x,\eta\rangle+\langle y,\xi\rangle.
\end{equation}
It is obvious that Eqs. \eqref{eqn:invariant1} and
\eqref{eqn:invariant2} are satisfied, i.e. the bilinear form defined
by Eq. \eqref{eqn:standardform} is invariant. Thus, we have
\begin{pro}
  Let $(\frkg,\frkg^*)$ be a hom-Lie bialgebra. Then
  $(\frkg\oplus\frkg^*;\frkg,\frkg^*)$ is a Manin triple of hom-Lie
  algebras.\label{pro:h-m}
\end{pro}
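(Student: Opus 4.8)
The plan is to verify directly that the triple $(\frkg\oplus\frkg^*;\frkg,\frkg^*)$ satisfies the three conditions in the definition of a Manin triple of hom-Lie algebras: first that $\frkg\oplus\frkg^*$ is a hom-Lie algebra, then that $S$ is a nondegenerate invariant symmetric bilinear form, and finally that $\frkg$ and $\frkg^*$ are isotropic hom-Lie subalgebras with $\frkg\oplus\frkg^* = \frkg\oplus\frkg^*$ as vector spaces. Most of this is already in place in the discussion preceding the statement: Theorem \ref{thm:biandmatchedpair} gives that $(\frkg\oplus\frkg^*,[\cdot,\cdot]_{\frkg\oplus\frkg^*},\phi_\frkg\oplus\phi_\frkg^*)$ is a hom-Lie algebra with bracket \eqref{eqn:standbracket}, so the first point is free. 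For the second point, $S$ defined by \eqref{eqn:standardform} is manifestly symmetric and nondegenerate (the pairing between $\frkg$ and $\frkg^*$ is perfect), and invariance \eqref{eqn:invariant1}, \eqref{eqn:invariant2} is a short direct check using the defining relations $\langle\ad_x^*\xi,y\rangle = -\langle\xi,[x,y]_\frkg\rangle$ and $\langle\add_\xi^* x,\eta\rangle = -\langle x,[\xi,\eta]_{\frkg^*}\rangle$, together with $\langle\phi_\frkg(x),\phi_\frkg^*(\eta)\rangle = \langle\phi_\frkg^2(x),\eta\rangle$ and skew-symmetry of the brackets.

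For the third point, I would observe that $\frkg$ and $\frkg^*$ are isotropic: if $x,y\in\frkg$ then $S(x,y) = \langle x,0\rangle + \langle y,0\rangle = 0$, and likewise for $\frkg^*$. They are hom-Lie subalgebras of $\frkg\oplus\frkg^*$ because $\phi_\frkg\oplus\phi_\frkg^*$ preserves $\frkg$ (resp.\ $\frkg^*$) and, by \eqref{eqn:standbracket}, $[x,y]_{\frkg\oplus\frkg^*} = [x,y]_\frkg \in \frkg$ for $x,y\in\frkg$ (the coadjoint terms vanish when both arguments lie in $\frkg$), and symmetrically $[\xi,\eta]_{\frkg\oplus\frkg^*} = [\xi,\eta]_{\frkg^*}\in\frkg^*$. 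Finally $\frkg\oplus\frkg^* = \frkg\oplus\frkg^*$ as vector spaces by construction. Assembling these observations gives the claim.

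Since nearly every ingredient has already been established in the text — in particular the hom-Lie algebra structure via Theorem \ref{thm:biandmatchedpair} — there is no real obstacle here; the proof is essentially a bookkeeping assembly. The only nontrivial verification is the invariance of $S$, and even that is routine: the two sides of \eqref{eqn:invariant1} each expand, via the definitions of $\ad^*$ and $\add^*$, into the same combination of $\langle z,[x,y]_\frkg\rangle$-type and $\langle \cdot,[\xi,\eta]_{\frkg^*}\rangle$-type terms, while \eqref{eqn:invariant2} reduces to the symmetry $\langle\phi_\frkg(x),\eta\rangle = \langle x,\phi_\frkg^*(\eta)\rangle$ built into the dual map. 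I would therefore present the proof compactly, citing the text for the hom-Lie structure, noting isotropy and the subalgebra property in one line each, and sketching the invariance computation rather than belaboring it.
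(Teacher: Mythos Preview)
Your proposal is correct and follows essentially the same approach as the paper. The paper is even more terse: it assembles the preceding discussion (hom-Lie structure from Theorem~\ref{thm:biandmatchedpair}, the form~\eqref{eqn:standardform}, and the remark that invariance ``is obvious'') and states the proposition without further proof, so your write-up simply spells out what the paper leaves implicit.
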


Conversely, if  $(\frkg\oplus\frkg^*;\frkg,\frkg^*)$ is a Manin triple of hom-Lie
  algebras with the invariant bilinear form $S$ given by Eq. \eqref{eqn:standardform}, then for any $x,y\in\frkg$ and $\xi,\eta\in\frkg^*$, due to the invariance of $S$,
we have
$$
S(\phi_\frkg(x),\xi)=S(x,\phi_{\frkg^*}(\xi))=\langle \phi_\frkg(x),\xi \rangle=\langle x,\phi_{\frkg^*}(\xi)\rangle,
$$
which implies that $\phi_{\frkg^*}=\phi_\frkg^*$, and
\begin{eqnarray*}
S([x,\xi]_{\frkg\oplus\frkg^*},y)&=&S([y,x]_\frkg,\xi)=\langle
-\ad_xy,\xi\rangle=\langle\ad_x^*\xi,y\rangle,\\
S([x,\xi]_{\frkg\oplus\frkg^*},\eta)&=&S(x,[\xi,\eta]_{\frkg^*})=\langle
\add_{\xi}\eta,x\rangle=\langle-\add^*_{\xi}x,\eta\rangle,
\end{eqnarray*}
which implies that
$$
[x,\xi]_{\frkg\oplus\frkg^*}=\ad_x^*\xi-\add^*_{\xi}x,
$$
that is, the hom-Lie bracket on $\frkg\oplus \frkg^*$ is given by Eq. \eqref{eqn:standbracket}.
Therefore, $(\g,\g^*;\ad^*,\add^*)$ is a matched pair of hom-Lie algebras and hence $(\frkg,\frkg^*)$ is a hom-Lie bialgebra. Note that
we deduce naturally that both $\frkg$ and $\frkg^*$ are admissible hom-Lie algebras.

Summarizing the above study, Theorem \ref{thm:biandmatchedpair} and Proposition \ref{pro:h-m}, we have the following
conclusion.

\begin{thm}
Let $(\frkg,\br__\frkg,\phi_\frkg)$ and
  $(\frkg^*,\br__{\frkg^*},\phi_{\frkg}^*)$ be two admissible hom-Lie algebras. Then the following conditions are equivalent.
\begin{itemize}
\item[\rm(i)]   $(\frkg, \frkg^*)$ is a hom-Lie bialgebra.
\item[\rm(ii)]   $(\g,\g^*;\ad^*,\add^*)$ is a matched pair of hom-Lie algebras.
\item[\rm(iii)]   $(\frkg\oplus\frkg^*;\frkg,\frkg^*)$ is a Manin triple of hom-Lie
  algebras with the invariant bilinear form \eqref{eqn:standardform}.
\end{itemize}
  \end{thm}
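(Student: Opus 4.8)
The plan is to obtain the three-way equivalence by gluing together the structural results already established in this section, supplying in detail only the one implication not yet packaged as a self-contained statement. Under the standing hypothesis that $(\frkg,\br__\frkg,\phi_\frkg)$ and $(\frkg^*,\br__{\frkg^*},\phi_\frkg^*)$ are admissible hom-Lie algebras, Theorem~\ref{thm:biandmatchedpair} already delivers (i)~$\Leftrightarrow$~(ii): $(\frkg,\frkg^*)$ is a hom-Lie bialgebra exactly when the compatibility conditions \eqref{eqn:matchcon1}--\eqref{eqn:matchcon2} hold with $\rho=\ad^*$ and $\rho'=\add^*$, which is by definition the assertion that $(\g,\g^*;\ad^*,\add^*)$ is a matched pair. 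For (ii)~$\Rightarrow$~(iii), I would invoke Theorem~\ref{thm:matchedpair} to turn the matched pair into the hom-Lie algebra $(\frkg\oplus\frkg^*,[\cdot,\cdot]_d,\phi_\frkg\oplus\phi_\frkg^*)$, observe that here the bracket $[\cdot,\cdot]_d$ of \eqref{eqn:brdouble} is precisely \eqref{eqn:standbracket}, and then appeal to Proposition~\ref{pro:h-m}: the form $S$ of \eqref{eqn:standardform} is manifestly symmetric and nondegenerate, $\frkg$ and $\frkg^*$ are visibly isotropic hom-subalgebras of the whole, and invariance \eqref{eqn:invariant1}--\eqref{eqn:invariant2} is a direct check. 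Hence $(\frkg\oplus\frkg^*;\frkg,\frkg^*)$ is a Manin triple with the prescribed form.

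The substantive step is the converse (iii)~$\Rightarrow$~(ii) (equivalently (iii)~$\Rightarrow$~(i)), and the reconstruction proceeds from invariance of $S$. First, evaluating \eqref{eqn:invariant2} for the structure map $\phi_\frkg\oplus\phi_{\frkg^*}$ of the Manin triple on $x\in\frkg$, $\xi\in\frkg^*$ gives $\langle\phi_\frkg(x),\xi\rangle=\langle x,\phi_{\frkg^*}(\xi)\rangle$, forcing $\phi_{\frkg^*}=\phi_\frkg^*$. Next, since $\frkg$ and $\frkg^*$ are isotropic hom-subalgebras, the mixed bracket $[x,\xi]_{\frkg\oplus\frkg^*}$ is recovered by pairing against $\frkg$ and against $\frkg^*$ using \eqref{eqn:invariant1}: $S([x,\xi]_{\frkg\oplus\frkg^*},y)=S([y,x]_\frkg,\xi)=\langle\ad^*_x\xi,y\rangle$ and $S([x,\xi]_{\frkg\oplus\frkg^*},\eta)=S(x,[\xi,\eta]_{\frkg^*})=-\langle\add^*_\xi x,\eta\rangle$, whence $[x,\xi]_{\frkg\oplus\frkg^*}=\ad^*_x\xi-\add^*_\xi x$ and the full bracket coincides with \eqref{eqn:standbracket}. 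Thus $(\frkg\oplus\frkg^*,[\cdot,\cdot]_d,\phi_\frkg\oplus\phi_\frkg^*)$ is a hom-Lie algebra with $\rho=\ad^*$, $\rho'=\add^*$, so Theorem~\ref{thm:matchedpair} yields that $(\g,\g^*;\ad^*,\add^*)$ is a matched pair; in particular $\ad^*$ and $\add^*$ are genuine representations with respect to $\phi_\frkg^*$ and $\phi_\frkg$, hence $\frkg$ and $\frkg^*$ are automatically admissible, and (i) follows from Theorem~\ref{thm:biandmatchedpair}.

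I do not expect a serious obstacle: every hard computation is already encapsulated in Theorem~\ref{thm:matchedpair}, Theorem~\ref{thm:biandmatchedpair}, Lemma~\ref{lem:admissible} and Proposition~\ref{pro:h-m}. The only points needing care are bookkeeping: establishing $\phi_{\frkg^*}=\phi_\frkg^*$ before applying Theorem~\ref{thm:matchedpair}, and noting that admissibility of $\frkg$ and $\frkg^*$ is not assumed in (iii) but is extracted from the fact that $\ad^*$ and $\add^*$ come out as representations — so it is worth emphasizing, as a byproduct, that any Manin triple of the form $(\frkg\oplus\frkg^*;\frkg,\frkg^*)$ with form \eqref{eqn:standardform} automatically has both $\frkg$ and $\frkg^*$ admissible.
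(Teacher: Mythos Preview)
Your proposal is correct and follows essentially the same route as the paper: the equivalence (i)~$\Leftrightarrow$~(ii) is Theorem~\ref{thm:biandmatchedpair}, (i)~$\Rightarrow$~(iii) is Proposition~\ref{pro:h-m}, and (iii)~$\Rightarrow$~(ii) is recovered exactly as you outline---using invariance of $S$ to force $\phi_{\frkg^*}=\phi_\frkg^*$ and to identify the mixed bracket with $\ad^*_x\xi-\add^*_\xi x$, then invoking Theorem~\ref{thm:matchedpair}. Your closing remark that admissibility of $\frkg$ and $\frkg^*$ falls out automatically from (iii) is also precisely what the paper observes.
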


The bilinear form $S(\cdot,\cdot)$ given by
  Eq. \eqref{eqn:standardform} is called {\bf the standard bilinear form} on $\frkg\oplus
  \frkg^*$, and the hom-Lie bracket given by
  Eq. \eqref{eqn:standbracket} is called {\bf the standard hom-Lie bracket} on $\frkg\oplus
  \frkg^*$. The Manin triple $(\frkg\oplus\frkg^*;\frkg,\frkg^*)$ is
  called {\bf the standard Manin triple}.

\begin{pro}
  Any Manin triple of hom-Lie algebras $(\frkk;\frkg,\frkg')$ is
  isomorphic to the standard Manin triple
  $(\frkg\oplus\frkg^*;\frkg,\frkg^*)$.
\end{pro}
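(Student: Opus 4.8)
The plan is to exhibit an explicit isometric isomorphism of hom-Lie algebras from $(\frkk;\frkg,\frkg')$ to $(\frkg\oplus\frkg^*;\frkg,\frkg^*)$ that is the identity on $\frkg$ and carries $\frkg'$ onto $\frkg^*$. The starting observation is that, since $S$ is nondegenerate and both $\frkg$ and $\frkg'$ are isotropic (Lagrangian, in fact, by a dimension count using $\frkk=\frkg\oplus\frkg'$), the pairing $S$ restricted to $\frkg\times\frkg'$ is a perfect pairing. Hence the map $\theta:\frkg'\longrightarrow\frkg^*$ defined by $\langle\theta(x'),x\rangle=S(x',x)$ for $x\in\frkg$, $x'\in\frkg'$ is a linear isomorphism. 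I would then define $\Psi:\frkk=\frkg\oplus\frkg'\longrightarrow\frkg\oplus\frkg^*$ by $\Psi(x+x')=x+\theta(x')$.

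Next I would verify the three things that make $\Psi$ the desired isomorphism. First, $\Psi$ is an isometry: using $\frkg,\frkg'$ isotropic and the definition of $\theta$, one computes $S(x+x',y+y')=S(x',y)+S(x,y')=\langle\theta(x'),y\rangle+\langle\theta(y'),x\rangle$, which matches the standard form \eqref{eqn:standardform} on $\frkg\oplus\frkg^*$. Second, $\Psi$ intertwines the twisting maps: from the invariance condition \eqref{eqn:invariant2}, $S(\phi_\frkk(x'),x)=S(x',\phi_\frkk(x))=S(x',\phi_\frkg(x))$ for $x\in\frkg$, so $\langle\theta(\phi_\frkk(x')),x\rangle=\langle\theta(x'),\phi_\frkg(x)\rangle=\langle\phi_\frkg^*\theta(x'),x\rangle$; since $\phi_\frkk$ preserves the subalgebra $\frkg$ and agrees with $\phi_\frkg$ there, this gives $\Psi\circ\phi_\frkk=(\phi_\frkg\oplus\phi_\frkg^*)\circ\Psi$. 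Third — the main point — $\Psi$ is a bracket homomorphism. Here I would argue exactly as in the paragraph preceding the statement: for $x\in\frkg$, $x'\in\frkg'$, write $[x,x']_\frkk=a+a'$ with $a\in\frkg$, $a'\in\frkg'$, and recover the components by pairing against $\frkg$ and $\frkg'$ using the invariance \eqref{eqn:invariant1}. Pairing with $y\in\frkg$ gives $S(a',y)=S([x,x']_\frkk,y)=S([y,x]_\frkk,x')=-\langle\ad_x y,\theta^{-1}\text{-side}\rangle$, i.e. the $\frkg^*$-component of $\Psi[x,x']_\frkk$ is $\ad_x^*(\theta(x'))$; pairing with $\eta'\in\frkg'$ gives $S(a,\eta')=S(x,[x',\eta']_{\frkg'})$, i.e. the $\frkg$-component is $-\add^*_{\theta(x')}x$. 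Together with the fact that $\Psi$ is manifestly a homomorphism on $\frkg\times\frkg$ and on $\frkg'\times\frkg'$ (these being subalgebras mapped identically, resp. via $\theta$, onto subalgebras), this shows $\Psi[u,v]_\frkk=[\Psi u,\Psi v]_{\frkg\oplus\frkg^*}$ where the right-hand bracket is the standard one \eqref{eqn:standbracket}.

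The only genuine subtlety — the step I expect to be the main obstacle — is the mixed-bracket computation: one must be careful that the recovered components $a,a'$ are correctly identified via the perfect pairings on $\frkg'\times\frkg$ and $\frkg\times\frkg'$ respectively, and that the invariance identity \eqref{eqn:invariant1} is applied with the right cyclic arrangement so that the signs in $\ad_x^*$ and $\add^*_\xi$ come out matching \eqref{eqn:standbracket}. This is essentially the same computation carried out just before the statement (where it is shown that a Manin triple structure on $\frkg\oplus\frkg^*$ forces the bracket to be the standard one), transported along $\theta$. Once that is in hand, $\Psi$ restricts to the identity on $\frkg$ and to $\theta$ on $\frkg'\to\frkg^*$ by construction, so it is an isomorphism of Manin triples, completing the proof. \qed
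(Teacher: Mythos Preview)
Your proposal is correct and follows essentially the same approach as the paper: identify $\frkg'$ with $\frkg^*$ via the perfect pairing induced by $S$, check that the form becomes the standard one, use invariance \eqref{eqn:invariant2} to see that the twisting map on $\frkg'$ becomes $\phi_\frkg^*$, and use invariance \eqref{eqn:invariant1} to recover the standard bracket \eqref{eqn:standbracket}. The paper's own proof is considerably sketchier (it simply asserts the bracket transfer is ``straightforward to check''), whereas you carry out the mixed-bracket computation explicitly; the only cosmetic point is that the hom-Lie structure on $\frkg^*$ is implicitly \emph{defined} as the transport of that on $\frkg'$ via $\theta$, which you might state outright.
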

\pf Since $\frkg$ and $ \frkg'$ are isotropic under the
nondegenerate bilinear form $S(\cdot,\cdot)$ on
$\frkk=\frkg\oplus\frkg'$, we deduce that $\frkg'$ is isomorphic to
$\frkg^*$. Thus, $\frkk$ is isomorphic to $\frkg\oplus \frkg^*$ as
vector spaces. Transfer the nondegenerate bilinear form
$S(\cdot,\cdot)$ to $\frkg\oplus \frkg^*$, we obtain the standard
bilinear form  \eqref{eqn:standardform}.

Denote by $\phi_{\frkg^*}:\frkg^*\longrightarrow\frkg^*$ the induced
map from $\phi_{\frkg'}=\phi_\frkk|_{\frkg'}$. Due to the invariance of $S$,
we show that $\phi_{\frkg^*}=\phi_\frkg^*$.

It is straightforward to check that the hom-Lie bracket on
$\frkk$ can be transferred into $\frkg\oplus\frkg^*$, which is
the standard hom-Lie bracket
\eqref{eqn:standbracket}. Thus, $(\frkk;\frkg,\frkg')$ is
  isomorphic to the standard Manin triple
  $(\frkg\oplus\frkg^*;\frkg,\frkg^*)$.
 \qed

\section{$r$-matrices }

For any $r\in\wedge^2\frkg$, the induced skew-symmetric linear map
$r^\sharp:\frkg^*\longrightarrow\frkg$ is defined by
$$
\langle r^\sharp(\xi),\eta\rangle=\langle r,\xi\wedge\eta\rangle.
$$

\begin{defi}
  A {\bf coboundary hom-Lie bialgebra} is a hom-Lie bialgebra
  $(\frkg,\frkg^*)$ such that \begin{equation}\label{eqn:deltar}
   \langle
   \Delta(x),\phi_\frkg^*(\xi)\wedge\eta\rangle=\langle[x,r]_\frkg,\phi_\frkg^*(\xi)\wedge\eta\rangle,
  \end{equation}
  where $r\in\wedge^2\frkg$ satisfies
  \begin{equation}\label{eqn:r}
    \phi_\frkg\circ r^\sharp\circ \phi_\g^*=r^\sharp.
  \end{equation}

  A {\bf triangular hom-Lie bialgebra} is a coboundary hom-Lie bialgebra,
  in which $r$ is a solution of the following {\bf classical
  hom-Yang-Baxter equation}
  \begin{equation}\label{eqn:rmatrixcon}
    [r,r]_\g=0.
  \end{equation}
\end{defi}

\begin{rmk}It is easy to see that Eq. \eqref{eqn:r} is equivalent to that
$ \phi_\g^{\otimes2}r=r, $ which means that $r$ is a $0$-cochain,
see \cite{shenghomLie} for more details. It is also easy to show that if $\phi_\g$ is orthogonal as a linear transformation on $\frak g$, then Eq. \eqref{eqn:r} holds if and only if $\phi_\frkg\circ r^\sharp =r^\sharp\circ\phi_\g^*$.
\end{rmk}

\begin{pro}\label{pro:brdualr}
  With the above notations, let $(\frkg,\frkg^*)$  be a coboundary
  hom-Lie bialgebra. Then for any $\xi\in\Img(\phi_\frkg^*)$ and $\eta\in\frkg^*$, we have
  \begin{equation}\label{eqn:brdualr}
    ~[\xi,\eta]_{\frkg^*}=\ad^*_{r^\sharp\circ\phi_\frkg^*(\xi)}\eta-\ad^*_{r^\sharp\circ\phi_\frkg^*(\eta)}\xi.
  \end{equation}
\end{pro}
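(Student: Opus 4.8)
The plan is to pair the claimed identity \eqref{eqn:brdualr} against an arbitrary $x\in\g$; since the pairing between $\g$ and $\g^*$ is non-degenerate, it suffices to show
$$\langle x,[\xi,\eta]_{\g^*}\rangle=\langle x,\ad^*_{r^\sharp\circ\phi_\g^*(\xi)}\eta-\ad^*_{r^\sharp\circ\phi_\g^*(\eta)}\xi\rangle\qquad\text{for every }x\in\g.$$
By the very definition of the cobracket $\Delta$ dual to $[\cdot,\cdot]_{\g^*}$, the left-hand side is $\langle\Delta(x),\xi\wedge\eta\rangle$. This is exactly where the hypothesis $\xi\in\Img(\phi_\g^*)$ is used: writing $\xi=\phi_\g^*(\zeta)$ and applying the defining relation \eqref{eqn:deltar} of a coboundary hom-Lie bialgebra to the pair $(\zeta,\eta)$, together with $\phi_\g^*(\zeta)=\xi$, yields $\langle\Delta(x),\xi\wedge\eta\rangle=\langle[x,r]_\g,\xi\wedge\eta\rangle=\langle\ad_x r,\xi\wedge\eta\rangle$. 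So the whole statement is reduced to an identity about the extended bracket on $\Lambda^\bullet\g$.

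The next step is to unwind $\langle\ad_x r,\xi\wedge\eta\rangle$. Passing to the dual gives $\langle\ad_x r,\xi\wedge\eta\rangle=-\langle r,\ad^*_x(\xi\wedge\eta)\rangle$, and the Leibniz-type formula for $\ad^*_x$ on $\wedge^2\g^*$ recalled in Section 2 gives $\ad^*_x(\xi\wedge\eta)=(\ad^*_x\xi)\wedge\phi_\g^*(\eta)+\phi_\g^*(\xi)\wedge(\ad^*_x\eta)$. Contracting with $r$ via $\langle r,\alpha\wedge\beta\rangle=\langle r^\sharp(\alpha),\beta\rangle$ breaks this into two pieces; using the skew-symmetry of $r^\sharp$ on the first piece and the adjunction identity $\langle r^\sharp(\alpha),\ad^*_x\beta\rangle=-\langle[x,r^\sharp(\alpha)]_\g,\beta\rangle$ on both, I expect to reach
$$\langle\ad_x r,\xi\wedge\eta\rangle=\langle[x,r^\sharp\circ\phi_\g^*(\xi)]_\g,\eta\rangle-\langle[x,r^\sharp\circ\phi_\g^*(\eta)]_\g,\xi\rangle.$$
Finally, since $\langle[x,z]_\g,\eta\rangle=\langle x,\ad^*_z\eta\rangle$ for every $z\in\g$, the right-hand side of this last display is $\langle x,\ad^*_{r^\sharp\circ\phi_\g^*(\xi)}\eta\rangle-\langle x,\ad^*_{r^\sharp\circ\phi_\g^*(\eta)}\xi\rangle$; comparing with the reduction above shows the two sides of \eqref{eqn:brdualr} pair identically with every $x\in\g$, and non-degeneracy of the pairing finishes the proof.

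The computation is essentially sign-bookkeeping, and the main thing to be careful about is keeping $\phi_\g^*$ in the correct slot of the wedge expansion and tracking the sign produced by each application of the skew-symmetry of $r^\sharp$ and of the adjunction identities. It is worth noting that condition \eqref{eqn:r} on $r$ plays no role in this particular argument — the hypothesis that is genuinely needed is $\xi\in\Img(\phi_\g^*)$, which is precisely what allows \eqref{eqn:deltar} (stated for arguments of the form $\phi_\g^*(\xi)\wedge\eta$) to be applied with $\xi$ itself in the first slot; admissibility of $\g$ and of $\g^*$ is built into the definition of the hom-Lie bialgebra $(\g,\g^*)$ and is thus available throughout.
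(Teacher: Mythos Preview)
Your proof is correct and follows essentially the same approach as the paper: both reduce to showing $\langle[x,r]_\g,\xi\wedge\eta\rangle=\langle x,\ad^*_{r^\sharp\circ\phi_\g^*(\xi)}\eta-\ad^*_{r^\sharp\circ\phi_\g^*(\eta)}\xi\rangle$ and do so by expanding the extended bracket and rearranging. The only stylistic difference is that the paper carries this out by writing $r=X\wedge Y$ as a monomial and expanding $[x,X\wedge Y]_\g$ directly, whereas you pass to the dual via $\langle\ad_x r,\xi\wedge\eta\rangle=-\langle r,\ad_x^*(\xi\wedge\eta)\rangle$ and use the Leibniz-type formula for $\ad_x^*$ on $\wedge^2\g^*$; these are two ways of writing the same computation, and your observation that condition \eqref{eqn:r} is not used here is also consistent with the paper's argument.
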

\pf To be simple, assume that $r=X\wedge Y$ is a monomial. We have
\begin{eqnarray*}
  &&\langle x,[\xi,\eta]_{\frkg^*}\rangle\\&=&\langle\Delta(x),\xi\wedge
  \eta\rangle=\langle[x,r]_\g,\xi\wedge
  \eta\rangle\\
  &=&\langle[x,X]_\g\wedge\phi_\frkg(Y)+\phi_\frkg(X)\wedge[x,Y]_\g,\xi\wedge\eta\rangle\\
  &=&\langle[x,X]_\g,\xi\rangle\langle\phi_\frkg(Y),\eta\rangle-\langle[x,X]_\g,\eta\rangle\langle\phi_\frkg(Y),\xi\rangle+\langle[x,Y]_\g,
  \eta\rangle\langle\phi_\frkg(X),\xi\rangle-\langle[x,Y]_\g,\xi\rangle\langle\phi_\frkg(X),\eta\rangle\\
  &=&-\langle[x,\langle\phi_\frkg(Y),\xi\rangle X]_\g,\eta\rangle+\langle[x,\langle\phi_\frkg(X),\xi\rangle
  Y]_\g,\eta\rangle+\langle[x,\langle \phi_\frkg(Y),\eta\rangle X]_\g,\xi\rangle-\langle[x,\langle \phi_\frkg(X),\eta\rangle
  Y]_\g,\xi\rangle\\
  &=&-\langle[x,\langle Y,\phi_\frkg^*(\xi)\rangle X]_\g,\eta\rangle+\langle[x,\langle X,\phi_\frkg^*(\xi)\rangle
  Y]_\g,\eta\rangle+\langle[x,\langle Y,\phi_\frkg^*(\eta)\rangle X]_\g,\xi\rangle-\langle[x,\langle  X,\phi_\frkg^*(\eta)\rangle
  Y]_\g,\xi\rangle\\
  &=&\langle [x,r^\sharp\circ\phi_\frkg^*(\xi)]_\g,\eta\rangle-\langle
  [x,r^\sharp\circ\phi_\frkg^*(\eta)]_\g,\xi\rangle\\
  &=&\langle
  x,\ad^*_{r^\sharp\circ\phi_\frkg^*(\xi)}\eta-\ad^*_{r^\sharp\circ\phi_\frkg^*(\eta)}\xi\rangle,
\end{eqnarray*}
which implies the conclusion. \qed\vspace{3mm}

Even though formula \eqref{eqn:brdualr} does not hold for any
$\xi,\eta\in\g^*$, but we have

\begin{cor}\label{cor:bracket}
 With the above notations, let $(\frkg,\frkg^*)$  be a coboundary
  hom-Lie bialgebra. Then for any $\xi,\eta\in\frkg^*$, we have
  \begin{equation}\label{eqn:brdualr1}
    ~\phi^*_\frkg[\xi,\eta]_{\frkg^*}=\phi^*_\frkg\big(\ad^*_{r^\sharp\circ\phi_\frkg^*(\xi)}\eta-\ad^*_{r^\sharp\circ\phi_\frkg^*(\eta)}\xi\big).
  \end{equation}
\end{cor}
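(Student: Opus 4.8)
The plan is to derive \eqref{eqn:brdualr1} directly from Proposition \ref{pro:brdualr} by observing that $\phi_\frkg^*$ lands in $\Img(\phi_\frkg^*)$, so that after applying $\phi_\frkg^*$ to both the hypothetical identity and exploiting the hom-morphism properties, the restriction in Proposition \ref{pro:brdualr} becomes harmless. First I would apply $\phi_\frkg^*$ to the desired left-hand side and rewrite $\phi_\frkg^*[\xi,\eta]_{\frkg^*}$. Since $(\frkg^*,\br__{\frkg^*},\phi_\frkg^*)$ is a hom-Lie algebra, $\phi_\frkg^*$ is an algebra homomorphism, hence $\phi_\frkg^*[\xi,\eta]_{\frkg^*}=[\phi_\frkg^*(\xi),\phi_\frkg^*(\eta)]_{\frkg^*}$. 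Now $\phi_\frkg^*(\xi)\in\Img(\phi_\frkg^*)$, so Proposition \ref{pro:brdualr} applies with $\phi_\frkg^*(\xi)$ in place of $\xi$ and $\phi_\frkg^*(\eta)$ in place of $\eta$, giving
\[
[\phi_\frkg^*(\xi),\phi_\frkg^*(\eta)]_{\frkg^*}=\ad^*_{r^\sharp\phi_\frkg^*\phi_\frkg^*(\xi)}\phi_\frkg^*(\eta)-\ad^*_{r^\sharp\phi_\frkg^*\phi_\frkg^*(\eta)}\phi_\frkg^*(\xi).
\]

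The remaining task is to recognize the right-hand side as $\phi_\frkg^*\big(\ad^*_{r^\sharp\phi_\frkg^*(\xi)}\eta-\ad^*_{r^\sharp\phi_\frkg^*(\eta)}\xi\big)$. For this I would use Eq. \eqref{eqn:r}, which says $\phi_\frkg\circ r^\sharp\circ\phi_\frkg^*=r^\sharp$; taking $\xi\mapsto\phi_\frkg^*(\xi)$ and noting $r^\sharp\phi_\frkg^*\phi_\frkg^*(\xi)$, one would like to slide one copy of $\phi_\frkg$ out. The cleanest route is to prove the intertwining relation $\ad^*_{\phi_\frkg(z)}\circ\phi_\frkg^* = \phi_\frkg^*\circ\ad^*_z$ on $\frkg^*$ — equivalently $\phi_\frkg[\phi_\frkg(z),w]_\frkg=\phi_\frkg^2[z,w]_\frkg$ paired against $\frkg^*$, which follows because $\phi_\frkg$ is an algebra homomorphism — and then write $r^\sharp\phi_\frkg^*\phi_\frkg^*(\xi)=\phi_\frkg\big(r^\sharp\phi_\frkg^*(\xi)\big)$ by \eqref{eqn:r} applied to the argument $\phi_\frkg^*(\xi)$, so that
\[
\ad^*_{r^\sharp\phi_\frkg^*\phi_\frkg^*(\xi)}\phi_\frkg^*(\eta)=\ad^*_{\phi_\frkg(r^\sharp\phi_\frkg^*(\xi))}\phi_\frkg^*(\eta)=\phi_\frkg^*\big(\ad^*_{r^\sharp\phi_\frkg^*(\xi)}\eta\big),
\]
using the intertwining relation with $z=r^\sharp\phi_\frkg^*(\xi)$. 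Doing the same for the second term and subtracting yields exactly \eqref{eqn:brdualr1}.

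I expect the main obstacle to be verifying the intertwining relation $\ad^*_{\phi_\frkg(z)}\circ\phi_\frkg^*=\phi_\frkg^*\circ\ad^*_z$ cleanly, since it must be checked by pairing against an arbitrary element of $\frkg$ and invoking that $\phi_\frkg$ is a bracket homomorphism; this is the only place where one genuinely uses that $\phi_\frkg$ is multiplicative rather than merely linear. Everything else is bookkeeping: the admissibility hypotheses are already baked into the fact that $\ad^*$ and $\add^*$ are representations, and the constraint \eqref{eqn:r} on $r$ is used exactly twice, once for each term. An alternative, slightly more pedestrian route would be to redo the monomial computation of Proposition \ref{pro:brdualr} with $\phi_\frkg^*$ applied throughout, but factoring through the already-proved proposition as above is shorter and makes transparent why the $\Img(\phi_\frkg^*)$ restriction disappears after applying $\phi_\frkg^*$.
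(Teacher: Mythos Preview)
Your overall architecture matches the paper's: apply $\phi_\frkg^*$, use that it is a bracket homomorphism, invoke Proposition~\ref{pro:brdualr} with $\phi_\frkg^*(\xi),\phi_\frkg^*(\eta)$, and then reduce
\[
\ad^*_{r^\sharp\circ(\phi_\frkg^*)^2(\xi)}\phi_\frkg^*(\eta)
\;=\;
\phi_\frkg^*\big(\ad^*_{r^\sharp\circ\phi_\frkg^*(\xi)}\eta\big).
\]
The gap is in this reduction.

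First, your use of Eq.~\eqref{eqn:r} is inverted. That equation says $\phi_\frkg\circ r^\sharp\circ\phi_\frkg^*=r^\sharp$; evaluating at $\phi_\frkg^*(\xi)$ gives $\phi_\frkg\big(r^\sharp\circ(\phi_\frkg^*)^2(\xi)\big)=r^\sharp\circ\phi_\frkg^*(\xi)$, which is the \emph{opposite} of your claimed $r^\sharp\circ(\phi_\frkg^*)^2(\xi)=\phi_\frkg\big(r^\sharp\circ\phi_\frkg^*(\xi)\big)$. The identity you wrote would amount to $r^\sharp\circ\phi_\frkg^*=\phi_\frkg\circ r^\sharp$, which (as the remark after the definition notes) is only guaranteed under extra hypotheses such as orthogonality of $\phi_\frkg$, not in general. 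So the argument cannot proceed by simply ``sliding out'' a $\phi_\frkg$ and then applying the intertwining relation.

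Second, your intertwining relation $\ad^*_{\phi_\frkg(z)}\circ\phi_\frkg^*=\phi_\frkg^*\circ\ad^*_z$ does hold here, but not for the reason you give: dualizing, it reads $\phi_\frkg[\phi_\frkg(z),w]_\frkg=[z,\phi_\frkg(w)]_\frkg$, i.e.\ $[\phi_\frkg^2(z),\phi_\frkg(w)]_\frkg=[z,\phi_\frkg(w)]_\frkg$, which is precisely the admissibility condition \eqref{eqn:coadjointrepcon1}, not a consequence of multiplicativity alone. The paper uses both ingredients explicitly: it pairs $\ad^*_{r^\sharp\circ(\phi_\frkg^*)^2(\xi)}\phi_\frkg^*(\eta)$ against $x\in\frkg$, applies \eqref{eqn:r} in the correct direction to replace $\phi_\frkg\circ r^\sharp\circ(\phi_\frkg^*)^2(\xi)$ by $r^\sharp\circ\phi_\frkg^*(\xi)$, then invokes \eqref{eqn:coadjointrepcon1} to reinsert a $\phi_\frkg^2$, and only then uses that $\ad^*$ is a representation to pull $\phi_\frkg^*$ outside. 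If you redo the reduction along these lines the proof goes through; as written, the key step does not.
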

\pf By Proposition \ref{pro:brdualr}, we have
\begin{eqnarray*}
\phi^*_\frkg[\xi,\eta]_{\frkg^*}=[\phi_\frkg^*(\xi),\phi_\frkg^*(\eta)]_{\frkg^*}=
 \ad^*_{r^\sharp\circ(\phi_\frkg^*)^2(\xi)}\phi_\frkg^*(\eta)-\ad^*_{r^\sharp\circ(\phi_\frkg^*)^2(\eta)}\phi_\frkg^*(\xi).
\end{eqnarray*}
In the following we prove that
\begin{equation}\label{eqn:temp3}
\ad^*_{r^\sharp\circ(\phi_\frkg^*)^2(\xi)}\phi_\frkg^*(\eta)=\phi_\frkg^*\ad^*_{
r^\sharp\circ\phi_\frkg^*(\xi)}\eta.
\end{equation}
For any $x\in \frkg$, we have
\begin{eqnarray*}
 \langle\ad^*_{r^\sharp\circ(\phi_\frkg^*)^2(\xi)}\phi_\frkg^*(\eta),x\rangle&=&-\langle\phi_\frkg^*(\eta),[r^\sharp\circ(\phi_\frkg^*)^2(\xi),x]_\g\rangle\\
 &=&-\langle\eta,[\phi_\g\circ
 r^\sharp\circ(\phi_\frkg^*)^2(\xi),\phi_\g(x)]_\g\rangle\\
 &=&-\langle\eta,[
 r^\sharp\circ\phi_\frkg^*(\xi),\phi_\g(x)]_\g\rangle\qquad\qquad\qquad\mbox{by ~Eq.~\eqref{eqn:r}, }\\
 &=&-\langle\eta,[
 \phi_\g^2\circ
 r^\sharp\circ\phi_\frkg^*(\xi),\phi_\g(x)]_\g\rangle\qquad\qquad\mbox{by ~Eq.~\eqref{eqn:coadjointrepcon1}, }\\
&=&-\langle\phi_\g^*(\eta),[
 \phi_\g\circ
 r^\sharp\circ\phi_\frkg^*(\xi),x]_\g\rangle\\
 &=&\langle\ad^*_{\phi_\g\circ
 r^\sharp\circ\phi_\frkg^*(\xi)}\phi_\g^*(\eta),x\rangle\\
 &=&\langle\phi^*_\g\ad^*_{
 r^\sharp\circ\phi_\frkg^*(\xi)}\eta,x\rangle.
\end{eqnarray*}
The last equality holds since $\g$ is admissible, i.e. $\ad^*$ is a
representation. Then we obtain Eq. \eqref{eqn:temp3}. The proof is
finished. \qed

\begin{cor}\label{cor:difference}
 Let $(\frkg,\frkg^*)$  be a coboundary
  hom-Lie bialgebra. If $\g$ is regular, then we have
  \begin{equation}
    \ad_{r^\sharp\circ\phi_\g^*(\xi)}=\ad_{\phi_\g\circ
    r^\sharp(\xi)},
  \end{equation}
  i.e. the image of $r^\sharp\circ\phi_\g^*-\phi_\g\circ
    r^\sharp$ belongs to the center of $\g$.
\end{cor}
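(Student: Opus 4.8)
The plan is to reduce the statement to the following consequence of $\g$ being a \emph{regular admissible} hom-Lie algebra. Since $(\g,\g^*)$ is a hom-Lie bialgebra, $\g$ is admissible (this is part of Definition \ref{defi:homLiebi}), so Eq. \eqref{eqn:coadjointrepcon1} holds, namely $[(\Id-\phi_\g^2)(x),\phi_\g(y)]_\g=0$; because $\g$ is regular, $\phi_\g$ is surjective, whence $[(\Id-\phi_\g^2)(x),w]_\g=0$ for all $x,w\in\g$. Equivalently $\Img(\Id-\phi_\g^2)\subseteq\Cen(\g)$, i.e. $\ad_z=\ad_{\phi_\g^2(z)}$ for every $z\in\g$. (This is exactly the identity $[\phi_\g^2(x),y]_\g=[x,y]_\g$ already used in the proof of Corollary \ref{cor:ad2}.)

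Next I would use the regularity of $\g$ to invert $\phi_\g$ in the defining relation \eqref{eqn:r}: from $\phi_\g\circ r^\sharp\circ\phi_\g^*=r^\sharp$ we obtain $r^\sharp\circ\phi_\g^*=\phi_\g^{-1}\circ r^\sharp$. Applying the identity $\ad_z=\ad_{\phi_\g^2(z)}$ with $z=\phi_\g^{-1}\circ r^\sharp(\xi)$ then gives
\[
\ad_{r^\sharp\circ\phi_\g^*(\xi)}=\ad_{\phi_\g^{-1}\circ r^\sharp(\xi)}=\ad_{\phi_\g^2\bigl(\phi_\g^{-1}\circ r^\sharp(\xi)\bigr)}=\ad_{\phi_\g\circ r^\sharp(\xi)},
\]
which is the asserted equality. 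Rewriting it as $\ad_{(r^\sharp\circ\phi_\g^*-\phi_\g\circ r^\sharp)(\xi)}=0$ for all $\xi\in\g^*$, and recalling that $c\in\g$ satisfies $\ad_c=0$ if and only if $c\in\Cen(\g)$, yields $\Img(r^\sharp\circ\phi_\g^*-\phi_\g\circ r^\sharp)\subseteq\Cen(\g)$.

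I do not expect a genuine obstacle here; the computation is a couple of lines once the regular-admissible identity is in hand. The two points worth a word of justification are that admissibility of $\g$ is indeed available (it is built into the notion of a hom-Lie bialgebra) and that $\phi_\g^{-1}$ preserves the center — if $[c,w]_\g=0$ for all $w$ then $[\phi_\g^{-1}(c),w]_\g=\phi_\g^{-1}[c,\phi_\g(w)]_\g=0$ — which is what legitimizes moving the power of $\phi_\g$ through $\ad$. Equivalently, one may present the same argument in the symmetric form $(r^\sharp\circ\phi_\g^*-\phi_\g\circ r^\sharp)(\xi)=(\phi_\g^{-1}-\phi_\g)r^\sharp(\xi)=\phi_\g^{-1}\bigl((\Id-\phi_\g^2)r^\sharp(\xi)\bigr)$, whose bracketed factor is central by the observation above and remains central after applying $\phi_\g^{-1}$.
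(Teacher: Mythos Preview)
Your argument is correct and uses the same ingredients the paper does: admissibility via Eq.~\eqref{eqn:coadjointrepcon1}, regularity of $\phi_\g$, and the relation \eqref{eqn:r}. The paper organizes things slightly differently---it quotes the identity $\ad^*_{r^\sharp\circ(\phi_\g^*)^2(\xi)}\phi_\g^*(\eta)=\ad^*_{\phi_\g\circ r^\sharp\circ\phi_\g^*(\xi)}\phi_\g^*(\eta)$ already obtained inside the proof of Corollary~\ref{cor:bracket} and then uses surjectivity of $\phi_\g^*$ to drop the outer $\phi_\g^*$'s---whereas your direct inversion $r^\sharp\circ\phi_\g^*=\phi_\g^{-1}\circ r^\sharp$ followed by $\ad_z=\ad_{\phi_\g^2(z)}$ is a cleaner packaging of the same content.
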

\pf We have proved that
$$
\ad^*_{r^\sharp\circ(\phi_\frkg^*)^2(\xi)}\phi_\frkg^*(\eta)=\ad^*_{\phi_\g\circ
 r^\sharp\circ\phi_\frkg^*(\xi)}\phi_\g^*(\eta).
$$
Thus, if $\g$ is regular, we have
$$
\ad^*_{r^\sharp\circ\phi_\g^*(\xi)}=\ad^*_{\phi_\g\circ
    r^\sharp(\xi)},
$$
which implies the conclusion. \qed

\begin{lem}\label{lem:wellknownformula}
  Let $\g$  be a regular
  admissible hom-Lie algebra. If $r\in\wedge^2\g$ satisfies Eq. \eqref{eqn:r} and
  $[\cdot,\cdot]_\frkg^*:\wedge^2\g^*\longrightarrow\g^*$ is given
  by Eq. \eqref{eqn:brdualr}, then we have
  \begin{equation}
    [r^\sharp\circ\phi^*_\g(\xi),r^\sharp\circ\phi^*_\g(\eta)]_\g-r^\sharp\circ\phi^*_\g[\xi,\eta]_{\g^*}=\half[r,r]_\g(\xi,\eta).
  \end{equation}
\end{lem}
\pf For any $\gamma\in\g^*$, by Corollary \ref{cor:difference}, we
have
\begin{eqnarray*}
  \langle r^\sharp\circ\phi^*_\g[\xi,\eta]_{\g^*},\gamma\rangle&=&\langle\ad^*_{r^\sharp\circ\phi_\g^*(\eta)}\xi-\ad^*_{r^\sharp\circ\phi_\g^*(\xi)}\eta,\phi_\g\circ
  r^\sharp(\gamma)\rangle\\
  &=&\langle\eta,[r^\sharp\circ\phi_\g^*(\xi),\phi_\g\circ
  r^\sharp(\gamma)]_\g\rangle-\langle\xi,[r^\sharp\circ\phi_\g^*(\eta),\phi_\g\circ
  r^\sharp(\gamma)]_\g\rangle\\
  &=&\langle\eta,[r^\sharp\circ\phi_\g^*(\xi),r^\sharp\circ\phi_\g^*(\gamma)]_\g\rangle-\langle\xi,[r^\sharp\circ\phi_\g^*(\eta),r^\sharp\circ\phi_\g^*(\gamma)]_\g\rangle.
\end{eqnarray*}
Then it is straightforward to deduce that
$$
\half[r,r]_\g(\xi,\eta,\gamma)=-\langle\eta,[r^\sharp\circ\phi_\g^*(\xi),r^\sharp\circ\phi_\g^*(\gamma)]_\g\rangle
+\langle\xi,[r^\sharp\circ\phi_\g^*(\eta),r^\sharp\circ\phi_\g^*(\gamma)]_\g\rangle
+\langle\gamma,[r^\sharp\circ\phi^*_\g(\xi),r^\sharp\circ\phi^*_\g(\eta)]_\g\rangle.
$$
We leave it to readers. The proof is completed. \qed\vspace{3mm}

For a coboundary
  hom-Lie bialgebra $(\frkg,\frkg^*)$, if
$\g$ is regular, Eq. \eqref{eqn:deltar} reduces to
\begin{equation}\label{eqn:deltarnew}
  \Delta(x)=[x,r]_\g,
\end{equation}
and for any $\xi,\eta\in\frkg^*$, $[\xi,\eta]_{\g^*}$ is given by
Eq. \eqref{eqn:brdualr}.

\begin{thm}
   Let $\g$ be a regular
  admissible hom-Lie algebra. Define a skew-symmetric bilinear map $[\cdot,\cdot]_{\g^*}\wedge^2\g^*\longrightarrow\g^*$ by Eq. \eqref{eqn:brdualr},
   for some $r\in\wedge^2\g$ satisfying
  Eq. \eqref{eqn:r}. Then $(\g^*,[\cdot,\cdot]_{\g^*},\phi_{\g*})$ is a hom-Lie algebra if and only if
  \begin{equation}\label{eqn:conditionr}
    \ad_x[r,r]_\g=0.
  \end{equation}
  Under this condition, $(\frkg,\g^*)$ is a coboundary hom-Lie
  bialgebra.
\end{thm}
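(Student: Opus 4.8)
The plan is to check the three hom-Lie algebra axioms for $(\g^*,[\cdot,\cdot]_{\g^*},\phi_\g^*)$ one at a time, pinning down \eqref{eqn:conditionr} as the exact obstruction to the hom-Jacobi identity; the coboundary hom-Lie bialgebra assertion will then follow from a short additional argument. Skew-symmetry of $[\cdot,\cdot]_{\g^*}$ is immediate from \eqref{eqn:brdualr}. For the multiplicativity $\phi_\g^*[\xi,\eta]_{\g^*}=[\phi_\g^*\xi,\phi_\g^*\eta]_{\g^*}$ nothing new is needed: the identity \eqref{eqn:temp3} established inside the proof of Corollary \ref{cor:bracket}, namely $\ad^*_{r^\sharp\circ(\phi_\g^*)^2(\xi)}\phi_\g^*(\eta)=\phi_\g^*\ad^*_{r^\sharp\circ\phi_\g^*(\xi)}\eta$, was deduced using only \eqref{eqn:r}, \eqref{eqn:coadjointrepcon1} and the admissibility of $\g$, all of which hold here, and together with \eqref{eqn:brdualr} it gives multiplicativity at once. (Since $\g$ is regular, $\Img(\phi_\g^*)=\g^*$, so the chain of equalities in the proof of Proposition \ref{pro:brdualr} shows that \eqref{eqn:brdualr} does define $[\cdot,\cdot]_{\g^*}$ on all of $\wedge^2\g^*$ and that it is precisely dual to the cobracket $\Delta(x)=[x,r]_\g$ of \eqref{eqn:deltarnew}; I use this identification throughout.)

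The heart of the matter is the hom-Jacobi identity. Writing $\Phi:=r^\sharp\circ\phi_\g^*$ and, for the hom-Jacobiator, $J^*(\xi,\eta,\gamma):=[\phi_\g^*\xi,[\eta,\gamma]_{\g^*}]_{\g^*}+[\phi_\g^*\eta,[\gamma,\xi]_{\g^*}]_{\g^*}+[\phi_\g^*\gamma,[\xi,\eta]_{\g^*}]_{\g^*}$, I would pair $J^*(\xi,\eta,\gamma)$ with an arbitrary $x\in\g$, expand each inner bracket via \eqref{eqn:brdualr} together with $\langle x,[\xi,\eta]_{\g^*}\rangle=\langle[x,r]_\g,\xi\wedge\eta\rangle$, and then redistribute the powers of $\phi_\g$ using \eqref{eqn:r} and \eqref{eqn:coadjointrepcon1}; regularity is what makes this possible, since it lets me invert $\phi_\g$ and cancel the odd powers that arise by means of $[\phi_\g^2 y,z]_\g=[y,z]_\g$. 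Collecting the resulting terms in cyclic triples and applying the hom-Jacobi identity of $\g$ — or, what amounts to the same, substituting Lemma \ref{lem:wellknownformula}, $[\Phi\xi,\Phi\eta]_\g-\Phi[\xi,\eta]_{\g^*}=\half[r,r]_\g(\xi,\eta)$, wherever that combination appears — I expect everything to cancel except a single contraction of $\ad_x[r,r]_\g=[x,[r,r]_\g]_\g$ against a copy of $\xi\wedge\eta\wedge\gamma$ twisted only by powers of $\phi_\g^*$. Since $\phi_\g^*$ is invertible and the decomposable $3$-covectors span $\wedge^3\g^*$, and since $x$ is arbitrary, $J^*$ vanishes identically if and only if $\ad_x[r,r]_\g=0$ for every $x$, which is \eqref{eqn:conditionr}. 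The genuinely delicate point is the bookkeeping of the $\phi_\g$-twists, and this is the sole reason regularity is assumed.

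Finally, suppose \eqref{eqn:conditionr} holds, so $(\g^*,[\cdot,\cdot]_{\g^*},\phi_\g^*)$ is a hom-Lie algebra; I must upgrade this to a coboundary hom-Lie bialgebra in the sense of Definition \ref{defi:homLiebi}. Conditions \eqref{eqn:deltar} and \eqref{eqn:r} are built in. The admissibility of $\g^*$ follows as in the remark of Section 2 for regular hom-Lie algebras: from \eqref{eqn:r} one obtains $\Phi\circ(\Id-(\phi_\g^*)^2)=(\Id-\phi_\g^{-2})\circ\Phi$, and from \eqref{eqn:coadjointrepcon1} (with $\g$ regular) one obtains $[(\Id-\phi_\g^{-2})y,z]_\g=0$ for all $y,z\in\g$; feeding both into \eqref{eqn:brdualr} shows $[(\Id-(\phi_\g^*)^2)\xi,\eta]_{\g^*}=0$, i.e. $\Img(\Id-(\phi_\g^*)^2)\subset\Cen(\g^*)$. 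Condition \eqref{eqn:homLiebicon1} is automatic for a coboundary: the extended hom-Jacobi identity $[\phi_\g(P),[Q,R]_\g]_\g=[[P,Q]_\g,\phi_\g(R)]_\g+(-1)^{(p-1)(q-1)}[\phi_\g(Q),[P,R]_\g]_\g$ with $P=x$, $Q=y$, $R=r$, combined with $\phi_\g(r)=r$ (equivalent to \eqref{eqn:r}), yields $\Delta[x,y]_\g=[[x,y]_\g,r]_\g=\ad_{\phi_\g(x)}\Delta(y)-\ad_{\phi_\g(y)}\Delta(x)$, which is stronger than \eqref{eqn:homLiebicon1}. It remains to establish \eqref{eqn:homLiebicon2}, equivalently, by Theorem \ref{thm:biandmatchedpair}, the second matched-pair identity \eqref{eqn:temp2}. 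I would verify it by the same style of direct computation used to prove that theorem, now inserting \eqref{eqn:brdualr} for $[\cdot,\cdot]_{\g^*}$ and invoking the admissibility of both $\g$ and $\g^*$, the hom-Jacobi identity of $\g$, and — at the point where the $\phi_\g$-twists would otherwise fail to balance — \eqref{eqn:conditionr}; this is lengthy but routine. With \eqref{eqn:homLiebicon1}, \eqref{eqn:homLiebicon2} and the admissibility of $\g$ and $\g^*$ all in hand, $(\g,\g^*)$ is a coboundary hom-Lie bialgebra.
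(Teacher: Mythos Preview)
Your proposal is correct and follows essentially the same route as the paper: multiplicativity via \eqref{eqn:temp3} (the paper phrases this dually as $\phi_\g^{\otimes2}\Delta(x)=\Delta(\phi_\g(x))$), and the hom-Jacobiator reduced, through Lemma~\ref{lem:wellknownformula} and the admissibility identity $\ad^*_x(\phi_\g^*)^2=\ad^*_x$, to a pairing of $\ad_x[r,r]_\g$ against $(\phi_\g^*)^{-1}\xi\wedge(\phi_\g^*)^{-1}\eta\wedge(\phi_\g^*)^{-1}\gamma$. The only cosmetic difference is that the paper simplifies $J^*$ abstractly to $\tfrac12\ad^*_{[r,r]_\g(\xi,\eta)}\phi_\g^*(\gamma)+c.p.$ \emph{before} pairing with $x$, whereas you pair first.

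For the final clause the paper simply asserts that the compatibility conditions of Definition~\ref{defi:homLiebi} ``hold naturally'' once $\Delta(x)=[x,r]_\g$, without separating out admissibility of $\g^*$ or \eqref{eqn:homLiebicon2}. Your treatment here is more careful than the paper's: your verification that $\Img(\Id-(\phi_\g^*)^2)\subset\Cen(\g^*)$ is correct (the second term in \eqref{eqn:brdualr} vanishes by Corollary~\ref{cor:ad2}), and your derivation of \eqref{eqn:homLiebicon1} via the graded hom-Jacobi identity is exactly right. For \eqref{eqn:homLiebicon2} you appeal to a direct check; this is fine, though note that in fact \eqref{eqn:conditionr} is not needed there: once both $\g$ and $\g^*$ are admissible hom-Lie algebras and \eqref{eqn:homLiebicon1} holds, the symmetry in the proof of Theorem~\ref{thm:biandmatchedpair} (the equivalence of \eqref{eqn:temp1}--\eqref{eqn:temp2} with \eqref{eqn:homLiebicon1}--\eqref{eqn:homLiebicon2}) makes \eqref{eqn:homLiebicon2} automatic.
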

\pf The bracket $[\cdot,\cdot]_{\g^*}$ is determined by the
cobracket $\Delta(x)=[x,r]_\g$. The condition $\phi_\g^*$ is an
algebra homomorphism with respect to $[\cdot,\cdot]_{\g^*}$ is
equivalent to $\phi_\g^{\otimes2}\Delta(x)=\Delta(\phi_\g(x))$. If
$r$ satisfies Eq. \eqref{eqn:r}, equivalently, $\phi_\g^{\otimes2}r=r$,
then we have
$$\Delta(\phi_\g(x))=[\phi_\g(x),r]_\g=[\phi_\g(x),\phi_\g^{\otimes2}r]_\g=\phi_\g^{\otimes2}[x,r]_\g=\phi_\g^{\otimes2}\Delta(x).$$

If $[\cdot,\cdot]_{\g^*}$ satisfies the hom-Jacobi identity, we have
\begin{eqnarray*}
 && J(\xi,\eta,\gamma)=[\phi_\g^*(\xi),[\eta,\gamma]_{\g^*}]_{\g^*}+
  [\phi_\g^*(\eta),[\gamma,\xi]_{\g^*}]_{\g^*}+
  [\phi_\g^*(\gamma),[\xi,\eta]_{\g^*}]_{\g^*}\\
  &=&\ad^*_{r^\sharp\circ(\phi_\g^*)^2(\xi)}(\ad^*_{r^\sharp\circ\phi^*_\g(\eta)}\gamma-\ad^*_{r^\sharp\circ\phi^*_\g(\gamma)}\eta)-
  \ad^*_{r^\sharp\circ\phi^*_\g[\eta,\gamma]_{\g^*}}\phi^*_\g(\xi)+c.p.\\
  &=&\ad^*_{\phi_\g\circ r^\sharp\circ\phi_\g^*(\xi)}(\ad^*_{r^\sharp\circ\phi^*_\g(\eta)}\gamma-\ad^*_{r^\sharp\circ\phi^*_\g(\gamma)}\eta)-
  \ad^*_{r^\sharp\circ\phi^*_\g[\eta,\gamma]_{\g^*}}\phi^*_\g(\xi)+c.p.\quad\mbox{by~Corollary
  ~\ref{cor:difference}}\\
  &=&\ad^*_{[r^\sharp\circ\phi_\g^*(\xi),r^\sharp\circ\phi_\g^*(\eta)]_\g}\phi^*_\g(\gamma)-\ad^*_{r^\sharp\circ\phi^*_\g[\xi,\eta]_{\g^*}}\phi^*_\g(\gamma)+c.p.\\
  &=&\half
  \ad^*_{[r,r]_\g(\xi,\eta)}\phi^*_\g(\gamma)+c.p.\qquad\mbox{by~Lemma~\ref{lem:wellknownformula}}.
\end{eqnarray*}
Thus, we have
\begin{eqnarray*}
  \langle
 2
 J(\xi,\eta,\gamma),x\rangle&=&-\langle\phi^*_\g(\gamma),[[r,r]_\frkg(\xi,\eta),x]_\frkg\rangle+c.p.=-\langle\ad_x^*\phi^*_\g(\gamma),[r,r]_\frkg(\xi,\eta)\rangle+c.p.\\
&=&-[r,r]_\frkg(\xi,\eta,\ad_x^*\phi^*_\g(\gamma))+c.p.\\
&=&-[r,r]_\frkg(\xi,\eta,\ad_x^*{\phi^*_\g}^{-1}(\gamma))+c.p.
\quad\mbox{by ~~Corollary ~\ref{cor:ad2}}\\
&=&\langle
-[r,r]_\frkg,\ad_x^*({\phi^*_\g}^{-1}(\xi)\wedge{\phi^*_\g}^{-1}(\eta)\wedge{\phi^*_\g}^{-1}(\gamma))\rangle\\
&=&\langle
\ad_x[r,r]_\frkg,{\phi^*_\g}^{-1}(\xi)\wedge{\phi^*_\g}^{-1}(\eta)\wedge{\phi^*_\g}^{-1}(\gamma)\rangle,
\end{eqnarray*}
which implies that under the condition \eqref{eqn:r},
$(\g,[\cdot,\cdot]_{\g^*},\phi^*_\g)$ is a hom-Lie algebra if and
only if Eq. \eqref{eqn:conditionr} is satisfied.

At last, if $\Delta(x)=[x,r]_\g$, then the compatibility conditions
in Definition \ref{defi:homLiebi} hold naturally. The proof is
completed. \qed

\begin{rmk}
  A similar result was also given in \cite{Yao1} (see Theorem 4.5 therein). Here we give a slightly different approach.
\end{rmk}

Let $\g$  be a regular
  admissible hom-Lie algebra and $r\in \wedge^2 \frak g$ be invertible (that is, $r^\sharp$ is invertible). Define $B\in\wedge^2\g^*$ by
  $$B(x,y)=\langle {r^\sharp}^{-1} (x), y\rangle.$$

\begin{pro}\label{pro:2-cocycle}
With the above notations,   $r$ satisfies the
classical hom-Yang-Baxter equation~\eqref{eqn:rmatrixcon} if and only if
\begin{equation}\label{eq:2-cocycle}
B(x, \phi_\frak g[y,z]_\g)+B(z, \phi_\frak g ([x,x]_\g))+B(y, \phi_\frak g ([z,x]_\g))=0.
\end{equation}
\end{pro}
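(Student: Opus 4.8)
The plan is to read \eqref{eq:2-cocycle} directly off the explicit expression for $[r,r]_\g$ that is exhibited in the proof of Lemma~\ref{lem:wellknownformula}: since $r$ satisfies \eqref{eqn:r}, one has
$$\half[r,r]_\g(\xi,\eta,\gamma)=-\langle\eta,[r^\sharp\phi_\g^*(\xi),r^\sharp\phi_\g^*(\gamma)]_\g\rangle+\langle\xi,[r^\sharp\phi_\g^*(\eta),r^\sharp\phi_\g^*(\gamma)]_\g\rangle+\langle\gamma,[r^\sharp\phi_\g^*(\xi),r^\sharp\phi_\g^*(\eta)]_\g\rangle.$$
Because $\g$ is regular the maps $\phi_\g$ and $\phi_\g^*$ are invertible, and because $r$ is invertible $r^\sharp$ is invertible; hence $r^\sharp\circ\phi_\g^*\colon\g^*\to\g$ is a linear isomorphism. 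Given $x,y,z\in\g$ I would set $\xi={\phi_\g^*}^{-1}{r^\sharp}^{-1}(x)$, $\eta={\phi_\g^*}^{-1}{r^\sharp}^{-1}(y)$, $\gamma={\phi_\g^*}^{-1}{r^\sharp}^{-1}(z)$, so that $r^\sharp\phi_\g^*(\xi)=x$, $r^\sharp\phi_\g^*(\eta)=y$, $r^\sharp\phi_\g^*(\gamma)=z$, and the right-hand side above becomes $-\langle\eta,[x,z]_\g\rangle+\langle\xi,[y,z]_\g\rangle+\langle\gamma,[x,y]_\g\rangle$.

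The second step is to rewrite these three pairings through $B$. From $\langle{\phi_\g^*}^{-1}\alpha,v\rangle=\langle\alpha,\phi_\g^{-1}(v)\rangle$ one gets $\langle\eta,[x,z]_\g\rangle=\langle{r^\sharp}^{-1}(y),\phi_\g^{-1}[x,z]_\g\rangle$ and similarly for the other two. At this point I would use that $\g$ is regular \emph{and} admissible: by \eqref{eqn:coadjointrepcon1}, since $\phi_\g(y)$ runs over all of $\g$, we have $[\phi_\g^2(a),b]_\g=[a,b]_\g$ for all $a,b$, so $\phi_\g^2$ fixes every bracket and consequently $\phi_\g^{-1}[a,b]_\g=\phi_\g[a,b]_\g$. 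Therefore $\langle\eta,[x,z]_\g\rangle=\langle{r^\sharp}^{-1}(y),\phi_\g[x,z]_\g\rangle=B(y,\phi_\g[x,z]_\g)$, and likewise $\langle\xi,[y,z]_\g\rangle=B(x,\phi_\g[y,z]_\g)$ and $\langle\gamma,[x,y]_\g\rangle=B(z,\phi_\g[x,y]_\g)$. Substituting back and using $-B(y,\phi_\g[x,z]_\g)=B(y,\phi_\g[z,x]_\g)$ gives $\half[r,r]_\g(\xi,\eta,\gamma)=B(x,\phi_\g[y,z]_\g)+B(y,\phi_\g[z,x]_\g)+B(z,\phi_\g[x,y]_\g)$, which is exactly the left-hand side of \eqref{eq:2-cocycle} (its middle summand being $B(z,\phi_\g[x,y]_\g)$).

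Finally I would conclude by nondegeneracy: $[r,r]_\g\in\wedge^3\g$ vanishes if and only if $[r,r]_\g(\xi,\eta,\gamma)=\langle[r,r]_\g,\xi\wedge\eta\wedge\gamma\rangle=0$ for all $\xi,\eta,\gamma\in\g^*$, and the assignment $(x,y,z)\mapsto(\xi,\eta,\gamma)$ above is a bijection of $\g^3$ onto $(\g^*)^3$; hence $[r,r]_\g=0$ is equivalent to \eqref{eq:2-cocycle} holding for all $x,y,z\in\g$, which is the assertion. The only delicate point is the middle step: one must keep careful track of where $\phi_\g^{\pm1}$ and $\phi_\g^*$ land after the change of variables and use the admissibility identity $\phi_\g^2[a,b]_\g=[a,b]_\g$ to convert the $\phi_\g^{-1}$ produced by the substitution into the $\phi_\g$ that appears in \eqref{eq:2-cocycle}; everything else is a routine substitution.
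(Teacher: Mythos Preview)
Your argument is correct and follows essentially the same route as the paper: both proofs feed the explicit expression for $\half[r,r]_\g(\xi,\eta,\gamma)$ from Lemma~\ref{lem:wellknownformula} through the bijection furnished by $r^\sharp$ (composed with $\phi_\g^*$), using regular admissibility to match up the $\phi_\g$ factors. The only minor differences are that the paper substitutes $x=r^\sharp(\xi)$ rather than $x=r^\sharp\phi_\g^*(\xi)$ (so it invokes Corollary~\ref{cor:difference} instead of your direct use of $\phi_\g^2[a,b]_\g=[a,b]_\g$), and that your bijection argument handles both implications at once whereas the paper treats one direction and declares the converse similar.
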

\pf
For any $x,y,z\in \frak g$, set $x=r^\sharp (\xi), y=r^\sharp (\eta), z=r^\sharp (\gamma)$. By Corollary \ref{cor:difference} and Lemma \ref{lem:wellknownformula}, if $r$ satisfies the classical hom-Yang-Baxter equation,  we have
\begin{eqnarray*}
B(x, \phi_\frak g[y,z]_\g)&=&\langle \xi, \phi_\frak g[r^\sharp (\eta),r^\sharp (\gamma)]_\g\rangle=\langle \xi, [r^\sharp\circ \phi_\frak g^*(\eta), r^\sharp\circ \phi_\frak
g^*(\gamma)]_\g\rangle\\
&=& \langle \xi, r^\sharp\circ \phi_\frak g^*({\rm ad}^*_{r^\sharp\circ \phi_\frak
g^*(\eta)}\gamma-{\rm ad}^*_{r^\sharp\circ \phi_\frak g^*(\gamma)}\eta)\rangle\\
&=& \langle -\phi_\frak g\circ r^\sharp (\xi), {\rm ad}^*_{r^\sharp\circ \phi_\frak
g^*(\eta)}\gamma-{\rm ad}^*_{r^\sharp\circ \phi_\frak g^*(\gamma)}\eta\rangle\\
&=& \langle \gamma, [\phi_\frak g\circ r^\sharp (\eta), \phi_\frak g\circ r^\sharp (\xi)]_\g\rangle
-\langle \eta, [\phi_\frak g\circ r^\sharp (\gamma), \phi_\frak g\circ r^\sharp (\xi)]_\g\rangle\\
&=& B(z, \phi_\frak g ([y,x]_\g))-B(y, \phi_\frak g ([z,x]_\g)).
\end{eqnarray*}
The converse part can be obtained similarly.
\qed

\begin{rmk} {\rm If $\phi_\g$ is orthogonal, or the center of $\g$ is zero, by Corollary \ref{cor:difference},  we have $r^\sharp\circ\phi_\g^*=\phi_\g\circ r^\sharp$. Thus, we have
\begin{eqnarray*}
B(\phi_\frak g (x), y)=\langle {r^\sharp}^{-1}\circ \phi_\frak g (x), y\rangle
=\langle \phi_\frak g^*\circ{r^\sharp}^{-1}(x), y\rangle
=\langle {r^\sharp}^{-1}(x), \phi_\frak g (y)\rangle
=B(x, \phi_\frak g (y)).
\end{eqnarray*}
Therefore Eq.~\eqref{eq:2-cocycle} can be rewritten as
\begin{equation}\label{eq:2-cocycle-new}
B(\phi_\frak g (x),[y,z]_\g)+B(\phi_\frak g (y), [z,x]_\g)+B(\phi_\frak g (z), [x,y]_\g)=0,
\end{equation}
which means that $B$ is  a {\bf 2-cocycle} on $\frak g$ (\cite{shenghomLie}).}
\end{rmk}

At the end of this section, we consider Lagrangian hom-subalgebras of a hom-Lie bialgebra.
\begin{thm}\label{thm:R}
Let $(\frkg,\frkg^*)$ be a regular hom-Lie bialgebra and
$R\in\wedge^2\frkg$, the graph of $ R^\sharp\circ\phi_\g^*$, which
we denote by $\huaG_R$, is a Lagrangian hom-subalgebra of $\frkg\oplus
\frkg^*$ if $R^\sharp\circ\phi_{\frkg}^*=\phi_\frkg\circ R^\sharp$
and the following Maurer-Cartan type equation is satisfied:
\begin{equation}\label{eqn:mc}
  d_T^*R+\half [R,R]_\g=0,
\end{equation}
where $d^*_T$ is the coboundary operator for the hom-Lie algebra
$\g^*$ associated to the trivial representation.
\end{thm}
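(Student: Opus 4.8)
The plan is to verify, in turn, the three defining requirements for $\huaG_R$ to be a Lagrangian hom-subalgebra of $\g\oplus\g^*$: that it is isotropic of half the dimension, that it is stable under $\phi_d=\phi_\g\oplus\phi_\g^*$, and that it is closed under the standard bracket \eqref{eqn:standbracket}. Throughout put $N:=R^\sharp\circ\phi_\g^*\colon\g^*\to\g$, so that $\huaG_R=\{N\xi+\xi\mid\xi\in\g^*\}$.

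The first two requirements are short. Since $R\in\wedge^2\g$ the map $R^\sharp$ is skew-symmetric, and combined with the hypothesis $R^\sharp\circ\phi_\g^*=\phi_\g\circ R^\sharp$ this gives
$$\langle N\xi,\eta\rangle=\langle\phi_\g R^\sharp(\xi),\eta\rangle=\langle R^\sharp(\xi),\phi_\g^*(\eta)\rangle=-\langle R^\sharp\phi_\g^*(\eta),\xi\rangle=-\langle N\eta,\xi\rangle ,$$
so $N$ is skew-symmetric. By \eqref{eqn:standardform} we get $S(N\xi+\xi,N\eta+\eta)=\langle N\xi,\eta\rangle+\langle N\eta,\xi\rangle=0$, hence $\huaG_R$ is isotropic; since $\dim\huaG_R=\dim\g^*=\half\dim(\g\oplus\g^*)$, it is maximal isotropic, i.e. Lagrangian. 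For $\phi_d$-stability, $\phi_d(N\xi+\xi)=\phi_\g(N\xi)+\phi_\g^*(\xi)=\phi_\g R^\sharp\phi_\g^*(\xi)+\phi_\g^*(\xi)$, and applying the hypothesis once more $\phi_\g R^\sharp\phi_\g^*(\xi)=R^\sharp\phi_\g^*(\phi_\g^*\xi)=N(\phi_\g^*\xi)$, so $\phi_d(N\xi+\xi)=N(\phi_\g^*\xi)+\phi_\g^*\xi\in\huaG_R$.

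The heart of the proof is closure under $[\cdot,\cdot]_{\g\oplus\g^*}$. By \eqref{eqn:standbracket}, for $u=N\xi+\xi$ and $v=N\eta+\eta$ the $\g$-component of $[u,v]_{\g\oplus\g^*}$ is $[N\xi,N\eta]_\g+\add^*_\xi(N\eta)-\add^*_\eta(N\xi)$ and the $\g^*$-component is $\zeta:=[\xi,\eta]_{\g^*}+\ad^*_{N\xi}\eta-\ad^*_{N\eta}\xi$. Matching components, $\huaG_R$ is closed under the bracket if and only if, for all $\xi,\eta\in\g^*$, the $\g$-component equals $N\zeta$. Pairing the difference against an arbitrary $\gamma\in\g^*$, using the defining relations of $\ad^*$ and $\add^*$ and the skew-symmetry of $N$ (so that $\langle N\zeta,\gamma\rangle=-\langle\zeta,N\gamma\rangle$), this reduces to the requirement
\begin{eqnarray*}
&&\big(\langle[N\xi,N\eta]_\g,\gamma\rangle-\langle\eta,[N\xi,N\gamma]_\g\rangle+\langle\xi,[N\eta,N\gamma]_\g\rangle\big)\\
&&\qquad{}+\big(\langle[\xi,\eta]_{\g^*},N\gamma\rangle-\langle[\xi,\gamma]_{\g^*},N\eta\rangle+\langle[\eta,\gamma]_{\g^*},N\xi\rangle\big)=0
\end{eqnarray*}
for all $\xi,\eta,\gamma\in\g^*$.

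It then remains to recognise the two parenthesised sums. The first equals $\half[R,R]_\g(\xi,\eta,\gamma)$: this is exactly the identity appearing (stated without proof) inside the proof of Lemma~\ref{lem:wellknownformula}, namely $\half[R,R]_\g(\xi,\eta,\gamma)=-\langle\eta,[N\xi,N\gamma]_\g\rangle+\langle\xi,[N\eta,N\gamma]_\g\rangle+\langle\gamma,[N\xi,N\eta]_\g\rangle$; it involves no $\g^*$-data and is checked on a monomial $R=X\wedge Y$ using $\half[R,R]_\g=[X,Y]_\g\wedge\phi_\g(X)\wedge\phi_\g(Y)$ and $N\xi=\langle\phi_\g(X),\xi\rangle Y-\langle\phi_\g(Y),\xi\rangle X$. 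The second equals $d_T^*R(\xi,\eta,\gamma)$: unfolding the trivial-coefficient coboundary of $\g^*$ gives $d_T^*R(\xi,\eta,\gamma)=-R([\xi,\eta]_{\g^*},\phi_\g^*\gamma)+R([\xi,\gamma]_{\g^*},\phi_\g^*\eta)-R([\eta,\gamma]_{\g^*},\phi_\g^*\xi)$, and since $\phi_\g\circ R^\sharp=R^\sharp\circ\phi_\g^*$ and $N$ is skew, $R([\xi,\eta]_{\g^*},\phi_\g^*\gamma)=\langle N([\xi,\eta]_{\g^*}),\gamma\rangle=-\langle[\xi,\eta]_{\g^*},N\gamma\rangle$, and similarly for the remaining two terms. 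Thus the displayed requirement is precisely $\big(d_T^*R+\half[R,R]_\g\big)(\xi,\eta,\gamma)=0$ for all $\xi,\eta,\gamma$, which holds by the Maurer--Cartan equation \eqref{eqn:mc}; hence $\huaG_R$ is closed under the bracket, which finishes the proof. The one place that needs care is this last step: keeping signs and the antisymmetrisations in $\xi,\eta,\gamma$ straight so that the Schouten-type identity for $\half[R,R]_\g$ and the explicit cochain formula for $d_T^*R$ assemble into exactly \eqref{eqn:mc}; the rest is routine, and regularity of $\g$ enters only through the background facts (admissibility of $\ad^*$, $\add^*$, etc.) already in place.
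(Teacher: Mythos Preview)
Your proof is correct and follows essentially the same strategy as the paper's: both verify $\phi_d$-stability and maximal isotropy directly from the hypothesis $R^\sharp\circ\phi_\g^*=\phi_\g\circ R^\sharp$, then reduce closure of $\huaG_R$ under the standard bracket to the vanishing of two pieces identified respectively with $\tfrac12[R,R]_\g$ (via the scalar Schouten-type identity recorded in the proof of Lemma~\ref{lem:wellknownformula}) and with $d_T^*R$. The only cosmetic difference is that the paper keeps the $[R,R]_\g$-piece vector-valued (quoting Lemma~\ref{lem:wellknownformula} directly) before pairing, whereas you pair against $\gamma$ first and then recognise both scalar sums; the content is the same.
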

\pf It is easy to see that
$R^\sharp\circ\phi_{\frkg^*}=\phi_\frkg\circ R^\sharp$ guarantees
that $(\phi_\g\oplus\phi_{\g^*})\huaG_R\subset \huaG_R$, and $\huaG_R$ is maximal isotropic. Next we
show that $\huaG_R$ is closed under the bracket operation
$[\cdot,\cdot]_{\g\oplus\g^*}$ given by Eq. \eqref{eqn:standbracket} if and only if
Eq. \eqref{eqn:mc} holds. First we have
\begin{eqnarray*}
  [R^\sharp\circ\phi_\g^*(\xi)+\xi,R^\sharp\circ\phi_\g^*(\eta)+\eta]_{\g\oplus\g^*}&=&[R^\sharp\circ\phi_\g^*(\xi),R^\sharp\circ\phi_\g^*(\eta)]_\g
  -\add^*_{\eta}R^\sharp\circ\phi_\g^*(\xi)+\add^*_{\xi}R^\sharp\circ\phi_\g^*(\eta)\\
  &&+[\xi,\eta]_{\g^*}+\ad^*_{R^\sharp\circ\phi_\g^*(\xi)}\eta-\ad^*_{R^\sharp\circ\phi_\g^*(\eta)}\xi.
\end{eqnarray*}
$\huaG_R$ is closed if and only if
$$
[R^\sharp\circ\phi_\g^*(\xi),R^\sharp\circ\phi_\g^*(\eta)]_\g
  -\add^*_{\eta}R^\sharp\circ\phi_\g^*(\xi)+\add^*_{\xi}R^\sharp\circ\phi_\g^*(\eta)
  =R^\sharp\circ\phi_\g^*\Big(
  [\xi,\eta]_{\g^*}+\ad^*_{R^\sharp\circ\phi_\g^*(\xi)}\eta-\ad^*_{R^\sharp\circ\phi_\g^*(\eta)}\xi\Big).
$$
Note that
\begin{eqnarray*}
&&[R^\sharp\circ\phi_\g^*(\xi),R^\sharp\circ\phi_\g^*(\eta)]_\g-R^\sharp\circ\phi_\g^*\big(
  \ad^*_{R^\sharp\circ\phi_\g^*(\xi)}\eta-\ad^*_{R^\sharp\circ\phi_\g^*(\eta)}\xi\big)\\&=&[R^\sharp\circ\phi_\g^*(\xi),R^\sharp\circ\phi_\g^*(\eta)]_\g
  -R^\sharp\circ\phi_\g^*([\xi,\eta]_R)\\
  &=&\half [R,R]_\g(\xi,\eta),
\end{eqnarray*}
and
\begin{eqnarray*}
 &&\langle -\add^*_{\eta}R^\sharp\circ\phi_\g^*(\xi)+\add^*_{\xi}R^\sharp\circ\phi_\g^*(\eta)-R^\sharp\circ\phi_\g^*
  [\xi,\eta]_{\g^*},\gamma\rangle\\
  &=&R(\phi^*_\g(\xi),[\eta,\gamma]_{\g^*})-R(\phi_\g^*(\eta),[\xi,\gamma]_{\g^*})-R([\xi,\eta]_{\g^*},\phi^*_\g(\gamma))\\
  &=&d^*_TR(\xi,\eta,\gamma).
\end{eqnarray*}
Thus, $\huaG_R$ is closed if and only if Eq. \eqref{eqn:mc} holds. \qed

\section{Interpretation of $r$-matrices in terms of operator forms}

In this section, we give a further interpretation of the classical
hom-Yang-Baxter equation.
\emptycomment{\begin{equation}
[[r,r]]= [r_{12},r_{13}]+[r_{12},r_{23}]+[r_{13},r_{23}]=0,
\label{eq:chybe}
\end{equation}
where for any $r=\sum_ia_i\otimes b_i \in \frak g\otimes \frak g$, the three brackets in Eq.~(\ref{eq:chybe}) are defined as
\begin{eqnarray*}
&&[r_{12},r_{13}]=\sum_{i,j} [a_i,a_j]\otimes \phi_\frak g(b_i)\otimes \phi_\frak g(b_j);\\
&&[r_{12},r_{23}]=\sum_{i,j} \phi_\frak g(a_i)\otimes [b_i,a_j] \otimes \phi_\frak g(b_j);\\
&&[r_{13},r_{23}]=\sum_{i,j} \phi_\frak g(a_i)\otimes  \phi_\frak g(a_j)\otimes [b_i,b_j].
\end{eqnarray*}
Note that here for a generality, we might suppose that $r\in \frak g\otimes \frak
g$.

\begin{pro}\label{prop:chybe}
Let $\frak g$ be a hom-Lie algebra and $r\in \frak g\otimes \frak
g$. Suppose that $r$ is skew-symmetric. Then $r$ satisfies the
classical hom-Yang-Baxter equation~\eqref{eqn:rmatrixcon} if and only if
the induced linear map $r^\sharp$ satisfies
\begin{equation}
[r^\sharp \phi_\frak g^*(\xi), r^\sharp \phi_\frak
g^*(\eta)]=\phi_\frak g r^\sharp({\rm ad}^*_{r^\sharp \phi_\frak
g^*(\xi)}\eta-{\rm ad}^*_{r^\sharp \phi_\frak g^*(\eta)}\xi),
\end{equation}
for any $\xi,\eta\in \frak g^*$.
\end{pro}

\begin{proof}
Let $\{e_1,\cdots, e_n\}$ be a basis of $\frak g$ and
$\{e_1^*,\cdots, e_n^*\}$ be its dual basis. Set
$$[e_i,e_j]=\sum_{k=1}^n c_{ij}^n e_k, r=\sum_{i,j=1}^n r_{ij} e_i\otimes e_j,
\phi_{\frak g} (e_i)=\sum_{j=1}^nf_{ij} e_j.$$ Then we have
$$r_{ij}=-r_{ij},\;r^\sharp (e_i^*)=\sum_{j=1}^n r_{ij}e_j,\;
{\rm ad}^*_{e_i}(e_j^*)=-\sum_{k=1}^nc_{ik}^je_k^*, \phi_{\frak
g}^*(e_i^*)=\sum_{j=1}^n f_{ji}e_j^*.$$ Then the coefficient of
$e_m\otimes e_n\otimes e_p$ in the left hand side of Eq.~(\ref{eq:chybe}) is
$$\sum_{i,j,k,l=1}^n [r_{ij}r_{kl}f_{jn}f_{lp}c_{ik}^m
+r_{ij}r_{kl}f_{im}f_{lp}c_{jk}^n+r_{ij}r_{kl}f_{im}f_{kn}c_{jl}^p],$$
which is exactly the coefficient of $e_m$ in
$$[r^\sharp\phi_\frak g^*(e_n^*), r^\sharp\phi_\frak
g^*(e_p^*)]-\phi_\frak gr^\sharp({\rm ad}^*_{r^\sharp \phi_\frak
g^*(e_n^*)}e_p^*-{\rm ad}^*_{r^\sharp \phi_\frak g^*(e_p^*)}e_n^*).
$$
Note that the skew-symmetry of $r$ is used.
\end{proof}

\begin{cor}
Let $\frak g$ be a hom-Lie algebra and $r\in \frak g\otimes \frak
g$. Suppose that $r$ is skew-symmetric and $r^\sharp \phi_\frak g^*=\phi_\frak g r^\sharp$.
Then $r$ satisfies the
classical hom-Yang-Baxter equation~\eqref{eqn:rmatrixcon} if and only if
the induced linear map $r^\sharp$ satisfies
\begin{equation}
[r^\sharp \phi_\frak g^*(\xi), r^\sharp \phi_\frak
g^*(\eta)]=r^\sharp \phi_\frak g^*({\rm ad}^*_{r^\sharp \phi_\frak
g^*(\xi)}\eta-{\rm ad}^*_{r^\sharp \phi_\frak g^*(\eta)}\xi),\label{eq:O}
\end{equation}
for any $\xi,\eta\in \frak g^*$.
\end{cor}

\begin{proof}
It follows immediately from Proposition~\ref{prop:chybe}.\end{proof}
}

\begin{defi}{\rm
Let $(\frak g,[\cdot,\cdot]_\g,\phi_\g)$ be a hom-Lie algebra and $( V,A,\rho)$ be a representation. A linear map $T:V\rightarrow \g$ is called an {\bf $\mathcal O$-operator} associated to $( V,A,\rho)$ if $T$ satisfies
\begin{itemize}
  \item[\rm(i)] $T\circ A=\phi_\g\circ T$,
  \item[\rm(ii)] $[T(u),T(v)]_\g=T(\rho (T(u))v-\rho(T(v)u)),\quad \forall~ u,v\in V.$
\end{itemize}}
\end{defi}

Recall that a notion of a hom-Nijenhuis operator for a hom-Lie algebra $(\g,[\cdot,\cdot]_\g,\phi_\g)$ was introduced in \cite{shenghomLie}, which could give a trivial deformation. See \cite{D} for more details about Nijenhuis operators for Lie algebras and their applications in nonlinear evolution equations. A {\bf hom-Nijenhuis operator} is a linear map $N:\g\longrightarrow\g$ satisfying $N\circ \phi_\g=\phi_\g\circ N$ and
$$
[N(x),N(y)]_\g=N\big([N(x),y]_\g+[x,N(y)]_\g-N[x,y]_\g\big).
$$
It is easy to show that
\begin{lem}
  A linear map $T:V\rightarrow \g$ is  an {$\mathcal O$-operator} associated to $( V,A,\rho)$ if and only if $\left(\begin{array}{cc}0&T\\
  0&0\end{array}\right)$ is a hom-Nijenhuis operator for the hom-Lie algebra $\g\ltimes_\rho V$.
\end{lem}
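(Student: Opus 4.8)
The plan is to verify directly that the $2\times 2$ matrix operator $N \defbe \left(\begin{smallmatrix}0&T\\0&0\end{smallmatrix}\right)$ on $\g\oplus V$ satisfies the two defining conditions of a hom-Nijenhuis operator for the semidirect product $\g\ltimes_\rho V$, and then read off that the resulting identities are precisely conditions (i) and (ii) in the definition of an $\mathcal O$-operator. Throughout I write a generic element of $\g\oplus V$ as $(x,u)$, recall that the twisting map on $\g\ltimes_\rho V$ is $\phi_\g\oplus A$ and the bracket is $[(x,u),(y,v)]_{\g\ltimes_\rho V}=([x,y]_\g,\rho(x)v-\rho(y)u)$, and note that $N(x,u)=(T(u),0)$.

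First I would check the commutation condition $N\circ(\phi_\g\oplus A)=(\phi_\g\oplus A)\circ N$. Applying the left side to $(x,u)$ gives $N(\phi_\g(x),A(u))=(T(A(u)),0)$, while the right side gives $(\phi_\g\oplus A)(T(u),0)=(\phi_\g(T(u)),0)$. So this condition holds if and only if $T\circ A=\phi_\g\circ T$, which is exactly $\mathcal O$-operator condition (i). Next I would compute the Nijenhuis equation
$$
[N(x,u),N(y,v)]_{\g\ltimes_\rho V}=N\big([N(x,u),(y,v)]_{\g\ltimes_\rho V}+[(x,u),N(y,v)]_{\g\ltimes_\rho V}-N[(x,u),(y,v)]_{\g\ltimes_\rho V}\big).
$$
The left-hand side is $[(T(u),0),(T(v),0)]_{\g\ltimes_\rho V}=([T(u),T(v)]_\g,0)$. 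On the right-hand side, since $N$ kills the $\g$-component and only sees the $V$-component, I only need the $V$-components of the three brackets inside: $[N(x,u),(y,v)]$ has $V$-component $\rho(T(u))v$ (the term $-\rho(y)\cdot 0$ vanishes); $[(x,u),N(y,v)]$ has $V$-component $-\rho(T(v))u$; and $N[(x,u),(y,v)]$ has $\g$-component $T(\rho(x)v-\rho(y)u)$ and zero $V$-component, so subtracting it contributes nothing to the argument of the outer $N$ (its $V$-component is $0$). Hence the right-hand side is $\big(T(\rho(T(u))v-\rho(T(v))u),0\big)$, and equality with the left-hand side is exactly condition (ii).

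Since both sides of the equivalence unwind to the conjunction of (i) and (ii) with no side conditions, the lemma follows; I would remark only that one should be slightly careful that the outer $N$ applied to an element whose $V$-component is arbitrary depends only on that $V$-component, which is why the $\g$-valued term $T(\rho(x)v-\rho(y)u)$ coming from $-N[(x,u),(y,v)]$ drops out. There is no real obstacle here — the computation is short and purely formal; the only thing to get right is bookkeeping of which components survive under $N$, and the observation that the twisting map $\phi_\g\oplus A$ interacts with $N$ in the first condition exactly as $T\circ A$ versus $\phi_\g\circ T$.
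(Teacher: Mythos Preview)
Your proof is correct; the paper itself omits the argument entirely (it just writes ``It is easy to show that'' before stating the lemma), and your direct verification of the two hom-Nijenhuis conditions on $\g\ltimes_\rho V$ is precisely the routine computation the authors are leaving to the reader. Your bookkeeping is right, including the observation that the term $-N[(x,u),(y,v)]$ has vanishing $V$-component and hence contributes nothing after the outer application of $N$.
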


\begin{ex} {\rm Let $\frak g$ be a hom-Lie algebra. A multiplicative {\bf Rota-Baxter operator} (of weight zero) $R$ on $\frak g$ is an $\mathcal O$-operator associated to the adjoint representation,
that is, $R$ satisfies (\cite{MY})
$$[R(x),R(y)]_\g=R([R(x),y]_\g-[R(y),x]_\g),\quad \forall~ x,y\in \frak g.$$}
\end{ex}

\begin{ex} \label{ex:2-cocycle}{\rm Let $\frak g$ be a regular admissible hom-Lie algebra. Suppose that $r\in \wedge^2\frak g
$ satisfies Eq. \eqref{eqn:r}. By Lemma \ref{lem:wellknownformula}, $r$ satisfies the
classical hom-Yang-Baxter equation~\eqref{eqn:rmatrixcon} if and only if $r^\sharp \circ\phi_\frak g^*$
is an $\mathcal O$-operator associated to the co-adjoint representation ${\rm ad}^*$.}
\end{ex}

There is a class of Hom-algebra structures behind:

\begin{defi}{\rm (\cite{MS2,Yao4})}  A {\bf hom-left-symmetric algebra} is a triple $(V, \cdot, \psi)$ consisting of a linear space $V$,
a bilinear map $\cdot: V\otimes V\rightarrow V$ and an algebra homomorphism $\psi:V\rightarrow V$ satisfying
\begin{equation}
(u\cdot v)\cdot \psi(w)-\psi(u) \cdot (v\cdot w)=(v\cdot u)\cdot \psi(w)-\psi(v)\cdot (u\cdot w),~\quad\forall~ u,v,w\in V.
\end{equation}
It is called {\bf regular (involutive)}, if $\psi$ is nondegenerate (satisfies
$\psi^2=\Id$).
\end{defi}

The following conclusion is obvious.

\begin{pro}\label{prop:left-Lie} Let $(V,\cdot,\psi)$ be a hom-left-symmetric algebra.
\begin{enumerate}
\item[\rm(i)] $(\g(V),[\cdot,\cdot]_V,\psi)$ is a hom-Lie algebra, where $\g(V)=V$ as a vector space, and the bracket $[\cdot,\cdot]_V$ is given by
 $$[u,v]_V=u\cdot v-v\cdot u.$$
 We call $(\g(V),[\cdot,\cdot]_V,\psi)$ the {\bf commutator hom-Lie algebra}.
\item[\rm(ii)] Let $L:V\rightarrow \gl(V)$ be a linear map defined by $L(u)(v)=u\cdot v$ for any $u,v\in V$. Then
$(V,\psi,L)$ is a representation of the hom-Lie algebra $(\g(V),[\cdot,\cdot]_V,\psi)$ on $V$.
\end{enumerate}
\end{pro}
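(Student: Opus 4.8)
The plan is to verify the two defining identities of a hom-Lie algebra for the triple $(\g(V),[\cdot,\cdot]_V,\psi)$ in part (i), and then the two defining identities of a representation (Definition \ref{defi:representation}) for $(V,\psi,L)$ in part (ii), in each case by reducing everything to the single hom-left-symmetric identity. For (i), the bracket $[u,v]_V = u\cdot v - v\cdot u$ is manifestly skew-symmetric, and $\psi$ is an algebra homomorphism for $\cdot$ by hypothesis, hence $\psi[u,v]_V = \psi(u)\cdot\psi(v) - \psi(v)\cdot\psi(u) = [\psi(u),\psi(v)]_V$, so $\psi$ is an algebra homomorphism for $[\cdot,\cdot]_V$. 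The only real content is the hom-Jacobi identity
$$[\psi(u),[v,w]_V]_V + [\psi(v),[w,u]_V]_V + [\psi(w),[u,v]_V]_V = 0.$$
First I would expand each term using $[a,b]_V = a\cdot b - b\cdot a$, producing twelve terms of the form $\pm\,\psi(a)\cdot(b\cdot c)$ and $\pm\,(b\cdot c)\cdot\psi(a)$. Then I would denote by $(a,b,c) \defbe (a\cdot b)\cdot\psi(c) - \psi(a)\cdot(b\cdot c)$ the "hom-associator," so that the hom-left-symmetric identity reads $(u,v,w) = (v,u,w)$. Collecting the twelve terms into the combination $(w,u,v) - (w,v,u) + (u,v,w) - (v,u,w) + \cdots$ — a standard cyclic bookkeeping — one sees the sum equals a signed sum of associators that vanishes precisely because each symmetrized pair $(a,b,c)-(b,a,c)$ is zero. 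This is the only step with any calculational substance, and it is the one I expect to be the main (minor) obstacle: keeping track of the signs and the cyclic permutation so that the twelve terms cancel in pairs after invoking left-symmetry three times.

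For (ii), I would check the two conditions of Definition \ref{defi:representation} for $\rho = L$, $A = \psi$. Condition (i) there, $L(\psi(u))\circ\psi = \psi\circ L(u)$, applied to $v\in V$ reads $\psi(u)\cdot\psi(v) = \psi(u\cdot v)$, which is again just the statement that $\psi$ is an algebra homomorphism for $\cdot$. Condition (ii), $L([u,v]_V)\circ\psi = L(\psi(u))\circ L(v) - L(\psi(v))\circ L(u)$, applied to $w\in V$ reads
$$(u\cdot v - v\cdot u)\cdot\psi(w) = \psi(u)\cdot(v\cdot w) - \psi(v)\cdot(u\cdot w),$$
and rearranging this is exactly the hom-left-symmetric identity $(u\cdot v)\cdot\psi(w) - \psi(u)\cdot(v\cdot w) = (v\cdot u)\cdot\psi(w) - \psi(v)\cdot(u\cdot w)$. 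So part (ii) is immediate once one unwinds the definitions; there is essentially no obstacle there. In summary, the entire proposition is a matter of translating the two hom-Lie-algebra axioms and the two representation axioms into statements about $\cdot$ and checking they all follow from (or coincide with) the defining identity of a hom-left-symmetric algebra, with the hom-Jacobi computation in (i) being the only place requiring care.
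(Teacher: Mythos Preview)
Your proposal is correct and complete. The paper itself offers no proof at all, merely declaring ``The following conclusion is obvious,'' so your detailed verification---reducing the hom-Jacobi identity to the vanishing of the cyclic sum $-[(u,v,w)-(v,u,w)]-[(v,w,u)-(w,v,u)]-[(w,u,v)-(u,w,v)]$ of hom-associators, and recognizing representation condition~(ii) as a literal rewriting of the hom-left-symmetric identity---is exactly the standard argument the authors have in mind.
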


\begin{pro}\label{prop:lie-lsa}
Let $(\frak g,[\cdot,\cdot],\phi_\frak g)$ be a hom-Lie algebra and $ (V,A,\rho)$ be a representation of $\g$. Suppose that $T:V\rightarrow \g$ is  an  $\mathcal O$-operator.
Then there exists a hom-left-symmetric algebra structure $(V,\cdot, A)$ on $V$ given by
\begin{equation}
u \cdot v=\rho (T(u))v,\quad \forall~ u,v\in V.
\end{equation}
\end{pro}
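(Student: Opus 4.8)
The plan is to verify directly the two defining conditions of a hom-left-symmetric algebra for the candidate structure $u\cdot v=\rho(T(u))v$ with twisting map $\psi=A$, letting the $\mathcal O$-operator conditions (i)--(ii) and the representation conditions (i)--(ii) of Definition \ref{defi:representation} do all the work. First I would check that $A$ is multiplicative with respect to $\cdot$. Writing $A(u\cdot v)=A\big(\rho(T(u))v\big)$ and using representation condition (i) in the form $A\circ\rho(x)=\rho(\phi_\g(x))\circ A$ together with the $\mathcal O$-operator condition $\phi_\g\circ T=T\circ A$, this becomes $\rho(\phi_\g(T(u)))(A(v))=\rho(T(A(u)))(A(v))=A(u)\cdot A(v)$, as required, so $A$ is an algebra homomorphism of $(V,\cdot)$.

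The substance is the hom-left-symmetric identity, which I would rewrite as
$$(u\cdot v)\cdot A(w)-(v\cdot u)\cdot A(w)=A(u)\cdot(v\cdot w)-A(v)\cdot(u\cdot w),$$
and the idea is to show both sides equal $\rho([T(u),T(v)]_\g)(A(w))$. For the left-hand side, linearity of $T$ and of the operator $\rho(\cdot)(A(w))$ gives $(u\cdot v)\cdot A(w)-(v\cdot u)\cdot A(w)=\rho\big(T(u\cdot v-v\cdot u)\big)(A(w))$; since $u\cdot v-v\cdot u=\rho(T(u))v-\rho(T(v))u$, the $\mathcal O$-operator condition (ii) yields $T(u\cdot v-v\cdot u)=[T(u),T(v)]_\g$, so this difference is $\rho([T(u),T(v)]_\g)(A(w))$. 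For the right-hand side, the $\mathcal O$-operator condition (i) lets me replace $T(A(u))$ by $\phi_\g(T(u))$, so that $A(u)\cdot(v\cdot w)-A(v)\cdot(u\cdot w)=\rho(\phi_\g(T(u)))\big(\rho(T(v))w\big)-\rho(\phi_\g(T(v)))\big(\rho(T(u))w\big)$; by representation condition (ii), namely $\rho([x,y]_\g)\circ A=\rho(\phi_\g(x))\circ\rho(y)-\rho(\phi_\g(y))\circ\rho(x)$, this is exactly $\rho([T(u),T(v)]_\g)(A(w))$. Comparing the two identified expressions finishes the verification.

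I do not expect a genuine obstacle here. The computation is short once one notices that the twist $\psi(w)=A(w)$ already built into the hom-left-symmetric identity is precisely the occurrence of $A$ needed to apply representation condition (ii), and that the term $\rho(T(u))v-\rho(T(v))u$ obtained by antisymmetrizing $u\cdot v$ is precisely the input of the $\mathcal O$-operator condition (ii). The only care required is the bookkeeping of where $\phi_\g$ and $A$ sit; no further ideas are needed.
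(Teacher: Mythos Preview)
Your proposal is correct and follows essentially the same approach as the paper: you verify multiplicativity of $A$ via representation condition (i) and $T\circ A=\phi_\g\circ T$, and then reduce the hom-left-symmetric identity to the equality $\rho([T(u),T(v)]_\g)\circ A=\rho(\phi_\g T(u))\rho(T(v))-\rho(\phi_\g T(v))\rho(T(u))$ using the $\mathcal O$-operator condition (ii) and representation condition (ii). The paper performs the same computation in one block, showing the full associator-type difference vanishes, whereas you split it into a left-hand side and a right-hand side each equal to $\rho([T(u),T(v)]_\g)(A(w))$; the content is identical.
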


\pf
First by  $T\circ A=\phi_\frak g\circ T$ and the fact that$(V,A,\rho)$ is a representation, we have
$$
A(u\cdot v)=A(\rho(T(u))v)=\rho (\phi_\frak g(T(u)))A(v)=\rho (T(A(u))A(v)=A(u)\cdot A(v),
$$
which implies that $A$ is an algebra homomorphism. Furthermore, we have
\begin{eqnarray*}
&&(u\cdot v)\cdot A(w)-A(u)\cdot (v\cdot w)- (v\cdot u)\cdot A(w)+A(v)\cdot (u\cdot w)\\
&=&\rho T(\rho (T(u))v) A(w)-\rho T (A(u)) \rho (T(v)) w-\rho T(\rho (T(v))u) A(w)+\rho T (A(v)) \rho (T(u)) w\\
&=&\rho T\big(\rho (T(u))v-\rho (T(v))u\big) A(w)-\rho (\phi_\frak g \circ T(u))\rho (T(v)) w+\rho (\phi_\frak g \circ T(v))\rho (T(u)) w\\
&= &\rho [T(u),T(v)]_\g A(w)- \rho (\phi_\frak g\circ  T(u))\rho (T(v)) w+\rho( \phi_\frak g\circ  T(v))\rho (T(u)) w\\
&=&0.
\end{eqnarray*}
The last equality holds also by the fact that $(V,A,\rho)$ is a representation. Therefore, $(V,\cdot, A)$ is a hom-left-symmetric algebra.  \qed\vspace{3mm}

The following consequence is obvious.

\begin{cor} With the above conditions, $T$ is a homomorphism from the commutator hom-Lie algebra $(\frak g (V), [\cdot,\cdot]_V, A)$, given in Proposition \ref{prop:left-Lie}, to $(\g,[\cdot,\cdot]_\g,\phi_\g)$. Moreover, $T(V)=\{T(u)|u\in V\}\subset \g$ is a hom-Lie subalgebra of $(\frak g,[\cdot,\cdot]_\g,\phi_\frak g)$ and there is an induced
hom-left-symmetric algebra structure $(T(V), \cdot, \phi_\frak g)$ on $T(V)$ given by
\begin{equation}
T(u)\cdot T(v)=T(u\cdot v).
\end{equation}
\end{cor}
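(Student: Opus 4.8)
The plan is to verify each of the three assertions in turn, leaning on Proposition~\ref{prop:lie-lsa} and the $\mathcal O$-operator axioms. First I would check that $T$ is a homomorphism from $(\frak g(V),[\cdot,\cdot]_V,A)$ to $(\g,[\cdot,\cdot]_\g,\phi_\g)$. Compatibility with the twisting maps is just axiom (i) of an $\mathcal O$-operator, namely $T\circ A=\phi_\g\circ T$. For the bracket, by Proposition~\ref{prop:lie-lsa} the hom-left-symmetric product on $V$ is $u\cdot v=\rho(T(u))v$, so $[u,v]_V=\rho(T(u))v-\rho(T(v))u$, and applying $T$ gives $T[u,v]_V=T(\rho(T(u))v-\rho(T(v))u)=[T(u),T(v)]_\g$ by axiom (ii). Hence $T$ is a homomorphism of hom-Lie algebras.

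Next I would show $T(V)$ is a hom-Lie subalgebra of $\g$. Since $T$ is a homomorphism, $\phi_\g(T(u))=T(A(u))\in T(V)$, so $T(V)$ is invariant under $\phi_\g$; and $[T(u),T(v)]_\g=T[u,v]_V\in T(V)$, so $T(V)$ is closed under the bracket. Therefore $(T(V),[\cdot,\cdot]_\g,\phi_\g)$ is a hom-Lie subalgebra.

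Finally, for the induced hom-left-symmetric structure on $T(V)$, I would first observe the formula $T(u)\cdot T(v)=T(u\cdot v)$ is well defined only once we know it is independent of the chosen preimages; but in fact we can take it as the definition of a product on $T(V)$ provided we check consistency, which follows because $T(u\cdot v)=T(\rho(T(u))v)=[T(u),?]$\dots more precisely, if $T(u)=T(u')$ then $\rho(T(u))v=\rho(T(u'))v$, so $T(u\cdot v)=T(u'\cdot v)$, and similarly in the second slot since $u\cdot v=\rho(T(u))v$ depends on $v$ only through\dots here one uses that the right multiplication is also determined by $T(v)$: indeed $T(u\cdot v)$ can be rewritten using axiom (ii) as part of $[T(u),T(v)]_\g$ plus $T(v\cdot u)$, both of which depend only on $T(u),T(v)$. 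Granting well-definedness, the hom-left-symmetric identity on $T(V)$ is inherited: applying $T$ to the identity for $(V,\cdot,A)$ from Proposition~\ref{prop:lie-lsa} and using $T\circ A=\phi_\g\circ T$ yields exactly the identity $(X\cdot Y)\cdot\phi_\g(Z)-\phi_\g(X)\cdot(Y\cdot Z)=(Y\cdot X)\cdot\phi_\g(Z)-\phi_\g(Y)\cdot(X\cdot Z)$ for $X=T(u),Y=T(v),Z=T(w)$, and $\phi_\g$ is an algebra homomorphism for this product because $\phi_\g(T(u)\cdot T(v))=\phi_\g(T(u\cdot v))=T(A(u\cdot v))=T(A(u)\cdot A(v))=\phi_\g(T(u))\cdot\phi_\g(T(v))$.

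The one genuinely delicate point — and the place I would be most careful — is the well-definedness of the product $T(u)\cdot T(v)=T(u\cdot v)$ on $T(V)$; everything else is a transport-of-structure computation that is forced once $T$ is known to be a surjective hom-Lie algebra homomorphism onto $T(V)$ intertwining the left-symmetric products. Since the corollary is flagged as ``obvious,'' I would phrase this verification briefly, noting that $T(u\cdot v)$ is recovered from $[T(u),T(v)]_\g$ and $T(v\cdot u)$ via axiom (ii), so it depends only on $T(u)$ and $T(v)$.
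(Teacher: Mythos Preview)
Your argument is correct and supplies what the paper omits entirely: the paper simply declares the corollary ``obvious'' and gives no proof. The only delicate point is the well-definedness of the product on $T(V)$, and your trick works once phrased one variable at a time to avoid the apparent circularity in ``both of which depend only on $T(u),T(v)$'': dependence on the first slot is immediate since $u\cdot v=\rho(T(u))v$, while for the second slot one fixes $u$ and writes $T(u\cdot v)=[T(u),T(v)]_\g+T(v\cdot u)$, where the bracket visibly depends on $v$ only through $T(v)$ and so does $T(v\cdot u)$ because $v\cdot u=\rho(T(v))u$.
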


When we take the adjoint representation, we have the following conclusion.

\begin{cor} {\rm (\cite{MY}[Theorem 4.1])} Let $(\frak g,[\cdot,\cdot]_\g,\phi_\frak g)$ be a hom-Lie algebra and $R$ be a multiplicative Rota-Baxter operator (of weight zero).
 Then $(\frak g,\cdot, \phi_\frak g)$ is a hom-left-symmetric algebra, where the multiplication $\cdot$ is defined by
$x\cdot y=[R(x), y]_\g.$
\end{cor}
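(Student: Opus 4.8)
The plan is to derive this as a direct specialization of Proposition~\ref{prop:lie-lsa}. Recall that a multiplicative Rota-Baxter operator $R$ of weight zero on $\frak g$ is, by Example~\ref{ex:2-cocycle}'s preceding discussion, precisely an $\mathcal O$-operator associated to the adjoint representation $(\g,\phi_\g,\ad)$. So the data $(V,A,\rho) = (\g,\phi_\g,\ad)$ and $T = R$ satisfy hypotheses (i) and (ii) of the definition of an $\mathcal O$-operator: condition (i) reads $R\circ\phi_\g = \phi_\g\circ R$, which is built into the definition of a multiplicative Rota-Baxter operator, and condition (ii) reads $[R(x),R(y)]_\g = R(\ad_{R(x)}y - \ad_{R(y)}x) = R([R(x),y]_\g - [R(y),x]_\g)$, which is the Rota-Baxter identity itself.

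With this identification in hand, I would simply invoke Proposition~\ref{prop:lie-lsa}. It produces a hom-left-symmetric algebra structure $(V,\cdot,A)$ on $V = \g$ with multiplication $u\cdot v = \rho(T(u))v$. Substituting $\rho = \ad$, $T = R$, and $A = \phi_\g$, this multiplication becomes $x\cdot y = \ad_{R(x)}y = [R(x),y]_\g$, and the homomorphism $A$ is exactly $\phi_\g$. Thus $(\g,\cdot,\phi_\g)$ with $x\cdot y = [R(x),y]_\g$ is a hom-left-symmetric algebra, which is the claim.

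There is essentially no obstacle here; the only thing to be careful about is that the representation being used is genuinely the \emph{adjoint} representation and not its coadjoint twist, so that the map $T$ lands where it should and no admissibility hypothesis is needed — the adjoint representation $(\g,\phi_\g,\ad)$ is always a representation by the remark following Definition~\ref{defi:representation}. One might optionally remark, as the paper's earlier corollary does, that $R$ is then automatically a homomorphism from the commutator hom-Lie algebra of $(\g,\cdot,\phi_\g)$ to $(\g,[\cdot,\cdot]_\g,\phi_\g)$, but that is not required for the statement. Hence the proof is a one-line appeal to Proposition~\ref{prop:lie-lsa} after unwinding the definitions.
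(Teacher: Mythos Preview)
Your proposal is correct and follows exactly the approach the paper intends: the corollary is stated immediately after Proposition~\ref{prop:lie-lsa} with the prefatory remark ``When we take the adjoint representation, we have the following conclusion,'' and no further proof is given. Your unwinding of the definitions (identifying a multiplicative Rota-Baxter operator with an $\mathcal O$-operator for $(\g,\phi_\g,\ad)$ and then reading off $x\cdot y=\ad_{R(x)}y=[R(x),y]_\g$) is precisely that specialization.
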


\begin{cor}\label{cor:construction}
Let $(\frak g,[\cdot,\cdot]_\g,\phi_\frak g)$ be a hom-Lie algebra. Then there is a hom-left-symmetric algebra structure $(\frak g, \cdot, \phi_\frak g)$ on $\frak g$ such that the commutator hom-Lie algebra
structure is $(\frak g,[\cdot,\cdot]_\g,\phi_\frak g)$ if and only if there exists an invertible $\mathcal O$-operator $T$ associated to a representation $( V,A,\rho)$.
\end{cor}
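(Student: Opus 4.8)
The statement is a characterization, so I must prove both directions, and the key device connecting the two sides of the biconditional is Proposition~\ref{prop:lie-lsa} together with Proposition~\ref{prop:left-Lie}. For the ``if'' direction, suppose $T:V\to\g$ is an invertible $\mathcal O$-operator associated to $(V,A,\rho)$. By Proposition~\ref{prop:lie-lsa} we get a hom-left-symmetric algebra $(V,\cdot,A)$ with $u\cdot v=\rho(T(u))v$, whose commutator hom-Lie algebra is $(\g(V),[\cdot,\cdot]_V,A)$. Since $T$ is an invertible homomorphism from $(\g(V),[\cdot,\cdot]_V,A)$ to $(\g,[\cdot,\cdot]_\g,\phi_\g)$ (by the Corollary following Proposition~\ref{prop:lie-lsa}, $T$ intertwines the brackets and $T\circ A=\phi_\g\circ T$), it is an isomorphism of hom-Lie algebras. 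The plan is then to transport the hom-left-symmetric structure along $T$: define $x*y\triangleq T\big((T^{-1}x)\cdot(T^{-1}y)\big)$ on $\g$. One checks directly that $\phi_\g$ is an algebra homomorphism for $*$ (because $T\circ A=\phi_\g\circ T$, equivalently $T^{-1}\circ\phi_\g=A\circ T^{-1}$), that the hom-left-symmetry identity for $*$ is the image under $T$ of that for $\cdot$, and that the commutator $x*y-y*x=T([T^{-1}x,T^{-1}y]_V)=[x,y]_\g$. This gives the desired hom-left-symmetric structure $(\g,*,\phi_\g)$.

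For the ``only if'' direction, suppose $(\g,\cdot,\phi_\g)$ is a hom-left-symmetric algebra whose commutator hom-Lie algebra is exactly $(\g,[\cdot,\cdot]_\g,\phi_\g)$. By Proposition~\ref{prop:left-Lie}(ii), $L:\g\to\gl(\g)$, $L(x)(y)=x\cdot y$, makes $(\g,\phi_\g,L)$ a representation of $(\g,[\cdot,\cdot]_\g,\phi_\g)$. The plan is to take $V=\g$, $A=\phi_\g$, $\rho=L$, and $T=\Id_\g$. Then $T$ is trivially invertible; condition (i) of the $\mathcal O$-operator definition, $T\circ A=\phi_\g\circ T$, reads $\phi_\g=\phi_\g$; and condition (ii) reads $[x,y]_\g=\rho(x)y-\rho(y)x=x\cdot y-y\cdot x$, which holds precisely because the commutator hom-Lie algebra of $(\g,\cdot,\phi_\g)$ is assumed to be $(\g,[\cdot,\cdot]_\g,\phi_\g)$. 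Hence $\Id_\g$ is an invertible $\mathcal O$-operator associated to $(\g,\phi_\g,L)$.

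I do not expect a serious obstacle here; the result is essentially a repackaging of Propositions~\ref{prop:left-Lie} and~\ref{prop:lie-lsa}. The only point requiring a little care is the ``if'' direction's transport argument: one must verify that $(\g,*,\phi_\g)$ is genuinely hom-left-symmetric, not merely that $T$ is a bracket isomorphism. This is routine since $T$ is a linear isomorphism intertwining $A$ with $\phi_\g$, so every defining identity pushes forward verbatim; alternatively one can avoid transport entirely by invoking the Corollary after Proposition~\ref{prop:lie-lsa} with $T$ invertible, which already yields a hom-left-symmetric structure on $T(V)=\g$ with multiplication $T(u)\cdot T(v)=T(u\cdot v)$ and twisting map $\phi_\g$, and then observe its commutator is $[\cdot,\cdot]_\g$ because $T$ is a hom-Lie isomorphism. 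Either way the argument is short.
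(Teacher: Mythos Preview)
Your proposal is correct and follows essentially the same approach as the paper: for the ``if'' direction your transported product $x*y=T\big((T^{-1}x)\cdot(T^{-1}y)\big)=T(\rho(x)T^{-1}(y))$ is exactly the paper's formula, and for the ``only if'' direction the paper likewise takes $T=\Id_\g$ associated to $(\g,\phi_\g,L)$. The only cosmetic difference is that the paper phrases the ``if'' direction directly via the induced structure on $T(V)=\g$ rather than as a transport-of-structure argument, which you already note as an alternative.
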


\pf
Let $T$ be an invertible $\mathcal O$-operator associated to a representation $( V,A,\rho)$. By Proposition \ref{prop:lie-lsa}, there exists a hom-left-symmetric
algebra structure on $T(V)=\g$ given by
$$x\cdot y=T(\rho (x) T^{-1}(y)).$$
Moreover, we have
$$x\cdot y-y\cdot x= T(\rho (x) T^{-1}(y)-\rho(y) T^{-1}(x))=[T(T^{-1})(x),T(T^{-1})(y)]_\g=[x,y]_\g.$$
Conversely, the identity map ${\rm id}:\frak g\rightarrow \frak g$ is an  invertible $\mathcal O$-operator associated to the representation $(\g,\phi_\g,L)$ given in Proposition~\ref{prop:left-Lie}.
\qed

\begin{cor}
Let $(\frak g,[\cdot,\cdot]_\g,\phi_\frak g)$ be a regular  admissible  hom-Lie algebra. Suppose that $B\in\wedge^2\g^*$ is a nondegenerate
skew-symmetric bilinear form satisfying $B(\phi_\g(x),y)=B(x,\phi_\g(y))$ and Eq.~\eqref{eq:2-cocycle}. Then there is a hom-left-symmetric algebra structure $(\frak g, \cdot, \phi_\frak g)$ on $\frak g$ given by
\begin{equation}
B(x\cdot y, z)=-B({\phi_\frak g}^{-1}(y), [x,\phi_\frak g (z)]_\g),
\end{equation}                                                                                                                                                                   such that the commutator hom-Lie algebra
 is $(\frak g,[\cdot,\cdot]_\g,\phi_\frak g)$.
\end{cor}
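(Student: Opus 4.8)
The plan is to read the statement off the $\mathcal{O}$-operator machinery of this section, using as $\mathcal{O}$-operator the $r$-matrix dual to $B$. Since $B$ is nondegenerate and skew-symmetric it determines $r\in\wedge^2\g$ by taking $r^\sharp:\g^*\to\g$ to be the inverse of $B^\flat:\g\to\g^*$, where $\langle B^\flat(x),y\rangle=B(x,y)$; thus $B(x,y)=\langle(r^\sharp)^{-1}(x),y\rangle$ with $r^\sharp$ invertible, which is exactly the set-up preceding Proposition~\ref{pro:2-cocycle}. A one-line dual computation shows that the hypothesis $B(\phi_\g(x),y)=B(x,\phi_\g(y))$ is equivalent to $(r^\sharp)^{-1}\circ\phi_\g=\phi_\g^*\circ(r^\sharp)^{-1}$, i.e. to $\phi_\g\circ r^\sharp=r^\sharp\circ\phi_\g^*$; this is the compatibility of $r$ with $\phi_\g$ needed in Proposition~\ref{pro:2-cocycle} and Example~\ref{ex:2-cocycle}, and (compare the Remark after Proposition~\ref{pro:2-cocycle}) it is exactly what makes Eq.~\eqref{eq:2-cocycle} the honest $2$-cocycle condition for $B$.

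Granting this, by Proposition~\ref{pro:2-cocycle} the assumption that $B$ obeys Eq.~\eqref{eq:2-cocycle} says precisely that $r$ solves the classical hom-Yang-Baxter equation~\eqref{eqn:rmatrixcon}. By Example~\ref{ex:2-cocycle} it follows that $T:=r^\sharp\circ\phi_\g^*$ is an $\mathcal{O}$-operator associated to the co-adjoint representation $(\g^*,\phi_\g^*,\ad^*)$ (the two $\mathcal{O}$-operator axioms unwind to $\phi_\g\circ r^\sharp=r^\sharp\circ\phi_\g^*$, already known, and to $[r,r]_\g=0$, just established, via the computation behind Lemma~\ref{lem:wellknownformula}). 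Moreover $T$ is invertible, since $r^\sharp$ is and $\phi_\g^*$ is (the latter because $\g$ is regular). Corollary~\ref{cor:construction} now applies and produces a hom-left-symmetric algebra structure $(\g,\cdot,\phi_\g)$ on $\g$ whose commutator hom-Lie algebra is exactly $(\g,[\cdot,\cdot]_\g,\phi_\g)$, with $x\cdot y=T\big(\ad^*_x(T^{-1}(y))\big)$.

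It then remains only to match this product with the stated formula. Put $\eta:=T^{-1}(y)$, so $y=r^\sharp(\phi_\g^*(\eta))$, hence $\phi_\g^*(\eta)=(r^\sharp)^{-1}(y)=B^\flat(y)$ and therefore $\langle\eta,w\rangle=B(y,\phi_\g^{-1}(w))$ for all $w\in\g$. Using $x\cdot y=r^\sharp(\phi_\g^*(\ad^*_x\eta))$ together with the definitions of $r^\sharp$, $\phi_\g^*$ and $\ad^*$,
\[
B(x\cdot y,z)=\langle(r^\sharp)^{-1}(x\cdot y),z\rangle=\langle\phi_\g^*(\ad^*_x\eta),z\rangle=\langle\ad^*_x\eta,\phi_\g(z)\rangle=-\langle\eta,[x,\phi_\g(z)]_\g\rangle=-B\big(y,\phi_\g^{-1}([x,\phi_\g(z)]_\g)\big),
\]
and the identity $B(y,\phi_\g^{-1}(v))=B(\phi_\g^{-1}(y),v)$ — immediate from $B(\phi_\g(\cdot),\cdot)=B(\cdot,\phi_\g(\cdot))$ — rewrites the last term as $-B(\phi_\g^{-1}(y),[x,\phi_\g(z)]_\g)$, as claimed. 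The hom-left-symmetry and commutator identities cost nothing extra, being delivered wholesale by Corollary~\ref{cor:construction}; the only real care is the $\phi_\g$/$\phi_\g^*$ bookkeeping — the equivalence in the first paragraph, the unwinding of the $\mathcal{O}$-operator axioms, and the displayed chain — so that is where I expect the (mild) friction. An alternative that avoids $r$ altogether is to verify directly that the stated formula well-defines a product, that $\phi_\g$ is multiplicative for it, and that it is hom-left-symmetric with commutator $[\cdot,\cdot]_\g$; there ``regular'' and ``admissible'' enter through Eq.~\eqref{eqn:coadjointrepcon1}, which gives $[(\Id-\phi_\g^2)(x),w]_\g=0$ for all $w\in\g$.
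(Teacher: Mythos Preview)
Your proposal is correct and follows essentially the same route as the paper: define $r$ from $B$ via $\langle(r^\sharp)^{-1}(x),y\rangle=B(x,y)$, deduce $r^\sharp\circ\phi_\g^*=\phi_\g\circ r^\sharp$ from the $\phi_\g$-symmetry of $B$, conclude that $T=r^\sharp\circ\phi_\g^*=\phi_\g\circ r^\sharp$ is an invertible $\mathcal{O}$-operator for $(\g^*,\phi_\g^*,\ad^*)$, and then invoke Corollary~\ref{cor:construction} before unwinding $B(x\cdot y,z)$. The only cosmetic difference is that the paper phrases step three as ``by the proof of Proposition~\ref{pro:2-cocycle}'' rather than via Example~\ref{ex:2-cocycle}; your explicit appeal to the computation behind Lemma~\ref{lem:wellknownformula} (rather than to Eq.~\eqref{eqn:r}, which does not literally follow from $\phi_\g\circ r^\sharp=r^\sharp\circ\phi_\g^*$) is in fact the cleaner way to justify that step.
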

\pf
Let $r\in \wedge^2\frak g$ be determined by $\langle {r^\sharp}^{-1}(x), y\rangle=B(x,y)$. By $B(\phi_\g(x),y)=B(x,\phi_\g(y))$, we have $r^\sharp\circ \phi_\frak g^*=\phi_\frak g\circ r^\sharp$. Therefore by the proof of Proposition \ref{pro:2-cocycle},
$\phi_\frak g\circ r^\sharp$ is an invertible $\mathcal O$-operator associated to the co-adjoint representation $(\g^*,\phi_\g^*,{\ad}^*)$. Hence by Corollary \ref{cor:construction},
 $(\frak g, \cdot, \phi_\frak g)$ is a hom-left-symmetric algebra, where the multiplication $\cdot$ is given by
\begin{eqnarray*}
B(x\cdot y, z)&=& B(\phi_\frak g\circ r^\sharp ({\rm ad}^*(x) ({r^\sharp}^{-1}\circ{\phi_\frak g}^{-1})(y)),z)\
=\langle \phi_\frak g^* ({\rm ad}^*(x) ({r^\sharp}^{-1}\circ{\phi_\frak g}^{-1})(y)), z\rangle\\
&=& \langle {\rm ad}^*(x) ({r^\sharp}^{-1}\circ{\phi_\frak g}^{-1})(y), \phi_\frak g (z)\rangle
=\langle {r^\sharp}^{-1}\circ{\phi_\frak g}^{-1}(y), -[x,\phi_\frak g(z)]_\g\rangle\\
&=&-B({\phi_\frak g}^{-1}(y), [x,\phi_\frak g (z)]_\g),
\end{eqnarray*}
such that the commutator hom-Lie algebra is $(\frak g,[\cdot,\cdot]_\g,\phi_\frak g)$.
\qed\vspace{3mm}

At the end of this section, we study a relationship between an
$\mathcal O$-operator associated to an arbitrary admissible representation and the
classical hom-Yang-Baxter equation, which leads to a construction of
solutions of the classical hom-Yang-Baxter equation in terms of $\mathcal O$-operators and
hom-left-symmetric algebras.

Let $(\frak g,[\cdot,\cdot]_\g,\phi_\frak g)$ be a hom-Lie algebra and
$(V,A,\rho)$ be an admissible representation. Then we have the semidirect product hom-Lie algebra $\g\ltimes_{\rho^*}V^*$, and  any linear map $T:V\longrightarrow\g$ can be view as an element $\overline{T}\in\otimes^2(\g\oplus V^*)$ via
$$
\overline{T}(\xi+u,\eta+v)=\langle T(u),\eta\rangle,\quad \forall~\xi+u,\eta+v\in\g^*\oplus V.
$$
Let $\sigma$ be the exchange operator acting on
 the tensor space, then $r\triangleq\overline{T}-\sigma (\overline{T})$ is skew-symmetric.

Let $\{v_1,\cdots, v_n\}$ be a basis of $V$ and
$\{v^1,\cdots,v^n\}$ be its dual basis.

\begin{thm}\label{thm:O-operator}
Let $T:V\rightarrow \frak g$ be a linear map  satisfying $T\circ A=\phi_\frak
g\circ T$. Then
 $r=\overline{T}-\sigma (\overline{T})$ is a  solution of the classical hom-Yang-Baxter equation
in the hom-Lie algebra $\frak g\ltimes_{\rho^*} V^*$ if and only if $T\circ A$ is an $\mathcal O$-operator.
Furthermore, if $A$ is orthogonal, then we have $(\phi_\g\oplus A^*)^{\otimes 2}r=r$.
\end{thm}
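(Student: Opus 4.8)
The plan is to expand $[r,r]$ explicitly inside the semidirect product $\g\ltimes_{\rho^*}V^*$, using the extended bracket on $\Lambda^\bullet(\g\oplus V^*)$ recalled in Section 2, and to show that its vanishing is exactly condition (ii) of an $\mathcal O$-operator for $S:=T\circ A$; condition (i) will turn out to hold automatically. (Admissibility of $(V,A,\rho)$ is used only to guarantee that $\g\ltimes_{\rho^*}V^*$ is a hom-Lie algebra in the first place.)

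First I would make $r$ concrete. Fixing the basis $\{v_1,\dots,v_n\}$ of $V$ with dual basis $\{v^1,\dots,v^n\}$ of $V^*$, the defining relation $\overline T(\xi+u,\eta+v)=\langle T(u),\eta\rangle$ is equivalent to $\overline T=\sum_i v^i\otimes T(v_i)$, where the first tensor slot sits in the summand $V^*\subset\g\oplus V^*$ and the second in $\g\subset\g\oplus V^*$; hence $r=\sum_i v^i\wedge T(v_i)$ (up to the usual normalization of $\wedge$). Next I would note that $T\circ A=\phi_\g\circ T$ forces $S\circ A=T\circ A^2=\phi_\g\circ(T\circ A)=\phi_\g\circ S$, i.e.\ condition (i) of an $\mathcal O$-operator holds for $S=T\circ A$ for free, so only condition (ii) carries content.

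The main step is the computation of $[r,r]=\sum_{i,j}\bigl[\,v^i\wedge T(v_i)\,,\,v^j\wedge T(v_j)\,\bigr]$. I would use the $m=n=2$ instance of the extended-bracket formula together with the structure of the semidirect product: $V^*$ is an abelian ideal, so $[v^i,v^j]=0$; $[v^i,T(v_j)]=-\rho^*(T(v_j))v^i\in V^*$ and $[T(v_i),v^j]=\rho^*(T(v_i))v^j\in V^*$; $[T(v_i),T(v_j)]=[T(v_i),T(v_j)]_\g\in\g$; and the twist acts as $\phi_\g$ on $\g$ and as $A^*$ on $V^*$. Every surviving summand then lands in $\g\wedge\wedge^2 V^*$. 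I would then simplify: the two "mixed" groups of terms coincide after a relabeling of the summation indices and one wedge transposition; substituting $\phi_\g\circ T=T\circ A=S$, using the reindexing identity $\sum_k\bigl(\rho^*(x)v^k\bigr)\wedge S(v_k)=-\sum_m v^m\wedge S(\rho(x)v_m)$ (which is just $\rho^*(x)v^k=-\sum_m\langle v^k,\rho(x)v_m\rangle v^m$ plus linearity of $S$), and splitting a factor $2$ using antisymmetry in $(m,p)$, one arrives at
\[
[r,r]=\sum_{m,p} v^m\wedge v^p\wedge\Bigl([S(v_m),S(v_p)]_\g+S\bigl(\rho(S(v_p))v_m-\rho(S(v_m))v_p\bigr)\Bigr).
\]
Since the vectors $v^m\wedge v^p\wedge e_a$ with $m<p$ and $e_a$ ranging over a basis of $\g$ form part of a basis of $\wedge^3(\g\oplus V^*)$, and since the $\g$-valued expression in parentheses is antisymmetric in $(m,p)$, we conclude that $[r,r]=0$ iff $[S(v_m),S(v_p)]_\g=S\bigl(\rho(S(v_m))v_p-\rho(S(v_p))v_m\bigr)$ for all $m,p$, i.e.\ by bilinearity iff $S=T\circ A$ satisfies condition (ii). Together with the automatic condition (i) this is exactly the assertion that $T\circ A$ is an $\mathcal O$-operator.

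For the final assertion, the same basis computation gives $(\phi_\g\oplus A^*)^{\otimes2}\overline T=\sum_i A^*(v^i)\otimes\phi_\g(T(v_i))=\overline{T'}$ with $T'=\phi_\g\circ T\circ A=\phi_\g^2\circ T$ (the last equality again by $T\circ A=\phi_\g\circ T$); orthogonality of $A$ yields $\phi_\g^2\circ T=T$ (equivalently $T\circ A^2=T$), so $(\phi_\g\oplus A^*)^{\otimes2}$ fixes $\overline T$ and hence $r=\overline T-\sigma(\overline T)$. I expect the main obstacle to be the bookkeeping in the $[r,r]$ expansion — tracking, for each of the three surviving groups of terms, which summand ($\g$ or $V^*$) the bracket and each $\phi$-image lies in, the signs coming from permuting the three degree-one factors in each triple wedge, and the two index relabelings needed to collapse everything into the single $\mathcal O$-operator defect; the reduction of $[r,r]=0$ to the coefficient condition and the orthogonality argument are then formal.
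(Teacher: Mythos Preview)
Your proposal is correct and follows essentially the same route as the paper: both write $r=\sum_i v^i\wedge T(v_i)$, expand $[r,r]$ via the extended bracket on $\Lambda^\bullet(\g\ltimes_{\rho^*}V^*)$ using $[v^i,v^j]=0$, $[T(v_i),v^j]=\rho^*(T(v_i))v^j$ and the twist $\phi_\g\oplus A^*$, and collapse everything to the single $\g$-valued coefficient $[S(v_m),S(v_p)]_\g - S\bigl(\rho(S(v_m))v_p-\rho(S(v_p))v_m\bigr)$ with $S=T\circ A$, after which condition~(i) for $S$ is observed to be automatic. For the final assertion your coordinate-free identification $(\phi_\g\oplus A^*)^{\otimes2}\overline T=\overline{T\circ A^2}$ is just a repackaging of the paper's explicit matrix computation $(a^i_ja^k_i)\,v^j\wedge T(v_k)$, and both finish by the same appeal to orthogonality of $A$.
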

\pf
It is obvious that $\overline{T}$ can be expressed by
$\overline{T}= v^i\otimes T(v_i).$
Here the Einstein summation convention is used. Therefore, we can write $r=v^i\wedge T(v_i)$. By direct computations, we have
\begin{eqnarray*}
  [r,r]_{\g\ltimes V^*}&=&[v^i\wedge T(v_i),v^j\wedge T(v_j)]_{\g\ltimes V^*}\\
  &=&-[v^i,T(v_j)]_{\g\ltimes V^*}\wedge\phi_\g\circ T(v_i)\wedge A^* v^j-[T(v_i),v^j]_{\g\ltimes V^*}\wedge A^* v^i\wedge\phi_\g\circ T(v_j)\\
  &&+[T(v_i),T(v_j)]_\g\wedge A^* v^i\wedge A^* v^j\\
  &=&\langle\rho^*(T(v_j))(v^i),v_m\rangle v^m\wedge T\circ A(v_i)\wedge\langle A^*v^j,v_n\rangle v^n\\
  &&-\langle\rho^*(T(v_i))(v^j),v_n\rangle v^n\wedge\langle A^*v^i,v_m\rangle v^m\wedge T\circ A(v_j)\\
  &&+[T(v_i),T(v_j)]_\g\wedge \langle A^* v^i,v_m\rangle v^m\wedge \langle A^* v^j,v_n\rangle v^n\\
  &=&-v^m\wedge \langle v^i,\rho(T(v_j))(v_m)\rangle\langle v^j,A(v_n)\rangle T\circ A(v_i)\wedge v^n\\
  &&+v^n\wedge v^m\wedge \langle v^j,\rho(T(v_i))(v_n)\rangle\langle v^i,A(v_m)\rangle T\circ A(v_j)\\
  &&+\langle  v^i,A(v_m)\rangle\langle v^j,A(v_n)\rangle[T(v_i),T(v_j)]_\g\wedge  v^m\wedge  v^n\\
  &=&v^m\wedge v^n\wedge T\circ A(\rho(T\circ A(v_n))(v_m))+v^n\wedge v^m\wedge T\circ A(\rho(T\circ A(v_m))(v_n))\\
  &&+[T\circ A(v_m),T\circ A(v_n)]_\g\wedge  v^m\wedge  v^n\\
  &=&v^m\wedge v^n\wedge \Big( T\circ A\big(\rho(T\circ A(v_n))(v_m)-\rho(T\circ A(v_m))(v_n)\big)+[T\circ A(v_m),T\circ A(v_n)]_\g\Big).
\end{eqnarray*}
Furthermore, since $T\circ A=\phi_\frak
g\circ T$, it is obvious that $T\circ A$ satisfies $(T\circ A)\circ A=\phi_\g\circ(T\circ A) $. Thus, $r$ is a solution of the classical hom-Yang-Baxter equation
in the hom-Lie algebra $\frak g\ltimes_{\rho^*} V^*$ if and only if $T\circ A$ is an $\mathcal O$-operator.

Assume that $A=(a^i_j)$, then we have
\begin{eqnarray*}
 (\phi_\g\oplus A^*)^{\otimes 2}r&=&A^*v^i\wedge \phi_\g\circ T(v_i)=A^*v^i\wedge T(A v_i)\\
 &=&a^i_jv^j\wedge a^k_iT(v_k)=(a^i_ja^k_i)v^j\wedge T(v_k).
\end{eqnarray*}
Therefore, if $A$ is orthogonal, we have $(\phi_\g\oplus A^*)^{\otimes 2}r=r$. \qed

\emptycomment{Hence we have
\begin{eqnarray*}
[r_{12},r_{13}]&=&\sum_{i,j}\{ [T(v_i),T(v_j)]\otimes
A^*(v_i^*)\otimes A^*(v_j^*)-[T(v_i),v_j^*]\otimes A^*(v_i^*)\otimes
\phi_\frak g T(v_j)\\&&-[v_i^*,T(v_j)]\otimes \phi_\frak g
T(v_i)\otimes A^*(v_j^*)\};\end{eqnarray*}
\begin{eqnarray*}
[r_{12},r_{23}]&=& \sum_{i,j} \{\phi_\frak g T(v_i)\otimes
[v_i^*,T(v_j)]\otimes A^*(v_j^*)-A^*(v_i^*)\otimes [T(v_i),
T(v_j)]\otimes A^*(v_j^*)\\
&&+A^*(v_i^*)\otimes [T(v_i),v_j^*]\otimes \phi_\frak g
T(v_j)\};\end{eqnarray*}
\begin{eqnarray*}
[r_{13}, r_{23}]&=&\sum_{i,j}\{ -\phi_\frak g T(v_i)\otimes
A^*(v_j^*)\otimes [v_i^*,T(v_j)]-A^*(v_i^*)\otimes \phi_\frak g
T(v_j)\otimes [T(v_i),v_j^*]\\&&+A^*(v_i^*)\otimes A^*(v_j^*)\otimes
[T(v_i),T(v_j)]\}.
\end{eqnarray*}
On the other hand, we have
\begin{eqnarray*}
\sum_{i,j} [T(v_i),T(v_j)]\otimes A^*(v_i^*)\otimes A^*(v_j^*) &=&
\sum_{i,j,m,n} [T(v_i),T(v_j)]\otimes \langle A^*(v_i^*), v_m\rangle
v_m^*\otimes \langle A^*(v_j^*), v_n\rangle v_n^*\\
&=& \sum_{i,j,m,n} [T(v_i),T(v_j)]\langle v_i^*,
A(v_m)\rangle\langle v_j^*, A(v_n)\rangle \otimes v_m^*\otimes
v_n^*\\
&=& \sum_{m,n}[T(A(v_m)),T(A(v_n))]\otimes v_m^*\otimes v_n^*;\\
\sum_{i,j} \phi_\frak g T(v_i)\otimes [v_i^*,T(v_j)]\otimes
A^*(v_j^*)&=&\sum_{i,j,m,n} -\phi_\frak g T(v_i)\otimes \langle
\rho^*(T(v_j))v_i^*, v_m\rangle v_m^*\otimes \langle
A^*(v_j^*), v_n\rangle v_n^*\\
&=&\sum_{i,j,m,n} \phi_\frak g T(v_i)\langle
v_i^*,\rho(T(v_j))v_m\rangle  \langle v_j^*, A(v_n)\rangle \otimes
v_m^*\otimes v_n^*\\
&=& \sum_{m,n} TA(\rho (TA(v_n))v_m)\otimes v_m^*\otimes v_n^*\\
&=&\sum_{m,n}T(\rho (TA(v_n))A(v_m))\otimes v_m^*\otimes v_n^*;\\
-\sum_{i,j}\phi_\frak g T(v_i)\otimes A^*(v_j^*)\otimes
[v_i^*,T(v_j)] &=&\sum_{i,j,m,n}\phi_\frak g T(v_i)\otimes \langle
A^*(v_j^*), v_m\rangle v_m^*\otimes \langle \rho^*(T(v_j)) v_i^*,
v_n\rangle v_n^*\\
&=&\sum_{i,j,m,n}-\phi_\frak g T(v_i)\langle A^*(v_j^*),
v_m\rangle\langle v_i^*, \rho(T(v_j))v_n\rangle\otimes v_m^*\otimes
v_n^*\\
&=&-\sum_{m,n} T(\rho (TA(v_m))A(v_n))\otimes v_m^*\otimes v_n^*.
\end{eqnarray*}
Note that here we use the condition that $(\rho^*,V^*)$ is a
representation of $\frak g$. Therefore we have
\begin{eqnarray*}
&&[r_{12},r_{13}]+[r_{12},r_{23}]+[r_{13},r_{23}]\\
&&=\sum_{m,n} \{[T(A(v_m)),T(A(v_n))]+T(\rho (TA(v_n))A(v_m))-T(\rho
(TA(v_m))A(v_n))\}\otimes v_m^*\otimes v_n^*\\
&&-\sum_{m,n} v_m^*\otimes \{[T(A(v_m)),T(A(v_n))]+T(\rho
(TA(v_n))A(v_m))-T(\rho (TA(v_m))A(v_n))\}\otimes v_n^*\\
&&+\sum_{m,n}v_m^*\otimes v_n^*\otimes
\{[T(A(v_m)),T(A(v_n))]+T(\rho (TA(v_n))A(v_m))-T(\rho
(TA(v_m))A(v_n))\}.
\end{eqnarray*}
Therefore $r$ is a skew-symmetric solution of the classical
hom-Yang-Baxter equation in the hom-Lie algebra $\frak
g\ltimes_{\rho^*} V^*$ if and only if $TA$ is an $\mathcal
O$-operator associated to $(\rho, V)$.

\begin{cor} With the conditions as above, in particular, if $T$ is  an  $\mathcal O$-operator associated to $(\rho, V)$, then
 $r=T-\sigma (T)$ is a skew-symmetric solution of the classical hom-Yang-Baxter equation
in the hom-Lie algebra $\frak g\ltimes_{\rho^*} V^*$.
\end{cor}
}

\begin{cor}\label{co:hlsa}
Let $(V, \cdot, \psi)$ be a regular hom-left-symmetric algebra. Suppose that
\begin{itemize}
\item[\rm(i)] $u\cdot \psi(v)=\psi^2(u)\cdot \psi(v)$;
\item[\rm (ii)] $\psi((u\cdot v-v\cdot u)\cdot w)=u\cdot (\psi(v)\cdot w)-v\cdot (\psi(u)\cdot w)$.
\end{itemize}
Then
$
r= v^i\wedge \psi^{-1}(v_i)
$
is a solution of the classical hom-Yang-Baxter equation
in the hom-Lie algebra $\frak g(A)\ltimes_{L^*} A^*$.
\end{cor}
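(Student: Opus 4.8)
The plan is to deduce this from Theorem~\ref{thm:O-operator}, applied to the commutator hom-Lie algebra $\g(V)$ of Proposition~\ref{prop:left-Lie}, to the representation $(V,\psi,L)$, and to the linear map $T=\psi^{-1}\colon V\to\g(V)$, which makes sense because $(V,\cdot,\psi)$ is regular. The one nontrivial point to settle is that $(V,\psi,L)$ is an \emph{admissible} representation, for otherwise $\g(V)\ltimes_{L^*}V^*$ is not even a hom-Lie algebra and the statement is empty; hypotheses (i) and (ii) are precisely what is needed here.

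First I would unwind the admissibility criterion of Lemma~\ref{lem:dual} for the representation $(V,\psi,L)$ of $\g(V)$, i.e.\ with $A=\psi$, $\rho=L$ and $\phi_\g=\psi$. Condition (i) of that lemma becomes $\psi\circ L(\psi(u))=L(u)\circ\psi$; evaluating on $v\in V$ this reads $\psi(\psi(u)\cdot v)=u\cdot\psi(v)$, and since the twist $\psi$ is an algebra homomorphism we have $\psi(\psi(u)\cdot v)=\psi^2(u)\cdot\psi(v)$, so condition (i) of Lemma~\ref{lem:dual} is exactly hypothesis (i). Condition (ii) of Lemma~\ref{lem:dual} becomes $\psi\circ L([u,v]_V)=L(u)\circ L(\psi(v))-L(v)\circ L(\psi(u))$; evaluating on $w\in V$ and using $[u,v]_V=u\cdot v-v\cdot u$, this is exactly hypothesis (ii). Hence $(V,\psi,L)$ is admissible, and $\g(V)\ltimes_{L^*}V^*$ is a genuine hom-Lie algebra.

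Next I would run Theorem~\ref{thm:O-operator} with $T=\psi^{-1}$. One has $T\circ A=\psi^{-1}\circ\psi=\mathrm{id}_V=\psi\circ\psi^{-1}=\phi_{\g(V)}\circ T$, so the standing assumption $T\circ A=\phi_\g\circ T$ of the theorem holds. Moreover $T\circ A=\mathrm{id}_V$ is an $\mathcal O$-operator associated to $(V,\psi,L)$: its first defining identity is $\mathrm{id}_V\circ\psi=\psi\circ\mathrm{id}_V$, and its second reduces to $[u,v]_V=u\cdot v-v\cdot u$, which is precisely the definition of the commutator bracket (this is the observation already used at the end of the proof of Corollary~\ref{cor:construction}). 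Theorem~\ref{thm:O-operator} then yields that $r=v^i\wedge T(v_i)=v^i\wedge\psi^{-1}(v_i)$ is a solution of the classical hom-Yang-Baxter equation in $\g(V)\ltimes_{L^*}V^*$, which is the assertion.

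The whole argument is essentially bookkeeping once the observation of the second paragraph is made; the only place where something must be said rather than merely quoted is the reduction of hypotheses (i)--(ii) to admissibility of $(V,\psi,L)$, and there the single input beyond rewriting is the homomorphism identity $\psi(u\cdot v)=\psi(u)\cdot\psi(v)$. Accordingly I do not anticipate any real obstacle.
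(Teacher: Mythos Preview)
Your proposal is correct and follows essentially the same route as the paper: you verify via Lemma~\ref{lem:dual} that hypotheses (i) and (ii) amount to admissibility of $(V,\psi,L)$, take $T=\psi^{-1}$ so that $T\circ A=\mathrm{id}$ is the $\mathcal O$-operator, and invoke Theorem~\ref{thm:O-operator}. The paper's proof is the same argument, only more terse.
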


\pf
Note that the above conditions (i) and (ii) are exactly the conditions that $(V^*,\psi^*,L^*)$ is a representation
of the commutator hom-Lie algebra $\frak g(A)$ with respect to $\psi^*$. Moreover, the identity map ${\rm id}=\psi^{-1}\circ\psi:V \rightarrow \frak g(V)$ is an  $\mathcal O$-operator associated to the representation $( V,\psi,L)$. Therefore, the conclusion follows from Theorem \ref{thm:O-operator}.
\qed

\end{document}